\newcommand{\N}{\mathbb{N}}
\newcommand{\R}{\mathbb{R}}
\newcommand{\C}{\mathcal{C}}
\newcommand{\cP}{\mathcal{P}}
\newcommand{\where}{\,\vline~}
\DeclareMathOperator{\vol}{vol}
\DeclareMathOperator{\ball}{B}
\DeclareMathOperator{\opt}{opt}
\DeclareMathOperator{\diam}{diam}
\DeclareMathOperator{\diamcost}{cost_{diam}}
\DeclareMathOperator{\rad}{rad}
\DeclareMathOperator{\radcost}{cost_{rad}}
\DeclareMathOperator{\drad}{drad}
\DeclareMathOperator{\dradcost}{cost_{drad}}
\DeclareMathOperator{\conf}{Conf}
\theoremstyle{plain}
\newtheorem{theorem}{Theorem}
\newtheorem{lemma}[theorem]{Lemma}
\newtheorem{corollary}[theorem]{Corollary}
\newtheorem{proposition}[theorem]{Proposition}
\newtheorem{definition}[theorem]{Definition}
\newtheorem{observation}[theorem]{Observation}
\newtheorem{problem}[theorem]{Problem}
\title{Analysis of Agglomerative Clustering\footnote{A preliminary version of this article appeared in \emph{Proceedings of the 28th International Symposium on Theoretical Aspects of Computer Science (STACS '11)}, March 2011, pp. 308--319. This article also appeared in \emph{Algorithmica}. The final publication is available at \texttt{http://link.springer.com/article/10.1007/s00453-012-9717-4}.}}
\author{
Marcel R. Ackermann\footnote{Schloss Dagstuhl -- Leibniz Center for Informatics, 66687 Wadern, Germany, \texttt{mra@dbis.uni-trier.de}, work done while at Department of Computer Science, University of Paderborn, Germany}
\and Johannes Bl\"omer\footnote{Department of Computer Science, University of Paderborn, 33098 Paderborn, Germany, \texttt{\{bloemer,kuntze\}@upb.de}}
\and Daniel Kuntze\footnotemark[3]
\and Christian Sohler\footnote{Department of Computer Science, TU Dortmund, 44221 Dortmund, Germany, \texttt{christian.sohler@tu-dortmund.de}\newline
For all four authors this research was supported by the German Research Foundation (DFG), grants BL 314/6-2 and SO 514/4-2.}
}
\date{}
\begin{document}
\maketitle
\thispagestyle{empty}

\begin{abstract}
The diameter $k$-clustering problem is the problem of partitioning a finite subset of $\R^d$ into $k$ subsets called clusters such that the maximum diameter of the clusters is minimized.
One early clustering algorithm that computes a hierarchy of approximate solutions to this problem (for all values of $k$) is the agglomerative clustering algorithm with the complete linkage strategy.
For decades, this algorithm has been widely used by practitioners.
However, it is not well studied theoretically.
In this paper, we analyze the agglomerative complete linkage clustering algorithm.
Assuming that the dimension $d$ is a constant, we show that for any $k$ the solution computed by this algorithm is an $O(\log k)$-approximation to the diameter $k$-clustering problem.
Our analysis does not only hold for the Euclidean distance but for any metric that is based on a norm.
Furthermore, we analyze the closely related $k$-center and discrete $k$-center problem.
For the corresponding agglomerative algorithms, we deduce an approximation factor of $O(\log k)$ as well.

\vspace{1ex}\noindent\emph{Keywords:} agglomerative clustering, hierarchical clustering, complete linkage, approximation guarantees
\end{abstract}

\section{Introduction}
Clustering is the process of partitioning a set of objects into subsets (called clusters) such that each subset contains similar objects and objects in different subsets are dissimilar.
There are many applications for clustering, including data compression \cite{pereira}, analysis of gene expression data \cite{eisen}, anomaly detection \cite{lee}, and structuring results of search engines \cite{broder}.
For every application, a proper objective function is used to measure the quality of a clustering.
One particular objective function is the largest diameter of the clusters.
If the desired number of clusters $k$ is given, we call the problem of minimizing this objective function the \emph{diameter $k$-clustering problem}.

One of the earliest and most widely used clustering strategies is agglomerative clustering.
The history of agglomerative clustering goes back at least to the 1950s (see for example \cite{florek,mcquitty}).
Later, biological taxonomy became one of the driving forces of cluster analysis.
In~\cite{sneath} the authors, who where the first biologists using computers to classify organisms, discuss several agglomerative clustering methods.

Agglomerative clustering is a bottom-up clustering process.
At the beginning, every input object forms its own cluster.
In each subsequent step, the two 'closest' clusters will be merged until only one cluster remains.
This clustering process creates a hierarchy of clusters, such that for any two different clusters $A$ and $B$ from possibly different levels of the hierarchy we either have $A \cap B = \emptyset$, $A\subset B$, or $B \subset A$.
Such a hierarchy is useful in many applications, for example, when one is interested in hereditary properties of the clusters (as in some bioinformatics applications) or if the exact number of clusters is a priori unknown.

In order to define the agglomerative strategy properly, we have to specify a distance measure between clusters.
Given a distance function between data objects, the following distance measures between clusters are frequently used.
In the \emph{single linkage strategy}, the distance between two clusters is defined as the distance between their closest pair of data objects.
Using this strategy is equivalent to computing a minimum spanning tree of the graph induced by the distance function using Kruskal's algorithm.
In case of the \emph{complete linkage strategy}, the distance between two clusters is defined as the distance between their farthest pair of data objects.
In the \emph{average linkage strategy} the distance is defined as the average distance between data objects from the two clusters.

\subsection{Related Work}
In this paper, we study the agglomerative clustering algorithm using the complete linkage strategy to find a hierarchical clustering of $n$ points from $\R^d$.
The running time is obviously polynomial in the description length of the input.
Therefore, our only goal in this paper is to give an approximation guarantee for the diameter $k$-clustering problem.
The approximation guarantee is given by a factor $\alpha$ such that the cost of the $k$-clustering computed by the algorithm is at most $\alpha$ times the cost of an optimal $k$-clustering.
Although the agglomerative complete linkage clustering algorithm is widely used, there are only few theoretical results considering the quality of the clustering computed by this algorithm.
It is known that there exists a certain metric distance function such that this algorithm computes a $k$-clustering with an approximation factor of $\Omega(\log k)$~\cite{dasgupta}.
However, prior to the analysis we present in this paper, no non-trivial upper bound for the approximation guarantee of the classical complete linkage agglomerative clustering algorithm was known, and deriving such a bound has been discussed as one of the open problems in~\cite{dasgupta}.

The diameter $k$-clustering problem is closely related to the \emph{$k$-center problem}.
In this problem, we are searching for $k$ centers and the objective is to minimize the maximum distance of any input point to the nearest center.
When the centers are restricted to come from the set of the input points, the problem is called the \emph{discrete $k$-center problem}.
It is known that for metric distance functions the costs of optimal solutions to all three problems are within a factor of $2$ from each other.

For the Euclidean case, we know that for fixed $k$, i.e. when we are not interested in a hierarchy of clusterings, the diameter $k$-clustering problem and the $k$-center problem are $\mathcal{NP}$-hard.
In fact, it is already $\mathcal{NP}$-hard to approximate both problems with an approximation factor below $1.96$ and $1.82$ respectively \cite{feder}.

Furthermore, there exist provably good approximation algorithms in this case.
For the discrete $k$-center problem, a simple $2$-approximation algorithm is known for metric spaces \cite{gonzalez}, which immediately yields a $4$-approximation algorithm for the diameter $k$-clustering problem.
For the $k$-center problem, a variety of results is known. For example, for the Euclidean metric in~\cite{BaHaIn02} a $(1+\epsilon)$-approximation algorithm with running time $2^{O(\nicefrac{k\log k}{\epsilon^2})}dn$ is shown.
This implies a $(2+\epsilon)$-approximation algorithm with the same running time for the diameter $k$-clustering problem. 

Also, for metric spaces a hierarchical clustering strategy with an approximation guarantee of $8$ for the discrete $k$-center problem is known~\cite{dasgupta}.
This implies an algorithm with an approximation guarantee of $16$ for the diameter $k$-clustering problem.

This paper as well as all of the above mentioned work is about static clustering, i.e. in the problem definition we are given the whole set of input points at once.
An alternative model of the input data is to consider sequences of points that are given one after another.
In \cite{charikar}, the authors discuss clustering in a so-called \emph{incremental clustering} model.
They give an algorithm with constant approximation factor that maintains a hierarchical clustering while new points are added to the input set.
Furthermore, they show a lower bound of $\Omega(\log k)$ for the agglomerative complete linkage algorithm and the diameter $k$-clustering problem.
However, since their model differs from ours, their results have no bearing on our results.

\subsection{Our contribution}
In this paper, we study the agglomerative complete linkage clustering algorithm and related algorithms for input sets $X\subset\R^d$.
To measure the distance between data points, we use a metric that is based on a norm, e.g., the Euclidean metric.
We prove that the agglomerative solution to the diameter $k$-clustering problem is an $O(\log k)$-approximation.
Here, the $O$-notation hides a constant that is doubly exponential in $d$.
This approximation guarantee holds for every level of the hierarchy computed by the algorithm.
That is, we compare each computed $k$-clustering with an optimal solution for that particular value of $k$.
These optimal $k$-clusterings do not necessarily form a hierarchy.
In fact, there are simple examples where optimal solutions have no hierarchical structure.

Our analysis also yields that if we allow $2k$ instead of $k$ clusters and compare the cost of the computed $2k$-clustering to an optimal solution with $k$ clusters, the approximation factor is independent of $k$ and depends only on $d$.
Moreover, the techniques of our analysis can be applied to prove stronger results for the $k$-center problem and the discrete $k$-center problem.
For the $k$-center problem, we derive an approximation guarantee that is logarithmic in $k$ and only singly exponential in $d$.
For the discrete $k$-center problem, we derive an approximation guarantee that is logarithmic in $k$ and the dependence on $d$ is only linear and additive.

Furthermore, we give almost matching upper and lower bounds for the one-dimensional case.
These bounds are independent of $k$.
For $d\geq2$ and the metric based on the $\ell_\infty$-norm, we provide a lower bound that exceeds the upper bound for $d=1$.
For $d\geq3$, we give a lower bound for the Euclidean case which is larger than the lower bound for $d=1$.
Finally, we construct instances providing lower bounds for any metric based on an $\ell_p$-norm with $1\leq p\leq\infty$.
However, the construction of these instances needs the dimension to depend on $k$.

\section{Preliminaries and problem definitions}
Throughout this paper, we consider input sets that are finite subsets of $\R^d$.
Our results hold for arbitrary metrics that are based on a norm, i.e., the distance \(\|x-y\|\) between two points \(x,y\in\R^d\) is measured using an arbitrary norm $\|\cdot\|$.
Readers who are not familiar with arbitrary metrics or are only interested in the Euclidean case, may assume that $\|\cdot\|_2$ is used, i.e. \(\|x-y\|=\sqrt{\sum_{i=1}^d(x_i-y_i)^2}\).
For $r\in\R$ and $y\in\R^d$, we denote the closed $d$-dimensional ball of radius $r$ centered at $y$ by \(\ball_r^d(y):=\left\{x\,|\,\|x-y\|\leq r\right\}\).

Given $k\in\N$ and a finite set \(X\subset\R^d\) with $k\leq|X|$, we say that \(\C_k=\{C_1,\ldots,C_k\}\) is a $k$-clustering of $X$ if the sets \(C_1,\ldots,C_k\) (called clusters) form a partition of $X$ into $k$ non-empty subsets.
We call a collection of $k$-clusterings of the same finite set $X$ but for different values of $k$ hierarchical, if it fulfills the following two properties.
First, for any \(1\leq k\leq|X|\) the collection contains at most one $k$-clustering.
Second, for any two of its clusterings $\C_i,\C_j$ with \(|\C_i|=i<j=|\C_j|\), every cluster in $\C_i$ is the union of one or more clusters from $\C_j$. 
A hierarchical collection of clusterings is called a hierarchical clustering.

We define the diameter of a finite and non-empty set \(C\subset\R^d\) to be \(\diam(C):=\max_{x,y\in C}\|x-y\|\).
Furthermore, we define the diameter cost of a $k$-clustering $\C_k$ as its largest diameter, i.e. \(\diamcost(\C_k):=\max_{C\in \C_k}\diam(C)\).
The radius of $C$ is defined as \(\rad(C):=\min_{y\in\R^d}\max_{x\in C}\|x-y\|\) and the radius cost of a $k$-clustering $\C_k$ is defined as its largest radius, i.e. \(\radcost(\C_k):=\max_{C\in \C_k}\rad(C)\).
Finally, we define the discrete radius of $C$ to be \(\drad(C):=\min_{y\in C}\max_{x\in C}\|x-y\|\) and the discrete radius cost of a $k$-clustering $\C_k$ is defined as its largest discrete radius, i.e. \(\dradcost(\C_k):=\max_{C\in \C_k}\drad(C)\).

\begin{problem}[discrete $k$-center]\label{drad_prob}
Given \(k\in\N\) and a finite set \(X\subset\R^d\) with $|X|\geq k$, find a $k$-clustering $\C_k$ of $X$ with minimal discrete radius cost.
\end{problem}

\begin{problem}[$k$-center]\label{rad_prob}
Given \(k\in\N\) and a finite set \(X\subset\R^d\) with $|X|\geq k$, find a $k$-clustering $\C_k$ of $X$ with minimal radius cost.
\end{problem}

\begin{problem}[diameter $k$-clustering]\label{diam_prob}
Given \(k\in\N\) and a finite set \(X\subset\R^d\) with $|X|\geq k$, find a $k$-clustering $\C_k$ of $X$ with minimal diameter cost.
\end{problem}

For our analysis of agglomerative clustering, we repeatedly use the volume argument stated in Lemma~\ref{volume_lem}.
This argument provides an upper bound on the minimum distance between two points from a finite set of points lying inside the union of finitely many balls.
For the application of this argument, the following definition is crucial.
\begin{definition}
Let $k\in\N$ and $r\in\R$. A set \(X\subset\R^d\) is called \emph{$(k,r)$-coverable} if there exist $y_1,\ldots,y_k\in\R^d$ with \(X\subseteq\bigcup_{i=1}^k\ball_r^d(y_i)\).
\end{definition}

\begin{lemma}\label{volume_lem}
Let $k\in\N$, $r\in\R$ and \(P\subset\R^d\) be finite and $(k,r)$-coverable with $|P|>k$.
Then, there exist distinct \(p,q\in P\) such that
\(\|p-q\|\leq4r\sqrt[d]{\frac{k}{|P|}}\).
\end{lemma}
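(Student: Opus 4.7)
The plan is to run a volume/packing argument together with a short pigeonhole fallback. Let $\delta:=4r\sqrt[d]{k/|P|}$, and let me handle two regimes of $|P|/k$ separately. The whole scheme hinges on the fact that, because the metric comes from a norm, every closed ball $\ball^d_\rho(x)$ is a translate of a scaled copy of the unit ball, so $\vol(\ball^d_\rho(x)) = \rho^d\cdot\vol(\ball^d_1(0))$; this is what lets the volume argument go through for arbitrary norms, not just the Euclidean one.

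Easy regime: suppose $|P|\le 2^d k$. Then $\sqrt[d]{k/|P|}\ge 1/2$, so $\delta\ge 2r$. Since $P$ is covered by $k$ balls of radius $r$ and $|P|>k$, ordinary pigeonhole puts two distinct points $p,q\in P$ inside a common $\ball^d_r(y_i)$, and then $\|p-q\|\le 2r\le \delta$, as required.

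Main regime: suppose $|P|>2^d k$, and argue by contradiction. Assume that every two distinct points of $P$ are at distance strictly greater than $\delta$. Then the open balls of radius $\delta/2$ centered at the points of $P$ are pairwise disjoint. Every $p\in P$ lies in some $\ball^d_r(y_i)$, so $\ball^d_{\delta/2}(p)\subseteq \ball^d_{r+\delta/2}(y_i)$. Comparing total volumes and using the scaling identity above yields
\[
|P|\cdot(\delta/2)^d \;\le\; k\cdot(r+\delta/2)^d,
\]
i.e.\ $(|P|/k)^{1/d}\,(\delta/2)\le r+\delta/2$. Writing $\lambda:=(|P|/k)^{1/d}>2$, this rearranges to $\delta\le 2r/(\lambda-1)$. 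On the other hand, by definition $\delta=4r/\lambda$, and $\lambda>2$ gives $4r/\lambda>2r/(\lambda-1)$, a contradiction.

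The only real obstacle is the gap in the volume argument when $|P|$ is only slightly larger than $k$: there, the disjoint balls of radius $\delta/2$ are too small to force any useful inequality, and the factor of $4$ (rather than $2$) in the statement is precisely what allows the trivial pigeonhole step to cover that regime. Splitting at the threshold $|P|\lessgtr 2^d k$ is the cleanest way to reconcile the two estimates, and produces the claimed bound in both cases.
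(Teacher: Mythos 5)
Your proof is correct and uses essentially the same volume-packing argument as the paper: disjoint balls of radius $\delta/2$ around the points of $P$ packed into the $k$ enlarged covering balls, with the norm-scaling identity $\vol(\ball_\rho^d(x))=\rho^d\vol(\ball_1^d(0))$ making it work for arbitrary norms. The only difference is organizational — you split into the regimes $|P|\le 2^dk$ (pure pigeonhole) and $|P|>2^dk$ (pure volume comparison), whereas the paper folds the pigeonhole observation $\delta\le 2r$ into a single contradiction argument to bound $r+\nicefrac{u}{2}<2r$; both routes yield the stated factor of $4$.
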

\begin{proof}
Let $Z\subset\R^d$ with \(|Z|=k\) and \(P\subset\bigcup_{z\in Z}\ball_r^d(z)\).
We define $\delta$ to be the minimum distance between two points of $P$, i.e.
\(\delta:=\min_{\substack{p,q\in P\\p\neq q}}\|p-q\|\).
We assume for contradiction that \(u:=4r\sqrt[d]{\frac{k}{|P|}}<\delta\).
Since $|P|>k$ there exists $z\in Z$ with \(\left|\ball_r^d(z)\cap P\right|\geq2\).
It follows that $\delta\leq2r$ and hence, \(\frac{u}{2}<r\).
Note that for any $y\in\R^d$, $R\in\R$, and any norm $\|\cdot\|$, we have \(\vol\left(\ball_R^d(y)\right)=R^d\cdot V_d\), where $V_d$ is the volume of the $d$-dimensional unit ball $\ball_1^d(0)$ (see~\cite{webster}, Corollary 6.2.15).
Therefore, we deduce
\[\vol\left(\bigcup_{z\in Z}\ball_{r+\nicefrac{u}{2}}^d(z)\right)<\sum_{z\in Z}\vol\left(\ball_{2r}^d(z)\right)\leq k\cdot(2r)^d\cdot V_d.\]
Furthermore, since any $p\in P$ is contained in a ball $\ball_r^d(z)$ for some $z\in Z$, we conclude that any ball $\ball_{\nicefrac{u}{2}}^d(p)$ for $p\in P$ is contained in a ball $\ball_{r+\nicefrac{u}{2}}^d(z)$ for some $z\in Z$ (see Figure~\ref{pre_volem_fig}).
Thus, we deduce
\begin{equation}\label{volume_lem_eq}
\vol\left(\bigcup_{p\in P}\ball_{\nicefrac{u}{2}}^d(p)\right)<k\cdot(2r)^d\cdot V_d.
\end{equation}
However, since $u<\delta$, for any distinct \(p,q\in P\), we have \(\ball_{\nicefrac{u}{2}}^d(p)\cap\ball_{\nicefrac{u}{2}}^d(q)=\varnothing\).
Therefore, the total volume of the $|P|$ balls \(\ball_{\nicefrac{u}{2}}^d(p)\) is given by
\[\vol\left(\bigcup_{p\in P}\ball_{\nicefrac{u}{2}}^d(p)\right)=|P|\left(\frac{u}{2}\right)^dV_d=k\cdot(2r)^d\cdot V_d,\]
using the definition of $u$.
This contradicts (\ref{volume_lem_eq}).
We obtain $\delta\leq u$, which proves the lemma.
\end{proof}

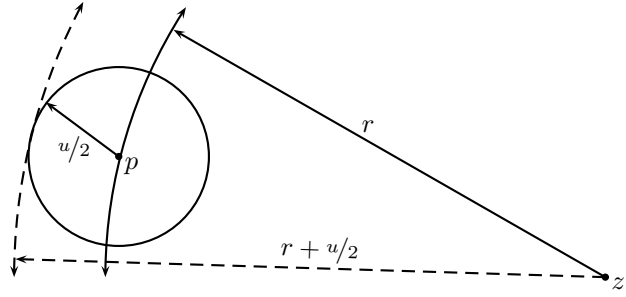
\begin{figure}
\begin{center}
	\psset{xunit=0.08cm,yunit=0.08cm,runit=0.08cm}
	\begin{pspicture}(0,0)(100,45)

 		\pscircle*(100,0){0.6}
 		\pscircle*(20,20){0.6}
 		\pscircle(20,20){15}
 		\put(101,-2){$z$}
 		\put(21,18){$p$}

		\psarc[linestyle=solid]{<->}(100,0){82.2}{147}{180}
		\psarc[linestyle=dashed]{<->}(100,0){97.2}{152}{180}

		\psline{->}(100,0)(29,41)
		\psline{->}(20,20)(8,29)
		\psline[linestyle=dashed]{->}(100,0)(3,3)

 		\rput(61,25){$r$}
 		\rput{-2}(53,4.5){$r+\nicefrac{u}{2}$}
 		\put(10,20){$\nicefrac{u}{2}$}

	\end{pspicture}
\end{center}
\caption{The volume argument.}
\label{pre_volem_fig}
\end{figure}

\section{Analysis}\label{analysis_sec}
In this section we analyze the agglomerative clustering algorithms for the (discrete) $k$-center problem and the diameter $k$-clustering problem.
As mentioned in the introduction, an agglomerative algorithm takes a bottom-up approach.
It starts with the $|X|$-clustering that contains one cluster for each input point and then successively merges two of the remaining clusters such that the cost of the resulting clustering is minimized.
That is, in each merge step the agglomerative algorithms for Problem~\ref{drad_prob}, Problem~\ref{rad_prob} and Problem~\ref{diam_prob} minimize the discrete radius, the radius and the diameter of the resulting cluster, respectively.

Our main objective is the agglomerative complete linkage clustering algorithm, which minimizes the diameter in every step.
Nevertheless, we start with the analysis of the agglomerative algorithm for the discrete $k$-center problem since it is the simplest one of the three.
Then we adapt our analysis to the $k$-center problem and finally to the diameter $k$-clustering problem.
In each case we need to introduce further techniques to deal with the increased complexity of the given problem.

We show that all three algorithms compute an $O(\log k)$-approximation for the particular corresponding clustering problem.
However, the dependency on the dimension which is hidden in the $O$-notation ranges from only linear and additive in case of the discrete $k$-center problem to a factor that is doubly exponential in case of the diameter $k$-clustering problem.

As mentioned in the introduction, the cost of optimal solutions to the three problems are within a factor of $2$ from each other.
That is, each algorithm computes an $O(\log k)$-approximation for all three problems.
However, we will analyze the proper agglomerative algorithm for each problem.

\subsection{Discrete $k$-center clustering}\label{drad_sec}
The agglomerative algorithm for the discrete $k$-center problem is stated as Algorithm~\ref{drad_algo}.
\begin{table}[t]
\centering
\begin{minipage}{.7\textwidth}
\footnotesize
\hrule
\textsc{AgglomerativeDiscreteRadius}$(X)$:\\
\begin{tabular}{rl}
$X$ & finite set of input points from $\R^d$\\
\end{tabular}
\hrule
\begin{algorithmic}[1]
\State $\C_{|X|}:=\left\{\,\{x\}\,|\,x\in X\right\}$
\For{$i=|X|-1,\ldots,1$}
\State find distinct clusters $A,B\in \C_{i+1}$ minimizing \(\drad(A\cup B)\)\label{drad_algo_min}
\State $\C_i:=(\C_{i+1}\setminus\{A,B\})\,\cup\,\{A\cup B\}$
\EndFor
\State\Return $\C_1,\ldots,\C_{|X|}$
\end{algorithmic}
\hrule
\end{minipage}
\vspace{1em}
\caption{The agglomerative algorithm for the discrete $k$-center problem.}
\label{drad_algo}
\end{table}
Given a finite set $X\subset\R^d$ of input points, the algorithm computes hierarchical $k$-clusterings for all values of $k$ between $1$ and $|X|$.
We denote them by $\C_1,\ldots,\C_{|X|}$.
Throughout this section, \emph{cost} always means discrete radius cost.
$\opt_k$ refers to the cost of an optimal discrete $k$-center clustering of $X\subset\R^d$ where $k\in\N$ with \(k\leq|X|\), i.e. the cost of an optimal solution to Problem~\ref{drad_prob}.

The following theorem states our result for the discrete $k$-center problem.
\begin{theorem}\label{drad_result}
Let \(X\subset\R^d\) be a finite set of points.
Then, for all $k\in\N$ with $k\leq|X|$, the partition $\C_k$ of $X$ into $k$ clusters as computed by Algorithm~\ref{drad_algo} satisfies
\[\dradcost(\C_k)<(20d+2\log_2(k)+2)\cdot\opt_k,\]
where $\opt_k$ denotes the cost of an optimal solution to Problem~\ref{drad_prob}.
\end{theorem}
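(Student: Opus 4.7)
The plan is to fix an optimal discrete $k$-center solution and then bound $\tau_i := \dradcost(\C_i)$ inductively as $i$ decreases from $|X|$ to $k$, using Lemma~\ref{volume_lem} as the main geometric tool. Let $r^* := \opt_k$ and fix optimal centers $y_1, \ldots, y_k \in X$ so that $X \subseteq \bigcup_{j=1}^{k} \ball_{r^*}^d(y_j)$. For each cluster $C \in \C_i$, I fix a witness $c_C \in C$ that realizes $\drad(C) \leq \tau_i$, and I write $S_i := \{c_C : C \in \C_i\}$ for the set of all $i$ witnesses, which is a subset of $X$.

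The first step establishes a single-step merge inequality. For $i > k$, $S_i \subseteq X$ is $(k,r^*)$-coverable with $|S_i| > k$, so Lemma~\ref{volume_lem} produces two distinct witnesses $p,q \in S_i$ with $\|p-q\| \leq 4 r^* \sqrt[d]{k/i}$. Merging the two clusters containing $p$ and $q$ and keeping $p$ as the new witness gives a cluster of discrete radius at most $\tau_i + 4 r^* \sqrt[d]{k/i}$ by the triangle inequality. Because Algorithm~\ref{drad_algo} greedily picks the merge that minimizes the discrete radius of the new cluster, I obtain $\tau_{i-1} \leq \tau_i + 4 r^* \sqrt[d]{k/i}$.

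Naively telescoping this inequality yields a bound that depends polynomially on $|X|$, which is too weak for the theorem. Instead, I would argue by contradiction. Assume $\tau_k > R := (20d + 2\log_2(k) + 2)\opt_k$, and let $i^*$ be the largest level with $\tau_{i^*} \leq R$; such an $i^*$ exists and satisfies $i^* > k$ because $\tau_{|X|} = 0 \leq R < \tau_k$. At level $i^*$, the algorithm's best merge already pushes the discrete radius above $R$, so every candidate merge $A \cup B$ satisfies $\drad(A \cup B) > R$. In particular, taking the witness $c_A$ as a tentative center of $A \cup B$ forces $\|c_A - c_B\| > R - \tau_{i^*}$ for every pair of distinct witnesses in $S_{i^*}$, and applying the contrapositive of Lemma~\ref{volume_lem} to $S_{i^*}$ then gives $i^* < k \bigl(4 r^*/(R - \tau_{i^*})\bigr)^d$, a bound that is strong whenever the gap $R - \tau_{i^*}$ is comparable to $R$.

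To handle the case where $\tau_{i^*}$ is close to $R$, I would chain the argument with a decreasing sequence of thresholds $R = R_0 > R_1 > R_2 > \cdots$ (roughly $R_j := R/2^j$) and corresponding levels $i^*_j$, iterating until either a strong gap is obtained or the threshold drops below a fixed multiple of $r^*$. The logarithmic term $2\log_2(k)$ in the target bound should reflect the fact that at most $\log_2 k$ such halvings are needed before the contrapositive bound forces $i^*_j \leq k$, contradicting $i^*_j > k$; the additive $20d$ term reflects the $d$-th root in Lemma~\ref{volume_lem}, which makes the per-step constants grow linearly with the dimension. The main obstacle, and the technical heart of the proof, is the precise quantitative bookkeeping of this chained threshold argument---tracking how the gap $R_j - \tau_{i^*_j}$ evolves across levels and tuning the constants so that the two contributions combine additively into exactly $20d + 2\log_2(k) + 2$.
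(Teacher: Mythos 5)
Your per-merge inequality $\dradcost(\C_{i-1})\leq\dradcost(\C_i)+4\sqrt[d]{k/i}\cdot\opt_k$ is correct, and you are right that telescoping it fails. But the chained-threshold contradiction you propose in its place does not close the gap, and I do not believe it can. The contrapositive of Lemma~\ref{volume_lem} at a crossing level $i^*_j$ gives $i^*_j<k\bigl(4\opt_k/(R_j-\tau_{i^*_j})\bigr)^d$, which is useful only if the \emph{single} merge performed at level $i^*_j$ raises the cost by a constant fraction of $R_j$. It gives no control over the complementary (and generic) failure mode in which the cost creeps up through many small increments: if $\tau_{i-1}-\tau_i$ were close to $4\sqrt[d]{k/i}\cdot\opt_k$ at every level, then at every threshold the gap $R_j-\tau_{i^*_j}$ is at most one increment $4\sqrt[d]{k/i^*_j}\cdot\opt_k$, so your bound degenerates to $i^*_j<i^*_j$ --- vacuous at every threshold --- while the accumulated cost reaches order $\opt_k\cdot k^{1/d}|X|^{1-1/d}$. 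The bookkeeping also does not come out as you hope: the number of halvings available before $R_j$ falls below a fixed multiple of $\opt_k$ is $\log_2(R/\opt_k)=O(\log d+\log\log k)$, not $\log_2 k$, and when the recursion bottoms out you obtain no contradiction, only the already-known fact that the cost is small at some level with more than $k$ clusters; nothing then bounds the growth from that level down to level $k$.

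The idea you are missing is to freeze the witness set over a whole \emph{phase} of merges instead of re-selecting it at every step. The paper partitions the merges into phases taking $\C_m$ to $\C_{\lfloor 3m/4\rfloor}$ (for $m>2k$). Each merge destroys at most two clusters of $\C_m$, so more than $m/2>k$ clusters of $\C_m$ survive the entire phase; applying Lemma~\ref{volume_lem} to the surviving centers \emph{of $\C_m$} (not of the current clustering) exhibits two surviving clusters $C_1,C_2$ with $\drad(C_1\cup C_2)<\dradcost(\C_m)+4\sqrt[d]{2k/m}\cdot\opt_k$, and by the greedy rule this single additive term bounds the cost increase of the \emph{whole} phase (Lemma~\ref{drad_2k_phaselem}). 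Summing the resulting geometric series over phases down to $2k$ clusters yields $\dradcost(\C_{2k})<20d\cdot\opt_k$ (Proposition~\ref{drad_2k_prop}). The term $2\log_2(k)+2$ then comes not from halving a cost threshold but from roughly $\log_2 k$ further phases that each halve the excess $m-k$: in each, more than $k$ of the frozen centers survive, so two of them lie in a common optimal ball, the phase costs an additive $2\opt_k$ (Lemma~\ref{drad_rem_phaselem}), and the two contributions combine to $(20d+2\log_2(k)+2)\cdot\opt_k$.
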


Since any cluster $C$ is contained in a ball of radius $\drad(C)$, we have that $X$ is $(k,\dradcost(\C_k))$-coverable for any $k$-clustering $\C_k$ of $X$.
It follows, that $X$ is $(k,\opt_k)$-coverable.
This fact, as well as the following observation about the greedy strategy of Algorithm~\ref{drad_algo}, will be used frequently in our analysis..

\begin{observation}\label{drad_greedy_obs}
The cost of all computed clusterings is equal to the discrete radius of the cluster created last.
Furthermore, the discrete radius of the union of any two clusters is always an upper bound for the cost of the clustering to be computed next.
\end{observation}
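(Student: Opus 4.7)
The plan is to prove the observation by downward induction on the iteration index $i$. Define $r_i := \drad(A_i \cup B_i)$ to be the discrete radius of the cluster created at step $i$, where $A_i, B_i$ denote the two clusters from $\C_{i+1}$ that the algorithm selects for merging in that step. I will establish simultaneously two statements: (a) $\dradcost(\C_i) = r_i$ for every $i \leq |X|-1$, and (b) the sequence $r_{|X|-1}, r_{|X|-2}, \ldots, r_1$ is non-decreasing in time of creation, i.e.\ $r_i \geq r_{i+1}$ for every $i \leq |X|-2$.

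The base case is $i = |X|-1$, where $\C_{|X|-1}$ consists of one pair $A_i \cup B_i$ together with $|X|-2$ singletons, all of which have zero discrete radius, so $\dradcost(\C_{|X|-1}) = r_{|X|-1}$ is immediate. For the inductive step, assume (a) and (b) hold for all indices larger than $i$. Then the hypothesis (a) at index $i+1$ yields $\drad(C) \leq r_{i+1}$ for every $C \in \C_{i+1}$, and since $\C_i$ consists of these clusters minus $\{A_i, B_i\}$ together with the new cluster $A_i \cup B_i$, we obtain $\dradcost(\C_i) \leq \max(r_i, r_{i+1})$. This collapses to $r_i$ once the monotonicity inequality $r_i \geq r_{i+1}$ of (b) is established for index $i$.

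To prove $r_i \geq r_{i+1}$, I distinguish two cases according to whether the previously created cluster $A_{i+1} \cup B_{i+1}$ is among $\{A_i, B_i\}$. If not, both $A_i$ and $B_i$ already belong to $\C_{i+2}$, so the merge $A_i \cup B_i$ was a candidate in the minimization performed at step $i+1$, and the greedy rule gives $r_i = \drad(A_i \cup B_i) \geq r_{i+1}$ directly. Otherwise, by symmetry let $A_i = A_{i+1} \cup B_{i+1}$ and $B_i \in \C_{i+2} \setminus \{A_{i+1}, B_{i+1}\}$, and let $y \in A_i \cup B_i = A_{i+1} \cup B_{i+1} \cup B_i$ be an optimal discrete-radius center of $A_i \cup B_i$. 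If $y \in A_{i+1} \cup B_{i+1}$, then restricting the maximum defining $r_i$ to the subset $A_{i+1} \cup B_{i+1}$ gives $r_i \geq \drad(A_{i+1} \cup B_{i+1}) = r_{i+1}$. If instead $y \in B_i$, then $y$ is an admissible center for $A_{i+1} \cup B_i$, yielding $r_i \geq \drad(A_{i+1} \cup B_i) \geq r_{i+1}$, where the second inequality is the greedy rule at step $i+1$ applied to the pair $A_{i+1}, B_i \in \C_{i+2}$.

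The second sentence of the observation then follows at once from (a) and the greedy rule: at step $i$ the algorithm minimizes $\drad(A_i \cup B_i)$ over all pairs in $\C_{i+1}$, so for any $A', B' \in \C_{i+1}$ we have $\drad(A' \cup B') \geq \drad(A_i \cup B_i) = r_i = \dradcost(\C_i)$. The main obstacle is the second case of the monotonicity argument, because $\drad$ is not monotone under set inclusion: a superset can admit a strictly better center. The case split on where the optimal center of $A_i \cup B_i$ actually lies circumvents this, reducing either to the discrete radius of a subset on which the center was already admissible, or to a pair of clusters that was available at the previous step.
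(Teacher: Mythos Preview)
Your proof is correct. The paper states this as an \emph{observation} and offers no proof at all, treating both claims as self-evident consequences of the greedy merge rule. So there is no ``paper's proof'' to compare against; you have supplied one where the authors chose not to.

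Your argument is more careful than one might expect for an observation of this kind, and rightly so: you correctly identify the only genuine subtlety, namely that $\drad$ is \emph{not} monotone under set inclusion (a larger set may contain a better discrete center). This is exactly why the monotonicity $r_i \ge r_{i+1}$ is not completely trivial in the case where the freshly created cluster $A_{i+1}\cup B_{i+1}$ participates in the next merge. Your case split on the location of the optimal center $y$ of $A_i\cup B_i$ handles this cleanly: either $y$ lies in the previous merge and bounds $r_{i+1}$ directly, or $y\in B_i$ and the pair $(A_{i+1},B_i)\in\C_{i+2}\times\C_{i+2}$ was a legal candidate at step $i{+}1$, so the greedy rule applies. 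Both branches are valid.

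One minor remark: in the inductive step you implicitly also need $\dradcost(\C_i)\ge r_i$, which is immediate since $A_i\cup B_i\in\C_i$; you might state this explicitly for completeness.
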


We prove Theorem~\ref{drad_result} in two steps.
First, Proposition~\ref{drad_2k_prop} in Section~\ref{drad_2k_sec} provides an upper bound to the cost of the intermediate $2k$-clustering.
This upper bound is independent of $k$ and $|X|$, only linear in $d$ and may be of independent interest.
In its proof, we use Lemma~\ref{volume_lem} to bound the distance between the centers of pairs of remaining clusters.
The cost of merging such a pair gives an upper bound to the cost of the next merge step.
Therefore, we can bound the discrete radius of the created cluster by the sum of the larger of the two clusters' discrete radii and the distance between their centers.

Second, in Section~\ref{drad_rem_sec}, we analyze the remaining $k$ merge steps of Algorithm~\ref{drad_algo} down to the computation of the $k$-clustering.
There, we no longer need to apply the volume argument from Lemma~\ref{volume_lem} to bound the distance between two cluster centers.
It will be replaced by a very simple bound that is already sufficient.
Analogously to the first step, this leads to a bound for the cost of merging a pair of clusters.

\subsubsection{Analysis of the $2k$-clustering}\label{drad_2k_sec}
\begin{proposition}\label{drad_2k_prop}
Let \(X\subset\R^d\) be finite.
Then, for all $k\in\N$ with $2k\leq|X|$, the partition $\C_{2k}$ of $X$ into $2k$ clusters as computed by Algorithm~\ref{drad_algo} satisfies
\[\dradcost(\C_{2k})<20d\cdot\opt_k,\]
where $\opt_k$ denotes the cost of an optimal solution to Problem~\ref{drad_prob}.
\end{proposition}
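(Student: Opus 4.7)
Let me set $t := \opt_k$ and $r_m := \dradcost(\C_m)$ for brevity. I would aim to bound $r_{2k}$ by combining a local step inequality supplied by Lemma~\ref{volume_lem} with a contradiction argument that walks the maximum-discrete-radius cluster back through the algorithm's merge history via Observation~\ref{drad_greedy_obs}.

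First I would establish the single-step inequality
\[
    r_{m-1} \;\le\; r_m + 4t\sqrt[d]{k/m}
    \qquad\text{for every } k < m \le |X|.
\]
For each $C\in\C_m$ pick a centre $c_C\in C$ realising $\drad(C)$. The $m$ centres form a subset of $X$ and hence a $(k,t)$-coverable set; since $m>k$, Lemma~\ref{volume_lem} supplies two clusters $A,B\in\C_m$ with $\|c_A-c_B\|\le 4t\sqrt[d]{k/m}$. Taking $c_A$ as the centre of $A\cup B$, points of $A$ lie within $\drad(A)\le r_m$ and points of $B$ lie within $\drad(B)+\|c_A-c_B\|\le r_m+4t\sqrt[d]{k/m}$ of $c_A$, so $\drad(A\cup B)\le r_m+4t\sqrt[d]{k/m}$. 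Observation~\ref{drad_greedy_obs} then gives the bound on $r_{m-1}$, because the actual merge chosen by the algorithm cannot be worse than this specific candidate.

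Next I would argue by contradiction, assuming $r_{2k}\ge 20dt$. By Observation~\ref{drad_greedy_obs}, the maximum-$\drad$ cluster of $\C_{2k}$ was created at a merge reducing the count from $L_0+1$ to $L_0$ with $L_0\ge 2k$, and its discrete radius equals $r_{L_0}\ge 20dt$. Applying the single-step bound at $m=L_0+1$, where $m>k$ and therefore $\sqrt[d]{k/m}<1$, gives $r_{L_0+1}\ge r_{L_0}-4t\sqrt[d]{k/(L_0+1)} > r_{L_0}-4t$. So $\C_{L_0+1}$ already contains a cluster of discrete radius strictly greater than $20dt-4t$, which was itself formed at an earlier level $L_1\ge L_0+1\ge 2k+1$. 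Iterating this trace-back through Observation~\ref{drad_greedy_obs} produces a sequence of clusters $C^{(0)},C^{(1)},\ldots$ formed at strictly increasing levels $L_0<L_1<\cdots$ with $L_j\ge 2k+j$ and
\[
    \drad(C^{(j)}) \;\ge\; 20dt \;-\; \sum_{i=0}^{j-1} 4t\sqrt[d]{k/(L_i+1)}.
\]

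To close the argument I would exploit that each summand in the decrement is strictly less than $4t$ because $L_i\ge 2k>k$. Combining the $5d$ strict inequalities, the accumulated decrement after $5d$ iterations is strictly less than $20dt$, so $\drad(C^{(5d)})>0$ and $C^{(5d)}$ is genuinely produced by a merge (not a trivial singleton). Since the $L_j$ strictly increase and are bounded above by $|X|$, the trace-back must eventually be pushed all the way up to $L_{j-1}+1=|X|$; at that instant $r_{L_{j-1}+1}=r_{|X|}=0$ contradicts the strict positivity of $\drad(C^{(j)})$ inherited from the previous step.

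The main obstacle is the delicate bookkeeping in the last paragraph: the per-step decrement $4t\sqrt[d]{k/(L_i+1)}$ is strictly less than $4t$ but can be close to it when $L_i$ is only a little larger than $k$, while the number of iterations needed to push $L_{j-1}+1$ up to $|X|$ can in principle be as large as $|X|-2k$. Making the contradiction go through therefore requires carefully pairing the monotone growth of the $L_i$ against the contribution of each term $4t\sqrt[d]{k/(L_i+1)}$, so that the cumulative decrement is guaranteed to remain strictly below the initial budget $20dt$ until the trace-back is forced to terminate. This amortised balancing is precisely what fixes the constant $20d$ in the statement.
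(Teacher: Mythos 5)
Your single-step inequality $r_{m-1}\le r_m+4t\sqrt[d]{k/m}$ is correct, but it is too weak to yield a bound independent of $|X|$, and the trace-back argument you build on top of it does not close the gap. The decrements you accumulate along the levels $L_0<L_1<\cdots$ can, in the worst case, include one term for essentially every level between $2k$ and $|X|$ (nothing prevents $L_{j+1}=L_j+1$ at every step, i.e.\ every merge producing the new maximum), and the resulting sum $\sum_{m>2k}4t\sqrt[d]{k/m}=\Theta\bigl(t\,k^{1/d}|X|^{1-1/d}\bigr)$ diverges as $|X|\to\infty$ for every $d$. So your lower bound $\drad(C^{(j)})\ge 20dt-\sum_i 4t\sqrt[d]{k/(L_i+1)}$ becomes vacuous long before the trace-back is forced up to level $|X|$; showing positivity after $5d$ iterations does not propagate to the (possibly much later) iteration at which $r_{|X|}=0$ would be reached. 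The ``amortised balancing'' you defer to in the last paragraph is not a bookkeeping issue --- no rearrangement of these per-step terms can work, because their total is genuinely unbounded in $|X|$.

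The paper avoids this by proving a \emph{per-phase} rather than per-step bound. A phase reduces the number of clusters from $m$ to $\lfloor 3m/4\rfloor$; since each merge destroys at most two clusters of $\C_m$, more than $m/2>k$ clusters of $\C_m$ survive the entire phase, and their centers (each covering its cluster within the radius $R=\dradcost(\C_m)$ fixed at the \emph{start} of the phase) always contain a pair at distance at most $4\sqrt[d]{2k/m}\cdot\opt_k$ by Lemma~\ref{volume_lem}. Because the radius bound $R$ does not change during the phase, the whole phase costs only \emph{one} additive term $4\sqrt[d]{2k/m}\cdot\opt_k$, not $m/4$ of them. Summing over the $\lceil\log_{4/3}(|X|/2k)\rceil$ phases then gives a geometric series in $\sqrt[d]{4/3}$, which evaluates to less than $20d\cdot\opt_k$. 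If you want to salvage your approach, you would need to replace the single-step inequality by an argument of exactly this kind, in which one fixed family of surviving centers controls many consecutive merges.
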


To prove Proposition~\ref{drad_2k_prop}, we divide the merge steps of Algorithm~\ref{drad_algo} into phases, each reducing the number of remaining clusters by one fourth.
The following lemma bounds the increase of the cost during a single phase by an additive term.

\begin{lemma}\label{drad_2k_phaselem}
Let $m\in\N$ with \(2k<m\leq|X|\).
Then,
\begin{equation*}\label{drad_2k_phaseeq}
\dradcost\left(\C_{\left\lfloor\frac{3m}{4}\right\rfloor}\right)<\dradcost(\C_m)+4\sqrt[d]{\frac{2k}{m}}\cdot\opt_k.
\end{equation*}
\end{lemma}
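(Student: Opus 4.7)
The plan is to exploit two facts: first, by Observation~\ref{drad_greedy_obs}, the cost $\dradcost(\C_{\lfloor 3m/4\rfloor})$ is exactly the discrete radius of the cluster created in the very last merge of the phase, and this is upper bounded by $\drad(A\cup B)$ for \emph{any} two clusters $A,B$ of the clustering $\C_{\lfloor 3m/4\rfloor+1}$; second, many of the clusters of $\C_m$ survive unchanged through the phase, so a representative subset of $X$ is both $(k,\opt_k)$-coverable and large enough for the volume argument.

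More precisely, I would first verify that at least $|S|>m/2$ clusters from $\C_m$ are still present, unchanged, in $\C_{\lfloor 3m/4\rfloor+1}$. To count them, let $j=\lfloor 3m/4\rfloor+1$ and, for each cluster in $\C_j$, let $n_i$ denote the number of clusters of $\C_m$ that were merged to form it. Since $\sum_i n_i=m$ and clusters with $n_i\ge2$ contribute at least $2$ each, the number $|S|$ of surviving clusters (those with $n_i=1$) satisfies
\[
m \;\geq\; |S|+2(j-|S|) \;=\; 2j-|S|,
\]
so $|S|\geq 2j-m \geq 2(3m/4-1)+2-m = m/2$. Using $m>2k$ this yields $|S|>k$ strictly.

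Next, for each surviving cluster $C\in S$, pick an optimal discrete center $c_C\in C$ so that $\|x-c_C\|\leq\drad(C)\leq\dradcost(\C_m)$ for every $x\in C$, and let $P=\{c_C : C\in S\}\subset X$. Since $P\subset X$ and $X$ is $(k,\opt_k)$-coverable, so is $P$; moreover $|P|=|S|>k$. Applying Lemma~\ref{volume_lem} with $r=\opt_k$ produces distinct $p,q\in P$ with
\[
\|p-q\|\;\leq\;4\,\opt_k\sqrt[d]{\tfrac{k}{|P|}}\;\leq\;4\,\opt_k\sqrt[d]{\tfrac{2k}{m}},
\]
where the last inequality uses $|P|>m/2$.

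Finally, let $A_p,A_q\in S\subseteq \C_{\lfloor 3m/4\rfloor+1}$ be the two distinct surviving clusters containing $p$ and $q$ respectively. Taking $p$ as a candidate center of $A_p\cup A_q$, every $x\in A_p$ lies within $\dradcost(\C_m)$ of $p$, and every $x\in A_q$ lies within $\dradcost(\C_m)$ of $q$ and hence within $\dradcost(\C_m)+\|p-q\|$ of $p$. Therefore $\drad(A_p\cup A_q)\leq\dradcost(\C_m)+\|p-q\|$, and by Observation~\ref{drad_greedy_obs} applied to the last merge of the phase,
\[
\dradcost(\C_{\lfloor 3m/4\rfloor})\;\leq\;\drad(A_p\cup A_q)\;<\;\dradcost(\C_m)+4\sqrt[d]{\tfrac{2k}{m}}\cdot\opt_k,
\]
which is the claimed bound. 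The only delicate step is the surviving-cluster count, which must be sharp enough to keep the $|P|>k$ hypothesis of Lemma~\ref{volume_lem} satisfied and to replace the factor $\sqrt[d]{k/m}$ by $\sqrt[d]{2k/m}$; everything else is a direct application of the greedy property and the volume lemma.
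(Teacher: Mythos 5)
Your proof is correct and takes essentially the same route as the paper's: count the clusters of $\C_m$ surviving into $\C_{\lfloor 3m/4\rfloor+1}$, apply Lemma~\ref{volume_lem} to their discrete centers, and bound the discrete radius of the union of the two close clusters via Observation~\ref{drad_greedy_obs}. The only (cosmetic) difference is the survivor count via $\sum_i n_i=m$ rather than ``each merge destroys at most two clusters of $\C_m$''; just note that your displayed arithmetic only yields $|S|\ge m/2$, whereas the strictness of the final inequality rests on $|S|>m/2$, which does hold because $\lfloor 3m/4\rfloor>3m/4-1$.
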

\begin{proof}
Let $R:=\dradcost(\C_m)$.
From every cluster $C\in\C_m$, we fix a center $p_C\in C$ with $C\subset\ball_R^d(p_C)$.

Let $t:=\left\lfloor\frac{3m}{4}\right\rfloor$.
Then, $\C_m\cap\C_{t+1}$ is the set of clusters from $\C_m$ that still exist $\left\lceil\frac{m}{4}\right\rceil-1$ merge steps after the computation of $\C_m$.
In each iteration of its loop, the algorithm can merge at most two clusters from $\C_m$.
Thus, \(|\C_m\cap\C_{t+1}|>\frac{m}{2}\).

Let \(P:=\{p_C\where C\in C_m\cap \C_{t+1}\}\).
Then, $|P|=|\C_m\cap\C_{t+1}|>\frac{m}{2}>k$.
Since $X$ is $(k,\opt_k)$-coverable, so is \(P\subset X\).
Therefore, by Lemma~\ref{volume_lem}, there exist distinct $C_1,C_2\in\C_m\cap\C_{t+1}$ such that
\(\|p_{C_1}-p_{C_2}\|\leq4\sqrt[d]{\frac{2k}{m}}\cdot\opt_k\).
Then, the distance from $p_{C_1}$ to any $q\in C_2$ is at most $4\sqrt[d]{\frac{2k}{m}}\cdot\opt_k+R$.
We conclude that merging $C_1$ and $C_2$ would result in a cluster whose discrete radius can be upper bounded by
\(\drad(C_1\cup C_2)<\dradcost(\C_m)+4\sqrt[d]{\frac{2k}{m}}\cdot\opt_k\) (see Figure~\ref{drad_merge_fig}).
The result follows using $C_1,C_2\in\C_{t+1}$ and Observation~\ref{drad_greedy_obs}.
\end{proof}

\begin{figure}
\begin{center}
	\psset{xunit=0.08cm,yunit=0.08cm,runit=0.08cm}
	\begin{pspicture}(0,0)(100,45)

 		\pscircle*(20,20){0.6}
 		\pscircle*(80,25){0.6}
 		\pscircle(20,20){20}
 		\pscircle(80,25){20}
 		\put(15,17){$p_{C_1}$}
 		\put(81,23){$p_{C_2}$}

		\psarc[linestyle=dashed]{<->}(20,20){80}{-15}{20}
		\psline[linestyle=dashed]{->}(20,20)(99,10)

		\psline{->}(20,20)(2,29)
		\psline{->}(80,25)(98,34)
		\psline{|-|}(20,20)(80,25)

 		\put(8,27){$\dradcost(\C_m)$}
 		\put(69,33){$\dradcost(\C_m)$}

	\end{pspicture}
\end{center}
\caption{$\drad(C_1\cup C_2)<\dradcost(\C_m)+\|p_{C_1}-p_{C_2}\|$.}
\label{drad_merge_fig}
\end{figure}

To prove Proposition~\ref{drad_2k_prop}, we apply Lemma~\ref{drad_2k_phaselem} for $\left\lceil\log_\frac{4}{3}\frac{|X|}{2k}\right\rceil$ consecutive phases.
\begin{proof}[Proof of Proposition~\ref{drad_2k_prop}]
Let \(u:=\left\lceil\log_\frac{4}{3}\frac{|X|}{2k}\right\rceil\) and define \(m_i:=\left\lceil\left(\frac{3}{4}\right)^i|X|\right\rceil\) for all $i=0,\ldots,u$.
Then, \(m_u\leq2k\) and \(m_i>2k\) for all $i=0,\ldots,u-1$.
Since \(\left\lfloor\frac{3m_i}{4}\right\rfloor=\left\lfloor\frac{3}{4}\left\lceil\left(\frac{3}{4}\right)^i|X|\right\rceil\right\rfloor\leq\left\lfloor\left(\frac{3}{4}\right)^{i+1}|X|+\frac{3}{4}\right\rfloor\leq\left\lceil\left(\frac{3}{4}\right)^{i+1}|X|\right\rceil=m_{i+1}\) and Algorithm~\ref{drad_algo} uses a greedy strategy, we get \(\dradcost(\C_{m_{i+1}})\leq\dradcost(\C_{\left\lfloor\frac{3m_i}{4}\right\rfloor})\) for all $i=0,\ldots,u-1$.
Combining this with Lemma~\ref{drad_2k_phaselem} (applied to $m=m_i$), we obtain
\[\dradcost(\C_{m_{i+1}})<\dradcost(\C_{m_i})+4\sqrt[d]{\frac{2k}{m_i}}\cdot\opt_k.\]
By repeatedly applying this inequality for $i=0,\ldots,u-1$ and using \(\dradcost(\C_{2k})\leq\dradcost(\C_{m_u})\) and \(\dradcost(\C_{m_0})=0\), we deduce
\[\dradcost\left(\C_{2k}\right)<\sum_{i=0}^{u-1}\left(4\sqrt[d]{\frac{2k}{m_i}}\cdot\opt_k\right)<4\sqrt[d]{\frac{2k}{|X|}}\cdot\sum_{i=0}^{u-1}\sqrt[d]{\left(\frac{4}{3}\right)^i}\cdot\opt_k.\]
Solving the geometric series and using $u-1<\log_\frac{4}{3}\frac{|X|}{2k}$ leads to
\begin{equation}\label{drad_gseq}
\dradcost\left(\C_{2k}\right)<4\sqrt[d]{\frac{2k}{|X|}}\cdot\frac{\sqrt[d]{\left(\frac{4}{3}\right)^u}-1}{\sqrt[d]{\frac{4}{3}}-1}\cdot\opt_k<\frac{4\sqrt[d]{\frac{4}{3}}}{\sqrt[d]{\frac{4}{3}}-1}\cdot\opt_k.
\end{equation}
By taking only the first two terms of the series expansion of the exponential function, we get
\(\sqrt[d]{\frac{4}{3}}=e^{\frac{\ln\frac{4}{3}}{d}}>1+\frac{\ln\frac{4}{3}}{d}\).
Substituting this bound into (\ref{drad_gseq}) gives
\[\dradcost\left(\C_{2k}\right)<\frac{4\sqrt[d]{\frac{4}{3}}}{\ln\frac{4}{3}}d\cdot\opt_k<20d\cdot\opt_k.\]
\end{proof}

\subsubsection{Analysis of the remaining merge steps}\label{drad_rem_sec}
The analysis of the remaining merge steps introduces the $O(\log k)$ term to the approximation factor of our result.
It is similar to the analysis used in the proof of Proposition~\ref{drad_2k_prop}.
Again, we divide the merge steps into phases.
However, this time one phase consists of one half of the remaining merge steps.
Furthermore, we are able to replace the volume argument from Lemma~\ref{volume_lem} by a simpler bound.
More precisely, as long as there are more than $k$ clusters left, we are able to find a pair of clusters whose centers lie in the same cluster of an optimal $k$-clustering.
That is, the distance between the centers is at most two times the discrete radius of the common cluster in the optimal clustering.
The following lemma bounds the increase of the cost during a single phase.

\begin{lemma}\label{drad_rem_phaselem}
Let $m\in\N$ with \(k<m\leq|X|\).
Then,
\[\dradcost(\C_{k+\left\lfloor\frac{m-k}{2}\right\rfloor})<\dradcost(\C_m)+2\opt_k.\]
\end{lemma}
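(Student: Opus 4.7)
The plan is to follow the template of Lemma \ref{drad_2k_phaselem}, but to replace the volume argument (Lemma \ref{volume_lem}) with a direct pigeonhole argument against an optimal discrete $k$-center clustering, which suffices as soon as strictly more than $k$ of the original cluster centers are still in play. Set $R := \dradcost(\C_m)$ and, for each cluster $C \in \C_m$, fix a discrete center $p_C \in C$ with $C \subseteq \ball_R^d(p_C)$. Let $t := k + \lfloor(m-k)/2\rfloor$, so that the number of merges separating $\C_m$ from $\C_{t+1}$ is exactly $m - t - 1$.

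Since each merge touches at most two clusters of $\C_m$, the number of clusters from $\C_m$ that survive unchanged in $\C_{t+1}$ satisfies
\[|\C_m \cap \C_{t+1}| \;\geq\; m - 2(m - t - 1) \;=\; 2t - m + 2 \;\geq\; k + 1,\]
the final inequality holding by direct inspection in both the even and odd cases of $m - k$. Let $P := \{p_C \where C \in \C_m \cap \C_{t+1}\} \subseteq X$; then $|P| \geq k + 1 > k$. Applying the pigeonhole principle to an optimal discrete $k$-center clustering of $X$ (which consists of $k$ clusters of discrete radius at most $\opt_k$), I would obtain two distinct clusters $C_1, C_2 \in \C_m \cap \C_{t+1}$ whose centers $p_{C_1}, p_{C_2}$ lie in a common optimal cluster, which yields $\|p_{C_1} - p_{C_2}\| \leq 2\,\opt_k$ directly from the definition of discrete radius.

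To conclude, I would use $p_{C_1} \in C_1 \cup C_2$ as a candidate discrete center for $C_1 \cup C_2$: every point of $C_1$ is within distance $R$ of $p_{C_1}$, and every point of $C_2$ is within $R + 2\,\opt_k$ of $p_{C_1}$ by the triangle inequality. Hence $\drad(C_1 \cup C_2) \leq R + 2\,\opt_k$. Since $C_1, C_2 \in \C_{t+1}$, Observation \ref{drad_greedy_obs} then upgrades this to the desired bound on $\dradcost(\C_t)$. I do not expect any real obstacle here; the only minor subtleties are the parity case analysis in the counting step and the strict inequality in the stated conclusion, the latter being handled exactly as in Lemma \ref{drad_2k_phaselem} (one may, e.g., dispose of the degenerate case $R = 0$ separately, since then $C_1$ and $C_2$ are singletons and $\drad(C_1 \cup C_2) = \|p_{C_1} - p_{C_2}\| \leq 2\,\opt_k < R + 2\,\opt_k$ holds whenever $\opt_k > 0$).
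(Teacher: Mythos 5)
Your proposal is correct and follows essentially the same route as the paper's proof: count that strictly more than $k$ clusters of $\C_m$ survive into $\C_{t+1}$, pigeonhole their fixed centers against the $k$ optimal clusters (equivalently, the $k$ covering balls of radius $\opt_k$) to find two surviving clusters with $\|p_{C_1}-p_{C_2}\|\leq2\opt_k$, bound $\drad(C_1\cup C_2)$ by using $p_{C_1}$ as candidate center, and finish with Observation~\ref{drad_greedy_obs}. The only quibble is your degenerate-case aside: for $R=0$ the chain $\|p_{C_1}-p_{C_2}\|\leq2\opt_k<R+2\opt_k$ collapses to $2\opt_k<2\opt_k$, so it does not actually rescue the strict inequality there --- but the paper's own proof passes from the non-strict bound $\|p_{C_1}-q\|\leq2\opt_k+R$ to a strict ``$<$'' with no justification either, so this is a shared cosmetic looseness rather than a gap in your argument.
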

\begin{proof}
Let $R:=\dradcost(\C_m)$.
From every cluster $C\in\C_m$, we fix a center $p_C\in C$ with $C\subset\ball_R^d(p_C)$.

Let $t:=k+\left\lfloor\frac{m-k}{2}\right\rfloor$.
Then, $\C_m\cap \C_{t+1}$ is the set of clusters from $\C_m$ that still exist $\left\lceil\frac{m-k}{2}\right\rceil-1$ merge steps after the computation of $\C_m$.
In each iteration of its loop, the algorithm can merge at most two clusters from $\C_m$.
Thus, \(|\C_m\cap \C_{t+1}|>k\).

Let \(P:=\{p_C\where C\in C_m\cap \C_{t+1}\}\).
Since $X$ is $(k,\opt_k)$-coverable, so is \(P\subset X\).
Therefore, using $|P|>k$ it follows that there exist distinct $C_1,C_2\in\C_m\cap\C_{t+1}$ such that $p_{C_1}$ and $p_{C_2}$ are contained in the same ball of radius $\opt_k$, i.e.
\(\|p_{C_1}-p_{C_2}\|\leq2\opt_k\).
Then, the distance from $p_{C_1}$ to any $q\in C_2$ is at most $2\opt_k+R$.
We conclude that merging $C_1$ and $C_2$ would result in a cluster whose discrete radius can be upper bounded by
\(\drad(C_1\cup C_2)<\dradcost(\C_m)+2\opt_k\)  (see Figure~\ref{drad_merge_fig}).
The result follows using $C_1,C_2\in\C_{t+1}$ and Observation~\ref{drad_greedy_obs}.
\end{proof}

To prove Theorem~\ref{drad_result}, we apply Lemma~\ref{drad_rem_phaselem} for about $\log k$ consecutive phases.
\begin{proof}[Proof of Theorem~\ref{drad_result}]
Let  $\varepsilon>0$ and \(u:=\left\lceil\log_2(k)+\varepsilon\right\rceil\) such that \(\log_2k<u\leq\log_2(k)+1\).
Furthermore, define \(m_i:=k+\left\lfloor\left(\frac{1}{2}\right)^ik\right\rfloor\) for all $i=0,\ldots,u$.
Then, \(m_u=k\) and \(m_i>k\) for all $i=0,\ldots,u-1$.
Since \(k+\left\lfloor\frac{m_i-k}{2}\right\rfloor=k+\left\lfloor\frac{1}{2}\left\lfloor\left(\frac{1}{2}\right)^ik\right\rfloor\right\rfloor\leq k+\left\lfloor\left(\frac{1}{2}\right)^{i+1}k\right\rfloor=m_{i+1}\) and Algorithm~\ref{drad_algo} uses a greedy strategy, we deduce \(\dradcost(\C_{m_{i+1}})\leq\dradcost\left(\C_{k+\left\lfloor\frac{m_i-k}{2}\right\rfloor}\right)\) for all $i=0,\ldots,u-1$.
Combining this with Lemma~\ref{drad_rem_phaselem} (applied to $m=m_i$), we obtain
\[\diamcost(\C_{m_{i+1}})<\dradcost(\C_{m_i})+2\opt_k.\]
By repeatedly applying this inequality for $i=0,\ldots,u-1$ and using \(m_0=2k\), we get
\[\dradcost(\C_k)<\dradcost(\C_{2k})+\sum_{i=0}^{u-1}2\opt_k\leq\dradcost(\C_{2k})+(2\log_2(k)+2)\cdot\opt_k.\]
Hence, the result follows using Proposition~\ref{drad_2k_prop}.
\end{proof}

\subsection{$k$-center clustering}\label{rad_sec}
The agglomerative algorithm for the $k$-center problem is stated as Algorithm~\ref{rad_algo}.
\begin{table}[t]
\centering
\begin{minipage}{.7\textwidth}
\footnotesize
\hrule
\textsc{AgglomerativeRadius}$(X)$:\\
\begin{tabular}{rl}
$X$ & finite set of input points from $\R^d$\\
\end{tabular}
\hrule
\begin{algorithmic}[1]
\State $\C_{|X|}:=\left\{\,\{x\}\,|\,x\in X\right\}$
\For{$i=|X|-1,\ldots,1$}
\State find distinct clusters $A,B\in \C_{i+1}$ minimizing \(\rad(A\cup B)\)\label{rad_algo_min}
\State $\C_i:=(\C_{i+1}\setminus\{A,B\})\,\cup\,\{A\cup B\}$
\EndFor
\State\Return $\C_1,\ldots,\C_{|X|}$
\end{algorithmic}
\hrule
\end{minipage}
\vspace{1em}
\caption{The agglomerative algorithm for the $k$-center problem.}
\label{rad_algo}
\end{table}
The only difference to Algorithm~\ref{drad_algo} is the minimization of the radius instead of the discrete radius in Step~\ref{rad_algo_min}.

In the following, \emph{cost} always means radius cost and $\opt_k$ refers to the cost of an optimal $k$-center clustering of $X\subset\R^d$ where $k\in\N$ with \(k\leq|X|\).

\begin{observation}[analogous to Observation~\ref{drad_greedy_obs}]\label{rad_greedy_obs}
The cost of all computed clusterings is equal to the radius of the cluster created last.
Furthermore, the radius of the union of any two clusters is always an upper bound for the cost of the clustering to be computed next.
\end{observation}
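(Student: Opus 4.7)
The plan is to prove both parts by a downward induction on $i$, mirroring the argument that justifies Observation~\ref{drad_greedy_obs}. For the first part I will establish the stronger claim that, for every $i\in\{1,\ldots,|X|\}$, the cluster $L_i$ created most recently by Algorithm~\ref{rad_algo} has the largest radius in $\C_i$, i.e.\ $\rad(L_i)\geq\rad(C)$ for all $C\in\C_i$. Since $L_i\in\C_i$, this immediately yields $\radcost(\C_i)=\rad(L_i)$, which is the first sentence of the observation. The base case $i=|X|$ is trivial because $\C_{|X|}$ consists of singletons of radius $0$.

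For the inductive step, let the merge that produces $\C_i$ from $\C_{i+1}$ combine the pair $A,B\in\C_{i+1}$, so $L_i=A\cup B$. Any $C\in\C_i\setminus\{L_i\}$ also lies in $\C_{i+1}\setminus\{A,B\}$, so $\rad(C)\leq\rad(L_{i+1})$ by the inductive hypothesis. It therefore suffices to show $\rad(L_{i+1})\leq\rad(L_i)$, which splits into two cases. If $L_{i+1}\in\{A,B\}$, then $L_{i+1}\subseteq L_i$, and since any ball enclosing $L_i$ automatically encloses $L_{i+1}$, we get $\rad(L_{i+1})\leq\rad(L_i)$. Otherwise $L_{i+1},A,B$ are three distinct clusters of the partition $\C_{i+1}$ and are therefore pairwise disjoint; in particular neither $A$ nor $B$ was produced by the step that created $L_{i+1}$, so both $A$ and $B$ were already present in $\C_{i+2}$. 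Consequently the pair $\{A,B\}$ was a candidate at that earlier step, and by the greedy rule of Step~\ref{rad_algo_min} applied there, $\rad(L_{i+1})\leq\rad(A\cup B)=\rad(L_i)$.

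The second sentence of the observation is then a one-line consequence of Step~\ref{rad_algo_min}: for any distinct $A',B'\in\C_{i+1}$ the quantity $\rad(A'\cup B')$ is at least the minimum taken over all such pairs, which equals $\rad(L_i)$, and by the first part this equals $\radcost(\C_i)$. The only point that deserves a remark, and is the nearest thing to an obstacle, is the monotonicity $L_{i+1}\subseteq L_i\Rightarrow\rad(L_{i+1})\leq\rad(L_i)$ used in the first case; this is genuinely a property of the radius (an enclosing ball for a superset serves any subset) rather than a formal triviality, but it is standard and holds for the norm-based metrics considered throughout the paper. Apart from that remark the proof is purely combinatorial bookkeeping of which clusters belong to which $\C_j$, and it transfers verbatim from the discrete-radius setting of Observation~\ref{drad_greedy_obs}.
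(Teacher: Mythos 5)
Your proof is correct. The paper states this observation without proof (as it does for Observation~\ref{drad_greedy_obs}), and your downward induction -- combining the greedy choice in Step~\ref{rad_algo_min} with the monotonicity of the radius under set inclusion to show that the most recently created cluster always attains the maximum radius -- is exactly the justification the authors implicitly rely on.
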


The following theorem states our result for the $k$-center problem.
\begin{theorem}\label{rad_result}
Let \(X\subset\R^d\) be a finite set of points.
Then, for all $k\in\N$ with $k\leq|X|$, the partition $\C_k$ of $X$ into $k$ clusters as computed by Algorithm~\ref{rad_algo} satisfies
\[\radcost(\C_k)=O(\log k)\cdot\opt_k,\]
where $\opt_k$ denotes the cost of an optimal solution to Problem~\ref{rad_prob}, and the constant hidden in the $O$-notation is singly exponential in the dimension $d$.
\end{theorem}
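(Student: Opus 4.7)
The plan is to follow the two-part structure of Theorem~\ref{drad_result}: first bound $\radcost(\C_{2k})$ by a function of the dimension alone (the ``bulk phase''), then charge each of the $\Theta(\log k)$ remaining halving phases an additive $\opt_k$-proportional amount (the ``tail phase''). The principal new difficulty, compared to the discrete case, is that the witness center $y_C\in\R^d$ of an intermediate cluster $C$ need not belong to $X$, so $\{y_C\}$ is not directly $(k,\opt_k)$-coverable.

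For the analogue of Proposition~\ref{drad_2k_prop}, I would fix for every cluster $C\in\C_m$ a point $p_C\in C$ and a center $y_C$ with $C\subseteq\ball_R^d(y_C)$ for $R:=\radcost(\C_m)$. Combining $\|y_C-p_C\|\leq R$ with the observation that $p_C\in X$ lies within distance $\opt_k$ of some optimal center $o$ shows $\|y_C-o\|\leq R+\opt_k$; hence $P:=\{y_C : C\in\C_m\cap\C_{t+1}\}$ is $(k,R+\opt_k)$-coverable. Lemma~\ref{volume_lem} then produces two clusters $C_1,C_2$ with $\|y_{C_1}-y_{C_2}\|\leq 4(R+\opt_k)\sqrt[d]{2k/m}$. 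Taking the midpoint of $y_{C_1}$ and $y_{C_2}$ as the witness center for the merged cluster---which by homogeneity of any norm lies at distance exactly $\|y_{C_1}-y_{C_2}\|/2$ from each endpoint---yields $\rad(C_1\cup C_2)\leq R+2(R+\opt_k)\sqrt[d]{2k/m}$. The resulting per-phase recurrence $R_{i+1}\leq R_i(1+2\alpha_i)+2\alpha_i\opt_k$ with $\alpha_i:=\sqrt[d]{2k/m_i}$ carries a multiplicative factor absent from the discrete case; telescoping together with the estimate $\sum_i\alpha_i=O(d)$ from the proof of Proposition~\ref{drad_2k_prop} bounds $\prod_i(1+2\alpha_i)$ by $\exp(O(d))$, yielding $\radcost(\C_{2k})\leq\exp(O(d))\cdot\opt_k$.

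For the tail phase, a naive continuation of the bulk-phase argument gives multiplicative growth per phase, which over $\Theta(\log k)$ phases cannot yield an $O(\log k)$-approximation, so the plan is to prove an additive analogue of Lemma~\ref{drad_rem_phaselem}: $\radcost(\C_{k+\lfloor(m-k)/2\rfloor})\leq\radcost(\C_m)+\exp(O(d))\opt_k$. The idea is to exploit the bound $R\leq\exp(O(d))\opt_k$ just established and refine the $(k,R+\opt_k)$-covering of $\{y_C\}$ to an $\opt_k$-cover: each ball $\ball_{R+\opt_k}^d(o_j)$ admits an $\opt_k$-net of size $\exp(O(d))$ by a standard packing argument, giving a $(k\cdot\exp(O(d)),\opt_k)$-cover of the witness centers, so two clusters $C_1,C_2\in\C_{t+1}$ whose witness centers fall into a common class of this refinement can be merged (using the corresponding net point as the new center) with only an $\opt_k$-proportional radius increase. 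The principal obstacle I foresee is extracting such a pair: a direct pigeonhole on the $k$ original optimal balls yields only the weak bound $\|y_{C_1}-y_{C_2}\|\leq 2(R+\opt_k)$, and the refined net has $k\cdot\exp(O(d))$ classes, which exceeds the $m\leq 2k$ available cluster centers, so a more delicate combinatorial argument leveraging the greedy minimality of Algorithm~\ref{rad_algo} together with the bulk-phase bound on $R$ is needed. It is precisely this refinement step that injects the singly exponential dependence on $d$ into the final approximation factor; once the additive lemma is in place, summing over $O(\log k)$ tail phases and combining with the bulk-phase bound yields the claim.
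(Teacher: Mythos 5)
Your bulk phase is sound and takes a genuinely different route from the paper. The paper deliberately avoids applying Lemma~\ref{volume_lem} to the witness centers (since they need not lie in $X$) and instead covers each inflated optimal ball $\ball_{\opt_k+\radcost(\C_m)}^d(y_i)$ by $\lfloor m/2k\rfloor$ balls of radius $\varepsilon=O(\sqrt[d]{2k/m})(\opt_k+\radcost(\C_m))$ and pigeonholes two surviving clusters into one of them. Your observation that the set of witness centers is itself $(k,R+\opt_k)$-coverable restores the applicability of Lemma~\ref{volume_lem} and, with the midpoint as the new center, yields the recurrence $R_{i+1}\leq(1+2\alpha_i)R_i+2\alpha_i\opt_k$ --- the same shape as the paper's Lemma~\ref{rad_2k_phaselem} with a better constant, and the same telescoping via $\sum_i\alpha_i=O(d)$ gives $\radcost(\C_{2k})\leq e^{O(d)}\opt_k$. (Minor point: if two clusters share a witness center the set $P$ may have size $\leq k$, but then merging that pair costs nothing extra, so this is easily handled.)

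The tail phase, however, has a genuine gap, and it is exactly the one you flag: with $m\leq 2k$ surviving clusters and a refined net of $k\cdot e^{O(d)}$ cells, no pigeonhole produces two witness centers in a common cell, and the unrefined pigeonhole only gives $\|y_{C_1}-y_{C_2}\|\leq2(R+\opt_k)$, i.e.\ roughly a doubling of the radius per phase and hence a $k^{O(1)}$ blow-up over $\Theta(\log k)$ phases. The ``more delicate combinatorial argument'' you defer to is the actual content of the paper's Sections~\ref{connected_sec} and~\ref{rad_rem_sec}, and it is not a refinement of the net idea. The paper chooses a \emph{minimal} subfamily $E\subseteq\C_{k+1}$ whose union is $(|E|-1,\opt_k)$-coverable, sets $Y=\bigcup_{A\in E}A$ and $\ell=|E|-1$, and shows (Lemma~\ref{connect_lem}) that the algorithm run on $Y$ reproduces $E$ as its $(\ell+1)$-clustering, that $\radcost(\C_k)\leq\radcost(\cP_\ell)$, and --- crucially --- that minimality forces every ball $\ball_{\opt_k}^d(z)$, $z\in Z$, to meet at least two clusters at every level $n\geq\ell+1$. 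This connectivity guarantee replaces the failed pigeonhole: in each phase one either finds two still-unmerged clusters connected through the same ball, or pairs each surviving cluster $A\in\cP_m\cap\cP_n$ with an already-grown cluster $B\in\cP_m\setminus\cP_n$ to which it is $(Z,\opt_k)$-connected, giving at least $m-\ell$ disjoint candidate merges each of radius at most $\radcost(\cP_m)+2\radcost(\cP_n)+2\opt_k$ with $n=\min(|Y|,2\ell)$ \emph{fixed} across all phases. That fixed reference to $\cP_{2\ell}$ is what makes the per-phase increase additive, $e^{O(d)}\opt_k$, rather than multiplicative. Without this reduction (or an equivalent device) your plan does not close, so the proof as proposed is incomplete precisely where the $O(\log k)$ factor has to be earned.
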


Theorem~\ref{rad_result} holds for any particular tie-breaking strategy.
However, to keep the analysis simple, we assume that there are no ties.
That is, we assume that for any input set $X$ the clusterings computed by Algorithm~\ref{rad_algo} are uniquely determined.
As in the proof of Theorem~\ref{drad_result}, we first show a bound for the cost of the intermediate $2k$-clustering.
However, we have to apply a different analysis.
As a consequence, the dependency on the dimension increases from linear and additive to a singly exponential factor.

\subsubsection{Analysis of the $2k$-clustering}\label{rad_2k_sec}
\begin{proposition}\label{rad_2k_prop}
Let \(X\subset\R^d\) be finite.
Then, for all $k\in\N$ with $2k\leq|X|$, the partition $\C_{2k}$ of $X$ into $2k$ clusters as computed by Algorithm~\ref{rad_algo} satisfies
\[\radcost(\C_{2k})<24d\cdot e^{24d}\cdot\opt_k,\]
where $\opt_k$ denotes the cost of an optimal solution to Problem~\ref{rad_prob}.
\end{proposition}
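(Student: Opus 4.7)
The plan is to mirror the three-step structure of the proof of Proposition~\ref{drad_2k_prop}: partition the merge steps of Algorithm~\ref{rad_algo} into phases indexed by $m_i := \lceil (3/4)^i |X|\rceil$ for $i=0,\ldots,u$ with $u := \lceil \log_{4/3}(|X|/(2k))\rceil$, so that $m_u \leq 2k$; prove a single-phase lemma bounding $R_{i+1} := \radcost(\C_{m_{i+1}})$ in terms of $R_i := \radcost(\C_{m_i})$ and $\opt_k$; then iterate this bound over all $u$ phases.

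The one nontrivial departure from the discrete case lies in the single-phase analysis. In the proof of Lemma~\ref{drad_2k_phaselem}, the centers $p_C$ were input points of $X$, so the set $P$ of centers inherited $(k,\opt_k)$-coverability from $X$ and Lemma~\ref{volume_lem} applied directly. Now each optimal cluster center $p_C$ may be an arbitrary point of $\R^d$, so $P$ need not sit inside any optimal cover of $X$. I would circumvent this via the triangle inequality: for any $x\in C$ and any optimal center $z$ with $\|x-z\|\leq\opt_k$, the inequality $\|p_C-x\|\leq R_i$ (by definition of the cluster's center) gives $\|p_C-z\|\leq R_i+\opt_k$. Hence $P$ is $(k,R_i+\opt_k)$-coverable. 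A merge-counting argument identical to the one in the discrete case shows $|P|>m_i/2>k$, so Lemma~\ref{volume_lem} produces distinct $p_{C_1},p_{C_2}\in P$ with $\|p_{C_1}-p_{C_2}\|\leq 4(R_i+\opt_k)\sqrt[d]{2k/m_i}$. Since $C_1\cup C_2\subseteq\ball_{R_i+\|p_{C_1}-p_{C_2}\|}^d(p_{C_1})$ by the triangle inequality, Observation~\ref{rad_greedy_obs} yields the recurrence
\begin{equation*}
R_{i+1} \;\leq\; R_i(1+4\alpha_i) + 4\alpha_i\,\opt_k, \qquad \alpha_i:=\sqrt[d]{2k/m_i}.
\end{equation*}

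Unlike the additive recurrence of the discrete case, this one is multiplicative in $R_i$, and this is where the exponential dependence on $d$ arises. Starting from $R_0=0$ and unfolding (or telescoping the identity $\prod(1+4\alpha_i)-1=\sum_i 4\alpha_i\prod_{j>i}(1+4\alpha_j)$) gives
\begin{equation*}
R_u \;\leq\; \opt_k\cdot\left(\prod_{i=0}^{u-1}(1+4\alpha_i)-1\right) \;\leq\; \opt_k\cdot\left(\exp\!\left(4\sum_{i=0}^{u-1}\alpha_i\right)-1\right).
\end{equation*}
The sum $\sum_{i=0}^{u-1}\alpha_i$ is then controlled by the very same calculation already carried out in the proof of Proposition~\ref{drad_2k_prop}: from $m_i\geq(3/4)^i|X|$ one obtains a geometric series in $(4/3)^{1/d}$, the choice of $u$ cancels the leading $\sqrt[d]{2k/|X|}$ factor, and $\sqrt[d]{4/3}>1+\ln(4/3)/d$ collapses the geometric sum to $O(d)$. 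Plugging this in yields the claimed bound $\radcost(\C_{2k})<24d\cdot e^{24d}\cdot\opt_k$ (with some slack absorbed into the constants). The main obstacle is therefore the transition from an additive to a multiplicative per-phase recurrence: this blow-up is unavoidable because cluster centers now live in $\R^d$ rather than in $X$, so the distance between two nearby centers must be measured relative to $R_i+\opt_k$ instead of $\opt_k$ alone.
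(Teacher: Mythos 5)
Your proof is correct and reaches the stated bound (in fact a slightly stronger one, roughly $e^{O(d)}\cdot\opt_k$ with a smaller exponent), but the key single-phase step is established by a genuinely different mechanism than the paper's. The paper's Lemma~\ref{rad_2k_phaselem} does \emph{not} use the volume argument at all: it inflates each optimal ball to radius $\opt_k+\radcost(\C_m)$ so that it contains the cluster centers $z_C$, covers each inflated ball by $\lfloor m/(2k)\rfloor$ balls of radius $\varepsilon=\frac{3}{\sqrt[d]{\lfloor m/2k\rfloor}}\bigl(\opt_k+\radcost(\C_m)\bigr)$ via the covering-number bound $\lceil(3/\lambda)^d\rceil$ from \cite{naszodi}, and then pigeonholes two of the $>m/2$ surviving clusters into one covering ball, giving the per-phase factor $6\sqrt[d]{2k/m}$. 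You instead observe that the center set is $(k,\radcost(\C_m)+\opt_k)$-coverable and apply Lemma~\ref{volume_lem} with that inflated radius, which transplants the discrete-case mechanics verbatim and yields the factor $4\sqrt[d]{2k/m}$. (The paper asserts in Section~\ref{rad_2k_sec} that Lemma~\ref{volume_lem} can no longer be applied to the centers; your inflation trick shows it can, at the price of measuring distances relative to $R_i+\opt_k$ --- which is precisely what makes the recurrence multiplicative in both treatments.) From the recurrence onward the two proofs coincide: same phase schedule $m_i$, same product-to-exponential bound, same geometric series giving $\sum_i\alpha_i=O(d)$. Your route is leaner and unifies the discrete and continuous cases without the external covering result; the paper's covering construction is the one that later generalizes to the diameter analysis (the configurations in Lemma~\ref{diam_stage1_phaselem}), so the authors get to reuse it. The only loose end in your write-up is cosmetic: since the $p_C$ now live in $\R^d$ rather than being points of disjoint clusters, two of them could coincide and $P$ could have fewer than $|\C_m\cap\C_{t+1}|$ elements; this is harmless because coinciding centers of two surviving clusters immediately give a merge of radius at most $R_i$, but it deserves a half-sentence.
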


Just as in the analysis of Algorithm~\ref{drad_algo}, we divide the merge steps of Algorithm~\ref{rad_algo} into phases, such that in each phase the number of remaining clusters is reduced by one fourth.
Like in the discrete case, the input points are $(k,\opt_k)$-coverable.
However, centers corresponding to an intermediate solution computed by Algorithm~\ref{rad_algo} need not be covered by the $k$ balls induced by an optimal solution.
As a consequence, we are no longer able to apply Lemma~\ref{volume_lem} on the centers as in the discrete case.

To bound the increase of the cost during a single phase, we cover the remaining clusters at the beginning of a phase by a set of overlapping balls.
Each of the clusters is completely contained in one of these balls that all have the same radius.
Furthermore, the number of remaining clusters will be at least twice the number of these balls.
It follows that there are many pairs of clusters that are contained in the same ball.
Then, as long as the existence of at least one such pair can be guaranteed, the radius of the cluster created next can be bounded by the radius of the balls.
The following lemma will be used to bound the increase of the cost during a single phase.

\begin{lemma}\label{rad_2k_phaselem}
Let $m\in\N$ with \(2k<m\leq|X|\).
Then,
\begin{equation*}\label{rad_2k_phaseeq}
\radcost\left(\C_{\left\lfloor\frac{3m}{4}\right\rfloor}\right)<\left(1+6\sqrt[d]{\frac{2k}{m}}\right)\cdot\radcost(\C_m)+6\sqrt[d]{\frac{2k}{m}}\cdot\opt_k.
\end{equation*}
\end{lemma}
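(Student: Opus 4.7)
The plan is to adapt the approach of Lemma~\ref{drad_2k_phaselem} to the continuous setting. The new obstacle is that the cluster centers $y_C$ for $C \in \C_m$ need not lie in $X$ and therefore are not directly covered by the $k$ optimal balls, so the volume argument cannot be applied to them in one shot as in the discrete case. I would replace that single application by a covering construction: cover all of the $y_C$ by a family of $N$ balls of some carefully chosen radius $\rho'$ with $N < m/2$, so that every cluster $C \in \C_m$ lies entirely in a ball of radius $R+\rho'$ (with $R := \radcost(\C_m)$), and then use the pigeonhole principle to guarantee that throughout the phase there is always a pair of clusters inside a common covering ball, producing a cheap merge.

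First I would fix, for each $C \in \C_m$, a center $y_C \in \R^d$ with $C \subset \ball_R^d(y_C)$, and then show that $Y := \{y_C : C \in \C_m\}$ is $(k, R+\opt_k)$-coverable. Indeed, $X$ admits an optimal cover by balls $\ball_{\opt_k}^d(z_1^*),\ldots,\ball_{\opt_k}^d(z_k^*)$, and for each $C$ one can choose a point $p_C \in C$ lying in some such ball; combining $\|p_C - z_{i(C)}^*\| \leq \opt_k$ with $\|y_C - p_C\| \leq R$ via the triangle inequality places each $y_C$ within distance $R + \opt_k$ of some $z_i^*$.

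Next I would set $\rho' := 4(R+\opt_k)\sqrt[d]{2k/m}$ and greedily select a maximal $\{z_1, \ldots, z_N\} \subset Y$ with pairwise distances strictly greater than $\rho'$. By maximality, every $y_C$ is within distance $\rho'$ of some $z_i$, so each cluster $C \in \C_m$ is contained in $\ball_{R+\rho'}^d(z_i)$. To bound $N$ by $m/2$ I would split into two cases: if $N \leq k$ then $N \leq k < m/2$ follows from the hypothesis $m > 2k$; if instead $N > k$ then Lemma~\ref{volume_lem} applied to the $(k, R+\opt_k)$-coverable set $\{z_1, \ldots, z_N\}$ yields two of its points at distance at most $4(R+\opt_k)\sqrt[d]{k/N}$, and since all such pairwise distances exceed $\rho'$ this forces $4(R+\opt_k)\sqrt[d]{k/N} > 4(R+\opt_k)\sqrt[d]{2k/m}$, i.e.\ $N < m/2$.

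Finally I would analyze the $\lceil m/4 \rceil$ merges taking $\C_m$ to $\C_{\lfloor 3m/4 \rfloor}$. Throughout every such step the number of remaining clusters strictly exceeds $N$, so by pigeonhole two of them share a common $\ball_{R+\rho'}^d(z_i)$ and therefore have a union of radius at most $R+\rho'$; by Observation~\ref{rad_greedy_obs} the cluster actually produced by the greedy merge inherits the same bound. Hence every cluster in $\C_{\lfloor 3m/4 \rfloor}$ has radius at most $R + \rho' = (1 + 4\sqrt[d]{2k/m})\radcost(\C_m) + 4\sqrt[d]{2k/m}\cdot\opt_k$, which is strictly smaller than the bound claimed in the lemma (the extra slack between the constants $4$ and $6$ absorbs rounding and the strictness of the inequality). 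I expect the main delicate point to be the calibration of $\rho'$ so that the volume argument in the case $N > k$ just barely delivers $N < m/2$; once this is set, the pigeonhole bookkeeping through the entire phase, and the verification that the same covering family continues to contain the progressively merged clusters without ever needing to be refreshed, should be routine.
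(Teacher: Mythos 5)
Your proof is correct, and it reaches the same high-level structure as the paper's (cover the intermediate cluster centers by fewer than $\nicefrac{m}{2}$ balls, then pigeonhole over the more than $\nicefrac{m}{2}$ clusters of $\C_m$ surviving into $\C_{\lfloor\nicefrac{3m}{4}\rfloor+1}$), but the covering is built by a genuinely different mechanism. The paper first places each center $z_C$ into one of the $k$ inflated optimal balls $\ball_{\opt_k+\radcost(\C_m)}^d(y_i)$ and then invokes an external covering theorem (Nasz\'odi): a ball of radius $\rho$ can be covered by $\lceil(3/\lambda)^d\rceil$ balls of radius $\lambda\rho$, with $\lambda$ tuned so that each optimal ball splits into $\lfloor\nicefrac{m}{2k}\rfloor$ pieces, giving $k\ell\leq\nicefrac{m}{2}$ covering balls of radius $\varepsilon<6\sqrt[d]{\nicefrac{2k}{m}}(\opt_k+\radcost(\C_m))$. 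You instead take a maximal $\rho'$-separated subset of the centers themselves and bound its size $N<\nicefrac{m}{2}$ by running Lemma~\ref{volume_lem} in contrapositive (a packing with all pairwise distances exceeding $4r\sqrt[d]{\nicefrac{k}{N}}$ inside a $(k,r)$-coverable set cannot have more than the corresponding number of points). This is self-contained --- it reuses the paper's own volume lemma rather than citing a covering result --- and it yields the slightly better constant $4$ in place of $6$, which makes the claimed inequality (with constant $6$) strict since $\opt_k>0$ whenever $|X|>k$. Your closing worry about ``refreshing'' the covering family is moot: the pigeonhole is only ever applied to clusters of $\C_m$ that survive unmerged, and those remain inside their original covering balls, exactly as in the paper.
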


\begin{proof}
Let \(\cP=\{P_1,\ldots,P_k\}\) be an optimal $k$-clustering of $X$.
We fix $y_1,\ldots,y_k\in\R^d$ such that $P_i\subset\ball_{\opt_k}(y_i)$ for $i=1,\ldots,k$.
For any $C\in\C_m$ let $z_C\in\R^d$ such that $C\subset\ball_{\radcost(\C_m)}(z_C)$.
It follows that each $z_C$ is contained in at least one of the balls $\ball_{\opt_k+\radcost(\C_m)}(y_i)$ for $i=1,\ldots,k$ (see Figure~\ref{rad_2k_phasefig1}).

\begin{figure}
\begin{center}
	\psset{xunit=0.08cm,yunit=0.08cm,runit=0.08cm}
	\begin{pspicture}(0,0)(100,40)

 		\pscircle*(100,0){0.6}
 		\pscircle*(33,16){0.6}
 		\pscircle*(20,20){0.6}
 		\pscircle(20,20){15}
 		\put(101,-2){$y_i$}
 		\put(21,18){$z_C$}

		\psarc[linestyle=solid]{<->}(100,0){70}{147}{180}
		\psarc[linestyle=dashed]{<->}(100,0){84}{152}{180}

		\psline{->}(100,0)(40,36)
		\psline{->}(20,20)(8,11)
		\psline[linestyle=dashed]{->}(100,0)(16,3)

 		\rput{-29}(64,25){$opt_k$}
 		\rput{-2}(60,4.5){$opt_k+\radcost(\C_m)$}
 		\put(-20,20){$\radcost(\C_m)$}
		\psline[linewidth=0.1]{-}(3,20)(13,16)

	\end{pspicture}
\end{center}
\caption{Intermediate centers.}
\label{rad_2k_phasefig1}
\end{figure}
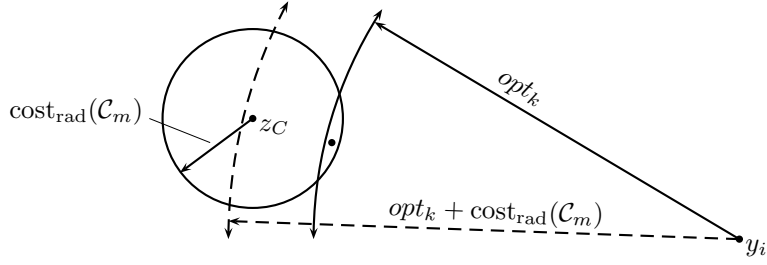

For \(\lambda\in\R\) with \(\lambda>0\) a ball of radius $\opt_k+\radcost(\C_m)$ can be covered by $\left\lceil\left(\frac{3}{\lambda}\right)^d\right\rceil$ balls of radius $\lambda\left(\opt_k+\radcost(\C_m)\right)$ (see \cite{naszodi}).
Choosing $\lambda=\frac{3}{\sqrt[d]{\left\lfloor\frac{m}{2k}\right\rfloor}}$, we get that each of the balls $\ball_{\opt_k+\radcost(\C_m)}(y_i)$ for $i=1,\ldots,k$ can be covered by \(\ell:=\left\lfloor\frac{m}{2k}\right\rfloor\leq\frac{m}{2k}\) balls of radius
\(\varepsilon=\frac{3}{\sqrt[d]{\left\lfloor\frac{m}{2k}\right\rfloor}}\left(\opt_k+\radcost(\C_m)\right)\).
Therefore, there exist \(k\cdot\ell\leq\frac{m}{2}\) balls $B_1,\ldots,B_{k\ell}$ of radius $\varepsilon$ such that each $z_C$ for $C\in\C_m$ is contained in at least one of these balls.
For $i=1,\ldots,k\ell$ let $a_i\in\R^d$ such that $B_i=\ball_\varepsilon(a_i)$.
Then, any cluster $C\in\C_m$ is contained in at least one of the balls $A_1,\ldots,A_{k\ell}$ with $A_i=\ball_{\radcost(\C_m)+\varepsilon}(a_i)$ for $i=1,\ldots,k\ell$ (see Figure~\ref{rad_2k_phasefig2}).
 
\begin{figure}[h]
\begin{center}
	\psset{xunit=0.08cm,yunit=0.08cm,runit=0.08cm}
	\begin{pspicture}(-30,0)(120,50)

 		\psarc*[linecolor=lightgray](45,5){45}{-7}{187}

 		\psarc*[linecolor=gray](45,5){30}{-10}{190}
 		\pscircle*(45,5){0.6}
 		\put(42,2){$a_i$}
		\psline{->}(45,5)(75,8)
 		\put(58,3){$\varepsilon$}
		\psline{->}(45,5)(30,47)
 		\put(33,42){$\radcost(\C_m)+\varepsilon$}

 		\pscircle*(20,15){0.6}
 		\pscircle(20,15){15}
		\psline{->}(20,15)(8,24)
 		\put(21,13){$z_{C_1}$}
		\psline[linewidth=0.1]{-}(-2,15)(13,19)
 		\put(-25,13){$\radcost(\C_m)$}

 		\pscircle*(60,25){0.6}
 		\pscircle(60,25){15}
 		\put(61,23){$z_{C_2}$}

		\psarc[linestyle=dashed]{<->}(60,-40){70}{45}{145}
		\psline[linestyle=dashed]{->}(79,0)(90,23)
 		\put(82,3){$opt_k+\radcost(\C_m)$}

	\end{pspicture}
\end{center}
\caption{Covering centers and clusters.}
\label{rad_2k_phasefig2}
\end{figure}
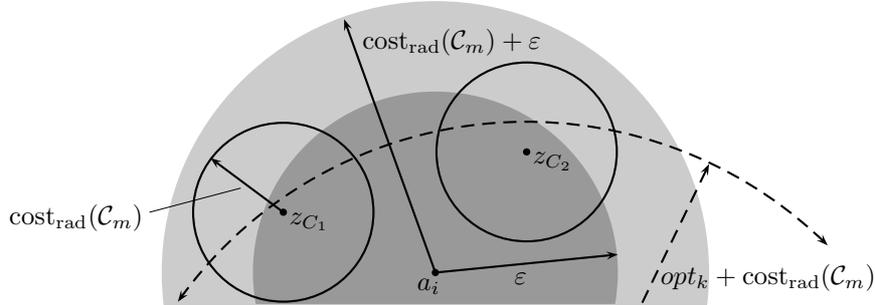

Let $t:=\left\lfloor\frac{3m}{4}\right\rfloor$ and $\C_m\cap \C_{t+1}$ be the set of clusters from $\C_m$ that still exist $\left\lceil\frac{m}{4}\right\rceil-1$ merge steps after the computation of $\C_m$.
In each iteration of its loop, the algorithm can merge at most two clusters from $\C_m$.
Thus, \(|\C_m\cap \C_{t+1}|>\frac{m}{2}\).

Since $k\ell\leq\frac{m}{2}$, there exist two clusters $C_1,C_2\in\C_m\cap \C_{t+1}$ that are contained in the same ball $A_i$ with $i\in\{1,\ldots,k\ell\}$.
Therefore, merging clusters $C_1$ and $C_2$ would result in a cluster whose radius can be upper bounded by $\rad(C_1\cup C_2)\leq\radcost(\C_m)+\varepsilon$.
Using Observation~\ref{rad_greedy_obs} and the fact that $C_1$ and $C_2$ are part of the clustering $\C_{t+1}$, we can upper bound the cost of $\C_t$ by
\[\radcost(\C_t)\leq\radcost(\C_m)+\varepsilon.\]

It remains to show \(\varepsilon<6\sqrt[d]{\frac{2k}{m}}\cdot\bigl(\opt_k+\radcost(\C_m)\bigr)\).
Since $\frac{m}{2k}>1$, we have $\frac{m}{2k}<2\left\lfloor\frac{m}{2k}\right\rfloor$.
Thus, $\sqrt[d]{\frac{m}{2k}}<\sqrt[d]{2\left\lfloor\frac{m}{2k}\right\rfloor}\leq2\sqrt[d]{\left\lfloor\frac{m}{2k}\right\rfloor}$ and $\frac{3}{\sqrt[d]{\left\lfloor\frac{m}{2k}\right\rfloor}}<6\sqrt[d]{\frac{2k}{m}}$.
\end{proof}

To prove Proposition~\ref{rad_2k_prop}, we apply Lemma~\ref{rad_2k_phaselem} for $\left\lceil\log_\frac{4}{3}\frac{|X|}{2k}\right\rceil$ consecutive phases.
\begin{proof}[Proof of Proposition~\ref{rad_2k_prop}]
Let \(u:=\left\lceil\log_\frac{4}{3}\frac{|X|}{2k}\right\rceil\) and define \(m_i:=\left\lceil\left(\frac{3}{4}\right)^i|X|\right\rceil\) for all $i=0,\ldots,u$.
Then, \(m_u\leq2k\) and \(m_i>2k\) for all $i=0,\ldots,u-1$.
Analogously to the proof of Proposition~\ref{drad_2k_prop}, we get \(\left\lfloor\frac{3m_i}{4}\right\rfloor\leq m_{i+1}\) and using Lemma~\ref{rad_2k_phaselem}, we deduce \(\radcost(\C_{m_{i+1}})<\left(1+6\sqrt[d]{\frac{2k}{m_i}}\right)\cdot\radcost(\C_{m_i})+6\sqrt[d]{\frac{2k}{m_i}}\cdot\opt_k\) for all $i=0,\ldots,u-1$.
By repeatedly applying this inequality and using \(\radcost(\C_{2k})\leq\radcost(\C_{m_u})\) and \(\radcost(\C_{m_0})=0\), we get
\begin{align}
\nonumber
\radcost\left(\C_{2k}\right)&<\sum_{i=0}^{u-1}\left(6\sqrt[d]{\frac{2k}{m_i}}\cdot\prod_{j=i+1}^{u-1}\left(1+6\sqrt[d]{\frac{2k}{m_j}}\right)\right)\cdot\opt_k\\
\label{rad_2k_prop_mi}
&\leq6\sqrt[d]{\frac{2k}{|X|}}\cdot\sum_{i=0}^{u-1}\left(\left(\frac{4}{3}\right)^\frac{i}{d}\cdot\prod_{j=i+1}^{u-1}\left(1+6\sqrt[d]{\frac{2k}{|X|}}\cdot\left(\frac{4}{3}\right)^\frac{j}{d}\right)\right)\cdot\opt_k\\
\label{rad_2k_prop_renum}
&=6\sqrt[d]{\frac{2k}{|X|}}\left(\frac{4}{3}\right)^\frac{u-1}{d}\cdot\sum_{i=0}^{u-1}\left(\left(\frac{3}{4}\right)^\frac{i}{d}\cdot\prod_{j=u-i}^{u-1}\left(1+6\sqrt[d]{\frac{2k}{|X|}}\cdot\left(\frac{4}{3}\right)^\frac{u-1}{d}\left(\frac{3}{4}\right)^\frac{u-1-j}{d}\right)\right)\cdot\opt_k.
\end{align}
Here, we obtain (\ref{rad_2k_prop_mi}) using $m_i\geq\left(\frac{3}{4}\right)^i|X|$ and we obtain (\ref{rad_2k_prop_renum}) by substituting $u-1-i$ for $i$.
Using $u-1<\log_\frac{4}{3}\frac{|X|}{2k}$, we deduce
\begin{equation}\label{rad_2k_propeq}
\radcost\left(\C_{2k}\right)<6\sum_{i=0}^{u-1}\left(\left(\frac{3}{4}\right)^\frac{i}{d}\cdot\prod_{j=0}^{i-1}\left(1+6\left(\frac{3}{4}\right)^\frac{j}{d}\right)\right)\cdot\opt_k.
\end{equation}
By taking only the first two terms of the series expansion of the exponential function, we get
\(1+6\left(\frac{3}{4}\right)^\frac{j}{d}<e^{6\left(\frac{3}{4}\right)^\frac{j}{d}}\) and therefore
\begin{equation}\label{rad_2k_prodeq}
\prod_{j=0}^{i-1}\left(1+6\left(\frac{3}{4}\right)^\frac{j}{d}\right)<\prod_{j=0}^{i-1}\left(e^{6\left(\frac{3}{4}\right)^\frac{j}{d}}\right)=e^{6\sum_{j=0}^{i-1}\left(\frac{3}{4}\right)^\frac{j}{d}}.
\end{equation}
The sum in the exponent can be bounded by the infinite geometric series
\begin{equation}\label{rad_2k_sumeq}
\sum_{j=0}^\infty\left(\frac{3}{4}\right)^\frac{j}{d}<\frac{1}{1-\left(\frac{3}{4}\right)^\frac{1}{d}}\leq4d,
\end{equation}
where the last inequality follows by upper bounding the convex function $f(x)=\left(\frac{3}{4}\right)^x$ in the interval $[0,1]$ by the line through $f(0)$ and $f(1)$.
Putting Inequalities (\ref{rad_2k_propeq}), (\ref{rad_2k_prodeq}) and (\ref{rad_2k_sumeq}) together then gives
\[\radcost\left(\C_{2k}\right)<6\sum_{i=0}^{u-1}\left(\left(\frac{3}{4}\right)^\frac{i}{d}\cdot e^{24d}\right)\cdot\opt_k<24d\cdot e^{24d}\cdot\opt_k,\]
where the last inequality follows by using Inequality~(\ref{rad_2k_sumeq}) again.
\end{proof}

\subsubsection{Connected instances}\label{connected_sec}
The analysis of the remaining merge steps from the discrete $k$-center case (cf. Section~\ref{drad_rem_sec}) is not transferable to the $k$-center case.
Again, as in the proof of Proposition~\ref{rad_2k_prop}, we are no longer able to derive a simple additive bound on the increase of the cost when merging two clusters.
In order to preserve the logarithmic dependency of the approximation factor on $k$, we show that it is sufficient to analyze Algorithm~\ref{rad_algo} on a subset $Y\subseteq X$ satisfying a certain connectivity property.
Using this property, we are able to apply a combinatorial approach that relies on the number of merge steps left.

We start by defining the connectivity property that will be used to relate clusters to an optimal $k$-clustering.
\begin{definition}
Let \(Z\subseteq\R^d\) and $r\in\R$.
Two sets $A,B\subseteq\R^d$ are called \emph{$(Z,r)$-connected} if there exists a $z\in Z$ with \(\ball_r^d(z)\cap A\neq\varnothing\) and \(\ball_r^d(z)\cap B\neq\varnothing\).
\end{definition}
Note that for any two $(Z,r)$-connected clusters $A,B$, we have
\begin{equation}\label{connect_eq}
\rad(A\cup B)\leq\rad(A)+2\cdot\rad(B)+2r.
\end{equation}

Next, we show that for any input set $X$ we can bound the cost of the $k$-clustering computed by Algorithm~\ref{rad_algo} by the cost of the $\ell$-clustering computed by the algorithm on a connected subset $Y\subseteq X$ for a proper $\ell\leq k$.
Recall that by our convention from the beginning of Section~\ref{rad_sec}, the clusterings computed by Algorithm~\ref{rad_algo} on a particular input set are uniquely determined.

\begin{lemma}\label{connect_lem}
Let \(X\subset\R^d\) be finite and $k\in\N$ with $k\leq|X|$.
Then, there exists a subset $Y\subseteq X$, a number $\ell\in\N$ with $\ell\leq k$ and $\ell\leq|Y|$, and a set \(Z\subset\R^d\) with \(|Z|=\ell\) such that:
\begin{enumerate}
\item\label{connect_property1}  $Y$ is $(\ell,\opt_k)$-coverable;
\item\label{connect_property2}  \(\radcost(\C_k)\leq\radcost(\cP_\ell)\);
\item\label{connect_property3} For all $n\in\N$ with $\ell+1\leq n\leq|Y|$, every cluster in $\cP_n$ is $(Z,\opt_k)$-connected to another cluster in $\cP_n$.
\end{enumerate}
Here, the collection $\cP_1,\ldots,\cP_{|Y|}$ denotes the hierarchical clustering computed by Algorithm~\ref{rad_algo} on input $Y$.
\end{lemma}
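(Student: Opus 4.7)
The plan is an iterative construction that progressively removes isolated clusters. Initialize $Y:=X$, $\ell:=k$, and $Z:=\{y_1,\dots,y_k\}$, where the $y_i$ are centers of an optimal $k$-center solution for $X$. Then (1) holds by optimality of the $y_i$ and (2) holds with equality since $\cP_k$ on $Y=X$ coincides with $\C_k$. While (3) fails, let $n^*$ be the largest level in $\{\ell+1,\dots,|Y|\}$ for which some $C\in\cP_{n^*}$ fails to be $(Z,\opt_k)$-connected to any other cluster of $\cP_{n^*}$, and set $Z_C:=\{z\in Z:\ball_{\opt_k}(z)\cap C\neq\varnothing\}$. Update $Y\leftarrow Y\setminus C$, $\ell\leftarrow\ell-|Z_C|$, $Z\leftarrow Z\setminus Z_C$. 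Since $|Y|$ strictly decreases, the loop terminates, and (3) holds upon termination by construction.

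Property (1) is preserved: the isolation of $C$ at level $n^*$ implies that for every $z\in Z_C$ the ball $\ball_{\opt_k}(z)$ meets no cluster of $\cP_{n^*}$ other than $C$ and is therefore disjoint from $Y\setminus C$; hence $Y\setminus C$ is covered by the $\ell-|Z_C|$ balls $\{\ball_{\opt_k}(z):z\in Z\setminus Z_C\}$.

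The real obstacle is (2). Two structural facts drive the argument. First, the merge radii of Algorithm~\ref{rad_algo} are non-decreasing in step index: any pair of clusters available at step $j+1$ either was already available at step $j$, in which case its union-radius is $\geq r_j$ by the greedy minimality, or involves the newly created cluster $A\cup B$ and therefore has radius $\rad(A\cup B\cup D)\geq\rad(A\cup D)\geq r_j$. Second, the within-$(Y\setminus C)$ merges performed by the algorithm on $Y$ are also performed, in the same order, by the algorithm on $Y\setminus C$; only their positions in the merge sequence shift, because $C$-involving merges are skipped. Using the isolation of $C$ to constrain how the $Y$-algorithm interleaves $C$-involving merges with within-$(Y\setminus C)$ merges, one then argues by a counting argument on merge positions that the merge in the $(Y\setminus C)$-algorithm creating the $(\ell-|Z_C|)$-clustering has radius at least as large as the merge in the $Y$-algorithm creating the $\ell$-clustering; by induction on the iteration count this radius is at least $\radcost(\C_k)$, preserving (2). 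Carrying out this bookkeeping precisely, especially under the greedy tie-breaking convention fixed at the start of Section~\ref{rad_sec}, is the technical crux of the proof.
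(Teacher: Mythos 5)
Your construction diverges fundamentally from the paper's, and the divergence is exactly where the unproven step sits. The paper does not prune clusters from arbitrary levels of the hierarchy on $Y$; it chooses $Y$ \emph{once}, as the union of a minimal subfamily $E\subseteq\C_{k+1}$ (taken from the run on $X$) whose union is $(|E|-1,\opt_k)$-coverable, and sets $\ell=|E|-1$. Because $Y$ is a union of \emph{top-level} clusters of the run on $X$, every merge down to $\C_{k+1}$ is either entirely inside $Y$ or entirely inside $X\setminus Y$, so the run on $Y$ literally reproduces the within-$Y$ sub-run and $\cP_{\ell+1}=E\subseteq\C_{k+1}$; property 2 then follows from a single greedy comparison (the minimum over the subfamily $\cP_{\ell+1}$ is at least the minimum over $\C_{k+1}$), and property 3 follows from the minimality of $E$. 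Your proposal instead enforces property 3 by iterated deletion and must then re-establish property 2 after each deletion.

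That re-establishment is the gap, and I do not believe the sketch closes it. Your second ``structural fact'' --- that the within-$(Y\setminus C)$ merges of the run on $Y$ are performed, in the same order, by the run on $Y\setminus C$ --- is only justified down to level $n^*$, where $C$ is still a single cluster. Below level $n^*$ the run on $Y$ merges $C$ (or its superclusters) with clusters of $Y\setminus C$, and from that point on the two runs see different candidate pairs with different radii: the run on $Y\setminus C$ may, for instance, cheaply merge a cluster $D$ that in the run on $Y$ was already absorbed into $C\cup D$. Isolation of $C$ is a statement about $\opt_k$-balls and does not prevent $C$ from being geometrically close to its neighbours, so it does not control this interleaving. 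Greedy agglomeration is not monotone under deletion of points, and the asserted inequality $\radcost(\cP'_{\ell-|Z_C|})\geq\radcost(\cP_\ell)$ is precisely the kind of statement that fails for such algorithms; at minimum it requires a proof, and none is given beyond the phrase ``a counting argument on merge positions.'' (Two smaller loose ends: you never verify that $\ell-|Z_C|\leq|Y\setminus C|$ is maintained, and the non-decreasing-merge-radii fact, which you do prove correctly, is not by itself enough to compare two different runs.) I would recommend abandoning the pruning scheme and instead extracting $Y$ directly from $\C_{k+1}$ via the minimal-coverable-subfamily argument.
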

\begin{proof}
To define $Y,Z$, and $\ell$ we consider the $(k+1)$-clustering computed by Algorithm~\ref{rad_algo} on input $X$.
We know that \(X=\bigcup_{A\in \C_{k+1}}A\) is $(k,\opt_k)$-coverable.
Let \(E\subseteq\C_{k+1}\) be a minimal subset such that \(\bigcup_{A\in E}A\) is $(|E|-1,\opt_k)$-coverable, i.e., for all sets \(F\subseteq \C_{k+1}\) with $|F|<|E|$ the union \(\bigcup_{A\in F}A\) is not $(|F|-1,\opt_k)$-coverable.
Since a set $F$ of size $1$ cannot be $(|F|-1,\opt_k)$-coverable, we get \(|E|\geq2\).

Let $Y:=\bigcup_{A\in E}A$ and $\ell:=|E|-1$.
Then, $\ell\leq k$ and $Y$ is \((\ell,\opt_k)\)-coverable.
This establishes property \ref{connect_property1}.

It follows that there exists a set \(Z\subset\R^d\) with \(|Z|=\ell\) and \(Y\subset\bigcup_{z\in Z}\ball_{\opt_k}^d(z)\).
Furthermore, we let $\cP_1,\ldots,\cP_{|Y|}$ be the hierarchical clustering computed by Algorithm~\ref{rad_algo} on input $Y$.

Since $Y$ is the union of the clusters from $E\subseteq\C_{k+1}$, each merge step between the computation of $\C_{|X|}$ and $\C_{k+1}$ merges either two clusters $A,B\subset Y$ or two clusters $A,B\subset X\setminus Y$.
The merge steps inside $X\setminus Y$ have no influence on the clusters inside $Y$.
Furthermore, the merge steps inside $Y$ would be the same in the absence of the clusters inside $X\setminus Y$.
Therefore, on input $Y$, Algorithm~\ref{rad_algo} computes the $(\ell+1)$-clustering \(\cP_{\ell+1}=E=\C_{k+1}\cap2^Y\).
Thus, $\cP_{\ell+1}\subseteq\C_{k+1}$.

To compute $\cP_\ell$, on input $Y$, Algorithm~\ref{rad_algo} merges two clusters from $\cP_{\ell+1}$ that minimize the radius of the resulting cluster.
Analogously, on input $X$, Algorithm~\ref{rad_algo} merges two clusters from $\C_{k+1}$ to compute $\C_k$.
Since $\cP_{\ell+1}\subseteq\C_{k+1}$, Observation~\ref{rad_greedy_obs} implies \(\radcost(\C_k)\leq\radcost(\cP_{\ell})\), thus proving property \ref{connect_property2}.

It remains to show that for all $n\in\N$ with $\ell+1\leq n\leq|Y|$ it holds that every cluster in $\cP_n$ is $(Z,\opt_k)$-connected to another cluster in $\cP_n$ (property \ref{connect_property3}).
By the definition of $Z$, ever cluster in $\cP_n$ intersects at least one ball \(\ball_{\opt_k}^d(z)\) for $z\in Z$.
Therefore, it is enough to show that each ball \(\ball_{\opt_k}^d(z)\) intersects at least two clusters from $\cP_n$.
We first show this property for $n=\ell+1$.
For $\ell=1$ this follows from the fact that \(\ball_{\opt_k}^d(z)\) with \(Z=\{z\}\) has to contain both clusters from $\cP_2$.
For $\ell>1$, we are otherwise able to remove one cluster from $\cP_{\ell+1}$ and get $\ell$ clusters whose union is $(\ell-1,\opt_k)$-coverable.
This contradicts the definition of \(E=\cP_{\ell+1}\) as a minimal subset with this property.

To show property \ref{connect_property3} for general $n$, let $C_1\in\cP_n$ and $z\in Z$ with \(\ball_{\opt_k}^d(z)\cap C_1\neq\varnothing\).
There exists a unique cluster $\tilde C_1\in\cP_{\ell+1}$ with $C_1\subseteq\tilde C_1$.
Then, we have \(\ball_{\opt_k}^d(z)\cap \tilde C_1\neq\varnothing\).
However, we have just shown that \(\ball_{\opt_k}^{d}(z)\) has to intersect at least two clusters from $\cP_{\ell+1}$.
Thus, there exists another cluster $\tilde C_2\in\cP_{\ell+1}$ with \(\ball_{\opt_k}^d(z)\cap\tilde C_2\neq\varnothing\).
Since every cluster from $\cP_{\ell+1}$ is a union of clusters from $\cP_n$, there exists at least one cluster $C_2\in\cP_n$ with $C_2\subseteq\tilde C_2$ and \(\ball_{\opt_k}^d(z)\cap C_2\neq\varnothing\).
\end{proof}

\subsubsection{Analysis of the remaining merge steps}\label{rad_rem_sec}
Let $Y,Z,\ell$, and $\cP_1,\ldots,\cP_{|Y|}$ be as given by Lemma~\ref{connect_lem}.
Then, Proposition~\ref{rad_2k_prop} can be used to obtain an upper bound for the cost of $\cP_{2\ell}$.
In the following, we analyze the merge steps leading from $\cP_{2\ell}$ to $\cP_{\ell+1}$ and show how to obtain an upper bound for the cost of $\cP_{\ell+1}$.
As in Section~\ref{rad_2k_sec}, we analyze the merge steps in phases.
The following lemma is used to bound the increase of the cost during a single phase.
Note that $\opt_k$ still refers to the cost of an optimal solution on input $X$, not $Y$.

\begin{lemma}\label{rad_rem_phaselem}
Let $m,n\in\N$ with \(n\leq2\ell\) and \(\ell<m\leq n\leq|Y|\).
If there are no two $(Z,\opt_k)$-connected clusters in $\cP_m\cap\cP_n$, it holds
\[\radcost(\cP_{\left\lfloor\frac{m+\ell}{2}\right\rfloor})\leq\radcost(\cP_m)+2\cdot\radcost(\cP_n)+2\opt_k.\]
\end{lemma}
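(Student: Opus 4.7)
Let $t=\lfloor(m+\ell)/2\rfloor$. By Observation~\ref{rad_greedy_obs} it suffices to exhibit two distinct clusters $A,B\in\cP_{t+1}$ with $\rad(A\cup B)\leq \radcost(\cP_m)+2\radcost(\cP_n)+2\opt_k$, since the greedy merge producing $\cP_t$ achieves at most this cost. Set $R:=\cP_m\cap\cP_n$. Because $(Z,\opt_k)$-connectedness of two clusters means that they share a ball $\ball_{\opt_k}^d(z)$ for some $z\in Z$, the hypothesis forbids two $R$-clusters from sharing such a ball, and hence $|R|\leq|Z|=\ell$. Since each merge step removes at most two untouched clusters, one obtains $|\cP_m\cap\cP_{t+1}|\geq 2t+2-m\geq \ell+1$ (using $t\geq(m+\ell-1)/2$), and analogously $|\cP_n\cap\cP_{t+1}|\geq 2t+2-n\geq m-\ell+1\geq 1$ (using $n\leq 2\ell$).

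The plan is to find $A\in(\cP_m\setminus\cP_n)\cap\cP_{t+1}$ and $B\in\cP_n\cap\cP_{t+1}$ that are $(Z,\opt_k)$-connected. Such an $A$ satisfies $\rad(A)\leq\radcost(\cP_m)$ because $A\in\cP_m$, and $B$ satisfies $\rad(B)\leq\radcost(\cP_n)$ because $B\in\cP_n$. Hence inequality~(\ref{connect_eq}), with $A$ playing the role of the ``center'' cluster, yields $\rad(A\cup B)\leq\rad(A)+2\rad(B)+2\opt_k\leq\radcost(\cP_m)+2\radcost(\cP_n)+2\opt_k$, which is the required estimate. To locate such a pair I invoke property~\ref{connect_property3} of Lemma~\ref{connect_lem} at level~$m$: every cluster of $\cP_m$ is $(Z,\opt_k)$-connected to another cluster of $\cP_m$, and the lemma's hypothesis then forces each $C\in R$ to have its $\cP_m$-partner $C'$ in $\cP_m\setminus R=\cP_m\setminus\cP_n$. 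What remains is to lift such a partnership to $\cP_{t+1}$, that is, to choose $C\in R\cap\cP_{t+1}=\cP_n\cap\cP_{t+1}$ together with a $\cP_m$-partner $C'\in(\cP_m\setminus\cP_n)\cap\cP_{t+1}$.

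The hardest step is guaranteeing this survival. The number of $\cP_m$-clusters that disappear on the way to $\cP_{t+1}$ is at most $2(m-t-1)\leq m-\ell-1$, which is strictly smaller than $|\cP_n\cap\cP_{t+1}|\geq m-\ell+1$, so only a limited number of medium clusters can be merged away. Combined with the ``private ball'' structure imposed by the hypothesis (each of the $\ell$ balls covering $Y$ hosts at most one cluster from $R$, and in particular at most one from $\cP_n\cap\cP_{t+1}$), a pigeonhole argument on the $\ell$ balls should force some ball to host simultaneously a cluster from $\cP_n\cap\cP_{t+1}$ and a cluster from $(\cP_m\setminus\cP_n)\cap\cP_{t+1}$, producing the desired pair. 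The delicate point is that a priori such a ball might instead host two medium clusters and no old one, a configuration that would only yield the weaker bound $3\radcost(\cP_m)+2\opt_k$; ruling out this configuration using the size inequalities above is, I expect, the main combinatorial work of the proof.
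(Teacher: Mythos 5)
Your setup matches the paper's in its outer structure (reducing to Observation~\ref{rad_greedy_obs}, invoking property~\ref{connect_property3} of Lemma~\ref{connect_lem} at level $m$ so that each $C\in\cP_m\cap\cP_n$ has a partner in $\cP_m\setminus\cP_n$, and the survival counts $|\cP_m\cap\cP_{t+1}|\geq 2t+2-m$ versus the at most $2(m-t-1)\leq m-\ell-1$ clusters destroyed), but the combinatorial core is missing, as you yourself acknowledge, and the specific route you propose does not go through. You insist on exhibiting a \emph{directly} $(Z,\opt_k)$-connected surviving pair with $A\in(\cP_m\setminus\cP_n)\cap\cP_{t+1}$ and $B\in\cP_n\cap\cP_{t+1}$. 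Such a pair need not exist: several clusters of $\cP_m\cap\cP_n$ may all have the \emph{same} unique partner $B\in\cP_m\setminus\cP_n$, each through its own private ball, and if $B$ is merged away before level $t+1$, then no surviving cluster of $\cP_m\setminus\cP_n$ is connected to any surviving cluster of $\cP_n$. Property~\ref{connect_property3} applied at level $t+1$ only connects the survivors to newly formed clusters, whose radii you cannot charge to $\radcost(\cP_m)$. No pigeonhole on the $\ell$ balls repairs this, because the obstruction is not the "two medium clusters, no old one" configuration you single out, but the shared-partner configuration above.

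The missing device is the paper's second type of pair: two clusters $A_1,A_2\in\cP_m\cap\cP_n$ that are $(Z,\opt_k)$-connected to the \emph{same} $B\in\cP_m\setminus\cP_n$ also satisfy $\rad(A_1\cup A_2)\leq\radcost(\cP_m)+2\cdot\radcost(\cP_n)+2\opt_k$, by routing through $B$: a point of $A_i$ is within $2\rad(A_i)+2\opt_k$ of a point of $B$, hence within $2\rad(A_i)+2\opt_k+\rad(B)$ of $B$'s center, and crucially this is a purely geometric bound that does not require $B$ to survive as a cluster of $\cP_{t+1}$. Extracting first such sibling pairs and then matching each leftover $C\in\cP_m\cap\cP_n$ to a distinct partner in $\cP_m\setminus\cP_n$ yields at least $\left\lceil\frac{|\cP_m\cap\cP_n|}{2}\right\rceil$ disjoint pairs of clusters of $\cP_m$, each with union radius at most the target bound. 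The count $m\leq\frac{n}{2}+\frac{|\cP_m\cap\cP_n|}{2}$ together with $n\leq2\ell$ gives at least $m-\ell$ such pairs, of which at most $2(m-t-1)\leq m-\ell-1$ can be destroyed before reaching $\cP_{t+1}$; one therefore survives and Observation~\ref{rad_greedy_obs} finishes the proof. Without the sibling pairs and their bridge bound, the counting cannot be closed, so what you have is a genuine gap rather than a routine omission.
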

\begin{proof}
We show that there exist at least $m-\ell$ disjoint pairs of clusters from $\cP_m$ such that the radius of their union can be upper bounded by \(\radcost(\cP_m)+2\cdot\radcost(\cP_n)+2\opt_k\).
By Observation~\ref{rad_greedy_obs}, this upper bounds the cost of the computed clusterings as long as such a pair of clusters remains.
Then, the lemma follows from the fact that in each iteration of its loop the algorithm can destroy at most two of these pairs.
To bound the number of these pairs of clusters, we start with a structural observation.
$\cP_m\cap \cP_n$ is the set of clusters from $\cP_n$ that still exist in $\cP_m$.
By our definition of $Y,Z$, and $\ell$, we conclude that any cluster $A\in\cP_m\cap\cP_n$ is $(Z,\opt_k)$-connected to another cluster \(B\in \cP_m\).
If we assume that there are no two $(Z,\opt_k)$-connected clusters in $\cP_m\cap\cP_n$, this implies \(B\in \cP_m\setminus\cP_n\) (see Figure~\ref{rad_rem_fig}).
Thus, using \(A\in\cP_n\), \(B\in\cP_m\), and Inequality~(\ref{connect_eq}), the radius of $A\cup B$ can be bounded by
\begin{equation}\label{rad_rem_phasebound1}
\rad(A\cup B)\leq\radcost(\cP_m)+2\cdot\radcost(\cP_n)+2\opt_k.
\end{equation}
Moreover, using a similar argument, we derive the same bound for two clusters $A_1,A_2\in\cP_m\cap\cP_n$ that are $(Z,\opt_k)$-connected to the same cluster \(B\in \cP_m\setminus\cP_n\).
That is,
\begin{equation}\label{rad_rem_phasebound2}
\rad(A_1\cup A_2)\leq\radcost(\cP_m)+2\cdot\radcost(\cP_n)+2\opt_k.
\end{equation}

\begin{figure}
\begin{center}
	\psset{xunit=0.08cm,yunit=0.08cm,runit=0.08cm}
	\begin{pspicture}(0,0)(180,30)

 		\psarc*[linecolor=gray](50,5){25}{-12}{192}
		\psline{->}(50,5)(40,28)
 		\put(46,17){$\opt_k$}

 		\psarc*[linecolor=gray](125,-5){25}{11}{169}

		\psarc{<->}(90,0){20}{0}{180}
		\psline{->}(90,0)(76,14)
 		\put(81,10){$\radcost(\cP_m)$}
 		\put(100,19){\Large$B$}

		\pscircle(22,18){10}
		\pscircle(140,20){10}
		\psline{->}(140,20)(147,27)
		\psline[linewidth=0.1]{-}(144,22)(153,20)
 		\put(154,19){$\radcost(\cP_n)$}
 		\put(10,6){\Large$A_1$}
 		\put(148,11){\Large$A_2$}

		\pscircle*(73,6){0.6}
		\pscircle*(105,3){0.6}
		\pscircle*(29,15){0.6}
		\pscircle*(140,13){0.6}

	\end{pspicture}
\end{center}
\caption{Merging $(Z,\opt_k)$-connected clusters.}
\label{rad_rem_fig}
\end{figure}
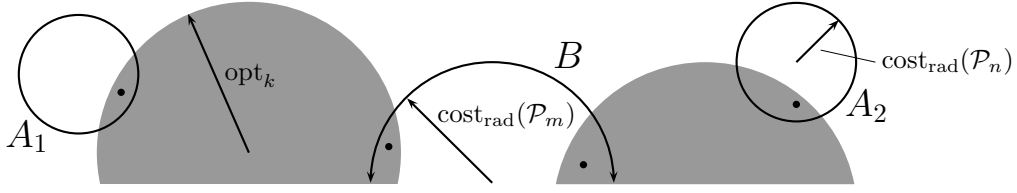

Next, we show that there exist at least $\left\lceil\frac{|\cP_m\cap\cP_n|}{2}\right\rceil$ disjoint pairs of clusters from $\cP_m$ such that the radius of their union can be bounded either by Inequality~(\ref{rad_rem_phasebound1}) or by Inequality~(\ref{rad_rem_phasebound2}).
To do so, we first consider the pairs of clusters from $\cP_m\cap\cP_n$ that are $(Z,\opt_k)$-connected to the same cluster from \(\cP_m\setminus\cP_n\) until no candidates are left.
For these pairs, we can bound the radius of their union by Inequality~(\ref{rad_rem_phasebound2}).
Then, each cluster from \(\cP_m\setminus\cP_n\) is $(Z,\opt_k)$-connected to at most one of the remaining clusters from $\cP_m\cap\cP_n$. 
Thus, each remaining cluster $A\in\cP_m\cap\cP_n$ can be paired with a different cluster \(B\in \cP_m\setminus\cP_n\) such that $A$ and $B$ are $(Z,\opt_k)$-connected.
For these pairs, we can bound the radius of their union by Inequality~(\ref{rad_rem_phasebound1}).
Since for all pairs either one or both of the clusters come from the set $\cP_m\cap\cP_n$, we can lower bound the number of pairs by $\left\lceil\frac{|\cP_m\cap\cP_n|}{2}\right\rceil$.

To complete the proof, we show that \(m-\ell\leq\left\lceil\frac{|\cP_m\cap\cP_n|}{2}\right\rceil\).
In each iteration of its loop, the algorithm can merge at most two clusters from $\cP_n$.
Therefore, there are at least $\left\lceil\frac{n-|\cP_m\cap\cP_n|}{2}\right\rceil$ merge steps between the computations of $\cP_n$ and $\cP_m$.
Hence, \(m\leq n-\left\lceil\frac{n-|\cP_m\cap\cP_n|}{2}\right\rceil\leq\frac{n}{2}+\frac{|\cP_m\cap\cP_n|}{2}\).
Using \(n\leq2\ell\), we get \(m-\ell\leq\frac{|\cP_m\cap\cP_n|}{2}\).
\end{proof}

\begin{lemma}\label{rad_rem_lem}
Let $n\in\N$ with \(n\leq2\ell\) and \(\ell<n\leq|Y|\).
Then,
\[\radcost(\cP_{\ell+1})<2(\log_2(\ell)+2)\cdot\left(\radcost(\cP_n)+\opt_k\right).\]
\end{lemma}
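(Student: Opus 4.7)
The plan is to mimic Proposition~\ref{rad_2k_prop} by iterating Lemma~\ref{rad_rem_phaselem} through phases in which the excess $m-\ell$ shrinks by a factor of two per phase, terminating at $\ell+1$. The main complication is that, by property~\ref{connect_property3} of Lemma~\ref{connect_lem}, every cluster of $\cP_n$ is $(Z,\opt_k)$-connected to another cluster of $\cP_n$, so the hypothesis of Lemma~\ref{rad_rem_phaselem} fails at $m=n$ and the phase lemma cannot be applied at the very first step.

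To circumvent this I would argue along the algorithm's actual trajectory. Let $m^{\ast}$ be the smallest index with $\ell+1\leq m^{\ast}\leq n$ for which Lemma~\ref{rad_rem_phaselem}'s hypothesis fails at $\cP_{m^{\ast}}$; by the observation above such an $m^{\ast}$ exists and satisfies $m^{\ast}\geq\ell+2$. Since the hypothesis fails, $\cP_{m^{\ast}}$ contains two $(Z,\opt_k)$-connected clusters $A,B\in\cP_{m^{\ast}}\cap\cP_n$, both of radius at most $\radcost(\cP_n)$. Inequality~(\ref{connect_eq}) gives $\rad(A\cup B)\leq 3\radcost(\cP_n)+2\opt_k$, and combining the greedy rule with Observation~\ref{rad_greedy_obs} yields
\begin{equation*}
\radcost(\cP_{m^{\ast}-1})\;\leq\;3\radcost(\cP_n)+2\opt_k.
\end{equation*}

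By minimality of $m^{\ast}$, the hypothesis of Lemma~\ref{rad_rem_phaselem} now holds at every $\cP_m$ with $\ell+1\leq m\leq m^{\ast}-1$, so I can iterate the phase lemma starting from $\cP_{m^{\ast}-1}$ without further obstruction. Setting $m_0:=m^{\ast}-1$ and $m_{i+1}:=\lfloor(m_i+\ell)/2\rfloor$, the identity $m_{i+1}-\ell=\lfloor(m_i-\ell)/2\rfloor$ gives $m_i-\ell=\lfloor(m_0-\ell)/2^i\rfloor$, and since $m_0-\ell\leq n-1-\ell\leq\ell-1$ by $n\leq 2\ell$, we reach $m_v=\ell+1$ after at most $v\leq\lfloor\log_2(\ell-1)\rfloor<\log_2\ell$ phases. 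Telescoping the bound of Lemma~\ref{rad_rem_phaselem} across these phases yields
\begin{equation*}
\radcost(\cP_{\ell+1})\;\leq\;\radcost(\cP_{m^{\ast}-1})+2v\bigl(\radcost(\cP_n)+\opt_k\bigr)\;<\;(2\log_2\ell+3)\bigl(\radcost(\cP_n)+\opt_k\bigr),
\end{equation*}
which is strictly smaller than the required $2(\log_2\ell+2)(\radcost(\cP_n)+\opt_k)$.

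The main obstacle is translating the combinatorial hypothesis of Lemma~\ref{rad_rem_phaselem} (a statement about $\cP_m\cap\cP_n$) into something exploitable along the algorithm's trajectory; the ``last bad merge'' viewpoint makes this bookkeeping clean, since by the minimality of $m^{\ast}$ no subsequent application of the phase lemma can be obstructed. The remaining telescoping and the arithmetic for $v$ are then routine, in the spirit of the proofs of Proposition~\ref{rad_2k_prop} and Theorem~\ref{drad_result}.
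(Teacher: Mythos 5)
Your proof is correct and follows essentially the same route as the paper: your minimal failing index $m^{\ast}$ is exactly $\tilde{n}+1$ for the paper's maximal index $\tilde{n}$ at which the hypothesis of Lemma~\ref{rad_rem_phaselem} still holds, your seed bound on $\radcost(\cP_{m^{\ast}-1})$ obtained by merging two $(Z,\opt_k)$-connected clusters of $\cP_{m^{\ast}}\cap\cP_n$ plays the role of the paper's Inequality~(\ref{rad_rem_lemeq}) (your constant $3\radcost(\cP_n)+2\opt_k$ is slightly weaker than the paper's $2\radcost(\cP_n)+\opt_k$ but still suffices), and the halving phases down to $\ell+1$ are identical. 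The only blemish is the unjustified assertion $m^{\ast}\geq\ell+2$, which can fail; but in that case $\radcost(\cP_{\ell+1})\leq\radcost(\cP_{m^{\ast}-1})\leq3\radcost(\cP_n)+2\opt_k$ already yields the claim, so nothing is lost.
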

\begin{proof}
For $n=\ell+1$ there is nothing to show.
Hence, assume $n>\ell+1$.
Then, by definition of $Z$, there exist two $(Z,\opt_k)$-connected clusters in $\cP_n$.
Now let $\tilde{n}\in\N$ with $\tilde{n}<n$ be maximal such that no two $(Z,\opt_k)$-connected clusters exist in $\cP_{\tilde{n}}\cap \cP_n$.
The number $\tilde{n}$ is well-defined since $|\cP_1|=1$ implies $\tilde{n}\geq1$.
It follows that the same holds for all $m\in\N$ with $m\leq\tilde{n}$.
We conclude that Lemma~\ref{rad_rem_phaselem} is applicable for all $m\in\N$ with $\ell<m\leq\tilde{n}$.

By the definition of $\tilde{n}$ there still exist at least two $(Z,\opt_k)$-connected clusters in $\cP_{\tilde{n}+1}\cap\cP_n$.
Then, Observation~\ref{rad_greedy_obs} implies
\begin{equation}\label{rad_rem_lemeq}
\radcost(\cP_{\tilde{n}})\leq2\cdot\radcost(\cP_n)+\opt_k.
\end{equation}
If $\tilde{n}\leq\ell+1$ then Inequality~(\ref{rad_rem_lemeq}) proves the lemma.
For $\tilde{n}>\ell+1$ let \(u:=\left\lceil\log_2(\tilde{n}-\ell)\right\rceil\) and define \(m_i:=\left\lceil\left(\frac{1}{2}\right)^i(\tilde{n}-\ell)+\ell\right\rceil>\ell\) for all $i=0,\ldots,u$.
Then, \(m_0=\tilde{n}\) and \(m_u=\ell+1\).
Furthermore, we obtain
\begin{align*}
\textstyle\left\lfloor\frac{m_i+\ell}{2}\right\rfloor
&\textstyle=\left\lfloor\frac{1}{2}\left\lceil\left(\frac{1}{2}\right)^i(\tilde{n}-\ell)+\ell\right\rceil+\frac{\ell}{2}\right\rfloor\leq\left\lfloor\frac{1}{2}\left(\left(\frac{1}{2}\right)^i(\tilde{n}-\ell)+\ell+1\right)+\frac{\ell}{2}\right\rfloor\\
&\textstyle=\left\lfloor\left(\frac{1}{2}\right)^{i+1}(\tilde{n}-\ell)+\ell+\frac{1}{2}\right\rfloor\leq\left\lceil\left(\frac{1}{2}\right)^{i+1}(\tilde{n}-\ell)+\ell\right\rceil=m_{i+1}.
\end{align*}
Since Algorithm~\ref{rad_algo} uses a greedy strategy, we deduce \(\radcost(\cP_{m_{i+1}})\leq\radcost(\cP_{\left\lfloor\frac{m_i+\ell}{2}\right\rfloor})\) for all $i=0,\ldots,u-1$.
Combining this with Lemma~\ref{rad_rem_phaselem} (applied to $m=m_i$), we obtain
\[\radcost(\cP_{m_{i+1}})\leq\radcost(\cP_{m_i})+2\cdot\radcost(\cP_n)+2\opt_k.\]
By repeatedly applying this inequality for $i=0,\ldots,u-1$ and summing up the costs, we get
\[\radcost(\cP_{m_u})<\radcost(\cP_{\tilde{n}})+2u\cdot\left(\radcost(\cP_n)+\opt_k\right)\stackrel{(\ref{rad_rem_lemeq})}{<}2(u+1)\cdot\left(\radcost(\cP_n)+\opt_k\right).\]
Since $\tilde{n}<2\ell$, we get \(u<\log_2(\ell)+1\) and the lemma follows using \(m_u=\ell+1\).
\end{proof}

The following lemma finishes the analysis except for the last merge step.
\begin{lemma}\label{rad_allstages}
Let \(Y\subset\R^d\) be finite and $\ell\leq|Y|$ such that $Y$ is $(\ell,\opt_k)$-coverable.
Furthermore, let \(Z\subset\R^d\) with $|Z|=\ell$ such that for all $n\in\N$ with $\ell+1\leq n\leq|Y|$ every cluster in $\cP_n$ is $(Z,\opt_k)$-connected to another cluster in $\cP_n$,  where $\cP_1,\ldots,\cP_{|Y|}$ denotes the hierarchical clustering computed by Algorithm~\ref{rad_algo} on input $Y$.
Then,
\[\radcost(\cP_{\ell+1})<2(\log_2(\ell)+2)\cdot\left(24d\cdot e^{24d}+1\right)\cdot\opt_k.\]
\end{lemma}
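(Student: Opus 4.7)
I would prove this by chaining Proposition~\ref{rad_2k_prop} and Lemma~\ref{rad_rem_lem}. The key observation for the first step is that although Proposition~\ref{rad_2k_prop} is phrased with the global optimum $\opt_k$ of $X$ in mind, its proof only uses that the algorithm's input is coverable by $k'$ balls whose radius equals the $\opt$-parameter. Hence I can instantiate it on the input $Y$ with the parameter $\ell$ in place of $k$: provided $2\ell \leq |Y|$, this gives
\[
\radcost(\cP_{2\ell}) < 24d\cdot e^{24d}\cdot \opt^Y_\ell,
\]
where $\opt^Y_\ell$ denotes the optimal radius cost of an $\ell$-clustering of $Y$. Since $Y$ is $(\ell,\opt_k)$-coverable by hypothesis, the $\ell$ covering balls themselves induce an $\ell$-clustering of $Y$ of radius cost at most $\opt_k$, so $\opt^Y_\ell \leq \opt_k$ and $\radcost(\cP_{2\ell}) < 24d\cdot e^{24d}\cdot \opt_k$.

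I would then split on the size of $Y$. If $|Y| \geq 2\ell$, I apply Lemma~\ref{rad_rem_lem} with $n = 2\ell$; the hypotheses $n\leq 2\ell$ and $\ell < n \leq |Y|$ are all met, and combining with the bound above gives
\[
\radcost(\cP_{\ell+1}) < 2(\log_2(\ell)+2)\bigl(\radcost(\cP_{2\ell})+\opt_k\bigr) < 2(\log_2(\ell)+2)(24d\cdot e^{24d}+1)\cdot\opt_k,
\]
which is exactly the target bound. In the remaining small-$Y$ case $\ell+1 \leq |Y| < 2\ell$, Proposition~\ref{rad_2k_prop} no longer applies, but Lemma~\ref{rad_rem_lem} with $n = |Y|$ still does (indeed $\ell < |Y| \leq 2\ell$); since $\cP_{|Y|}$ is the initial all-singleton partition of $Y$, we have $\radcost(\cP_{|Y|})=0$, and the lemma yields $\radcost(\cP_{\ell+1}) < 2(\log_2(\ell)+2)\cdot\opt_k$, which is dominated by the target bound.

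I do not anticipate a real obstacle at this stage: the heavy lifting has already been done by the covering-plus-volume argument inside Proposition~\ref{rad_2k_prop} (which takes $\cP_{|Y|}$ down to $\cP_{2\ell}$) and by the $(Z,\opt_k)$-connectivity phase argument inside Lemma~\ref{rad_rem_lem} (which handles the last $\ell$ merge steps). The only point that deserves a careful sentence is the re-instantiation of Proposition~\ref{rad_2k_prop} on the subset $Y$ with the global parameter $\opt_k$ in place of the local optimum $\opt^Y_\ell$, and this is immediate from the assumed $(\ell,\opt_k)$-coverability of $Y$.
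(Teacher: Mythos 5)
Your proof is correct and takes essentially the same route as the paper's: the paper likewise sets $n:=\min(|Y|,2\ell)$, invokes Proposition~\ref{rad_2k_prop} (implicitly re-instantiated on input $Y$ with $\ell$ in place of $k$, which is justified by the $(\ell,\opt_k)$-coverability of $Y$) to get $\radcost(\cP_n)<24d\cdot e^{24d}\cdot\opt_k$, and then concludes with Lemma~\ref{rad_rem_lem}. Your explicit case split on $|Y|$ versus $2\ell$ and the observation that $\opt^Y_\ell\leq\opt_k$ are precisely the details the paper's one-line proof leaves implicit.
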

\begin{proof}
Let $n:=\min(|Y|, 2\ell)$.
Then, using Proposition~\ref{rad_2k_prop}, we get \(\radcost(\cP_n)<24d\cdot e^{24d}\cdot\opt_k\).
The lemma follows by using this bound in combination with Lemma~\ref{rad_rem_lem}.
\end{proof}

\subsubsection{Proof of Theorem~\ref{rad_result}}\label{rad_proof}
Using Lemma~\ref{connect_lem}, we know that there is a subset $Y\subseteq X$, a number $\ell\leq k$, and a hierarchical clustering $\cP_1,\ldots,\cP_{|Y|}$ of $Y$ with \(\radcost(\C_k)\leq\radcost(\cP_{\ell})\).
Furthermore, there is a set $Z\subset\R^d$ such that every cluster from $\cP_{\ell+1}$ is $(Z,\opt_k)$-connected to another cluster in $\cP_{\ell+1}$.
Thus, $\cP_{\ell+1}$ contains two clusters $A,B$ that intersect with the same ball of radius $\opt_k$.
Hence
\[\radcost(\C_k)\leq\rad(A\cup B)\leq 2\cdot\radcost(\cP_{\ell+1})+\opt_k.\]
The theorem follows using Lemma~\ref{rad_allstages} and $\ell\leq k$.
\qed

\subsection{Diameter $k$-clustering}\label{diam_sec}
In this section, we analyze the agglomerative complete linkage clustering algorithm for Problem~\ref{diam_prob} stated as Algorithm~\ref{diam_algo}.
\begin{table}[t]
\centering
\begin{minipage}{.7\textwidth}
\footnotesize
\hrule
\textsc{AgglomerativeCompleteLinkage}$(X)$:\\
\begin{tabular}{rl}
$X$ & finite set of input points from $\R^d$\\
\end{tabular}
\hrule
\begin{algorithmic}[1]
\State $\C_{|X|}:=\left\{\,\{x\}\,|\,x\in X\right\}$
\For{$i=|X|-1,\ldots,1$}
\State find distinct clusters $A,B\in \C_{i+1}$ minimizing \(\diam(A\cup B)\)\label{diam_algo_min}
\State $\C_i:=(\C_{i+1}\setminus\{A,B\})\,\cup\,\{A\cup B\}$
\EndFor
\State\Return $\C_1,\ldots,\C_{|X|}$
\end{algorithmic}
\hrule
\end{minipage}
\vspace{1em}
\caption{The agglomerative complete linkage clustering algorithm.}
\label{diam_algo}
\end{table}
Again, the only difference to Algorithm~\ref{drad_algo} and \ref{rad_algo} is the minimization of the diameter in Step~\ref{diam_algo_min}.
As in the analysis of Algorithm~\ref{rad_algo}, we may assume that for any input set $X$ the clusterings computed by Algorithm~\ref{diam_algo} are uniquely determined, i.e. the minimum in Step~\ref{diam_algo_min} is always unambiguous.

Note that in this section \emph{cost} always means diameter cost and $\opt_k$ refers to the cost of an optimal diameter $k$-clustering of $X\subset\R^d$ where $k\in\N$ with \(k\leq|X|\).
Analogously to the (discrete) radius case, any cluster $C$ is contained in a ball of radius $\diam(C)$ and thus the set $X$ is $(k,\opt_k)$-coverable.

\begin{observation}[analogous to Observation~\ref{drad_greedy_obs} and \ref{rad_greedy_obs}]\label{diam_greedy_obs}
The cost of all computed clusterings is equal to the diameter of the cluster created last.
Furthermore, the diameter of the union of any two clusters is always an upper bound for the cost of the clustering to be computed next.
\end{observation}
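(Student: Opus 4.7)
The plan is to prove both assertions by downward induction on $i$, with the stronger inductive claim that $\diamcost(\C_i)$ equals the diameter of the cluster that was created in the merge step producing $\C_i$. The base case $i = |X|-1$ is immediate: all clusters of $\C_{|X|}$ are singletons of diameter $0$, so the only cluster of $\C_{|X|-1}$ with positive diameter is the one just merged.

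For the inductive step, fix $i < |X|-1$, let $D$ denote the cluster created when $\C_{i+1}$ was formed (so by the inductive hypothesis $\diamcost(\C_{i+1}) = \diam(D)$), and let $A^*, B^* \in \C_{i+1}$ be the pair chosen by Algorithm~\ref{diam_algo} to form $\C_i$. Every remaining cluster $C \in \C_{i+1} \setminus \{A^*, B^*\}$ satisfies $\diam(C) \leq \diam(D)$ by the hypothesis, so the task reduces to proving $\diam(A^* \cup B^*) \geq \diam(D)$. I would settle this by a case distinction on whether $D \in \{A^*, B^*\}$. If yes, monotonicity of the diameter under set inclusion gives the inequality for free. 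If no, then $A^*$ and $B^*$ are both different from $D$, and hence were already present in $\C_{i+2}$, because only the two clusters that were merged into $D$ disappear between $\C_{i+2}$ and $\C_{i+1}$. Merging $A^*$ and $B^*$ was therefore a valid candidate move during the previous iteration, and the greedy rule of Algorithm~\ref{diam_algo} forces $\diam(D) \leq \diam(A^* \cup B^*)$.

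Given the first assertion, the second assertion follows immediately: for arbitrary $A, B \in \C_{i+1}$, the greedy rule yields $\diam(A^* \cup B^*) \leq \diam(A \cup B)$, while the first part identifies $\diamcost(\C_i)$ with $\diam(A^* \cup B^*)$. I do not foresee any real technical obstacle; the only subtle point is the observation that surviving clusters in $\C_{i+1}$ outside $\{A^*, B^*\}$ must have been present already in $\C_{i+2}$, which is precisely what allows us to invoke the greedy choice made one step earlier and close the induction.
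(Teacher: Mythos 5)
Your proof is correct and is essentially the justification the paper leaves implicit (the Observation is stated without proof): downward induction using monotonicity of the diameter under inclusion when the last-created cluster participates in the new merge, and the greedy rule applied one step earlier when it does not, with the second assertion following from the greedy choice plus the first. The one subtle point you flag --- that any pair of clusters of $\C_{i+1}$ other than the newly created one already existed in $\C_{i+2}$ and was therefore a candidate merge --- is exactly the right observation to close the induction.
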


The following theorem states our main result.
\begin{theorem}\label{diam_result}
Let \(X\subset\R^d\) be a finite set of points.
Then, for all $k\in\N$ with $k\leq|X|$, the partition $\C_k$ of $X$ into $k$ clusters as computed by Algorithm~\ref{diam_algo} satisfies
\[\diamcost(\C_k)=O(\log k)\cdot\opt_k,\]
where $\opt_k$ denotes the cost of an optimal solution to Problem~\ref{diam_prob}, and the constant hidden in the $O$-notation is doubly exponential in the dimension $d$.
\end{theorem}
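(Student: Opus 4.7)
The plan is to follow the two-step strategy used in the proofs of Theorem~\ref{drad_result} and Theorem~\ref{rad_result}: first establish an upper bound on $\diamcost(\C_{2k})$ that depends only on $d$, and then analyze the remaining merge steps from $\C_{2k}$ down to $\C_k$ to gain the $O(\log k)$ factor. As before, $X$ is $(k, \opt_k)$-coverable by balls of radius $\opt_k$, since any cluster of diameter $\opt_k$ fits inside a ball of radius $\opt_k$, so Lemma~\ref{volume_lem} applies.

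For the $2k$-clustering bound, I would prove an analogue of Proposition~\ref{rad_2k_prop} showing $\diamcost(\C_{2k}) \leq g(d) \cdot \opt_k$ for some function $g(d)$ doubly exponential in $d$. Following Section~\ref{rad_2k_sec}, the merge steps would be partitioned into phases that each reduce the number of remaining clusters by a factor of $3/4$. In each phase, Nasz\'odi's covering lemma would be used to cover each of the $k$ optimal balls of radius $\opt_k$ by many smaller balls of radius $\varepsilon$, and a pigeonhole argument would produce two current clusters whose designated enclosing balls coincide; merging them bounds the cost of the next step via Observation~\ref{diam_greedy_obs}. The critical obstacle here is what might be called the \emph{factor-two problem}: two clusters that both fit into a ball of radius $r$ have union of diameter at most $2r$, rather than at most $r$ as in the radius case. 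A direct transcription of the radius proof would yield a phase recurrence of the form $\diamcost(\C_{\lfloor 3m/4 \rfloor}) \leq 2 \cdot \diamcost(\C_m) + (\text{additive})$, and iterating $\Theta(\log(|X|/k))$ times would produce a bound depending on $|X|$ rather than only on $d$. To avoid this, the phase lemma must take the form
\[\diamcost(\C_{\lfloor 3m/4\rfloor}) \leq (1 + c(d)\sqrt[d]{2k/m}) \cdot \diamcost(\C_m) + c(d)\sqrt[d]{2k/m} \cdot \opt_k,\]
which requires a finer scale-based analysis that ensures the two merged clusters sit inside a ball whose radius is much smaller than $\diamcost(\C_m)$, so that the dominant contribution to the merged diameter is purely additive. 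I expect this to be where the doubly exponential dependence on $d$ enters.

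For the remaining merge steps, the strategy would mirror Sections~\ref{connected_sec}--\ref{rad_proof}. I would introduce a diameter analogue of $(Z,r)$-connectedness, with corresponding bound $\diam(A \cup B) \leq \diam(A) + \diam(B) + 2r$ whenever $A$ and $B$ are $(Z,r)$-connected (via the triangle inequality through the common small ball). Then I would prove a connectivity lemma analogous to Lemma~\ref{connect_lem}, reducing the analysis to a subset $Y \subseteq X$ and an index $\ell \leq k$ such that every cluster in the hierarchical clustering of $Y$ at each level $n$ with $\ell+1 \leq n \leq |Y|$ is $(Z, \opt_k)$-connected to another cluster at the same level. A phase lemma analogous to Lemma~\ref{rad_rem_phaselem} would then bound the cost growth in each of $O(\log \ell) = O(\log k)$ phases by an additive $O(\diamcost(\cP_n) + \opt_k)$ term, which combined with the $2k$-clustering bound yields the desired $O(\log k) \cdot \opt_k$ approximation. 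The last merge step is handled as in Section~\ref{rad_proof}: since $\cP_{\ell+1}$ contains two $(Z,\opt_k)$-connected clusters $A,B$, we get $\diamcost(\C_k) \leq \diamcost(\cP_\ell) \leq 2\diamcost(\cP_{\ell+1}) + 2\opt_k$.

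The main obstacle is the factor-two problem in the $2k$-clustering bound. In the radius case, a cluster of radius $r$ fits exactly into a ball of radius $r$, so merging two clusters inside a common ball does not inflate cost multiplicatively. For diameter in a general norm, a cluster of diameter $D$ may need a ball of radius $D$ (not $D/2$) to enclose it, and naive merging inflates cost by a factor of~$2$. Controlling this blow-up without letting $|X|$ enter the final bound is the technical heart of the argument and is responsible for the jump from a singly exponential constant in Theorem~\ref{rad_result} to a doubly exponential one in Theorem~\ref{diam_result}.
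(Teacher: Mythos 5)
Your overall architecture is the paper's: bound $\diamcost(\C_{2k})$ by a function of $d$ alone, then run the connectivity machinery of Sections~\ref{connected_sec}--\ref{rad_proof} with the diameter analogue $\diam(A\cup B)\leq\diam(A)+\diam(B)+2r$ of Inequality~(\ref{connect_eq}) to get the $O(\log k)$ factor; that second half of your proposal is essentially correct and matches Section~\ref{diam_rem_sec}. You also correctly isolate the factor-two problem and correctly state the form the phase recurrence must take. But the mechanism you propose for proving that recurrence cannot work. You want ``the two merged clusters [to] sit inside a ball whose radius is much smaller than $\diamcost(\C_m)$'': a set of diameter $D$ cannot be contained in any ball of radius below $D/2$, so two clusters of diameter up to $\diamcost(\C_m)$ can only share an enclosing ball of radius at least $\diamcost(\C_m)/2$, and \emph{any} common-enclosing-ball argument therefore yields $\diam(C_1\cup C_2)\geq$ roughly $2\cdot\diamcost(\C_m)$ in the worst case. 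Refining the Nasz\'odi covering does not help; the factor $2$ is intrinsic to the ``both clusters in one ball'' certificate.

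The paper's actual key idea, which is missing from your proposal, replaces containment by \emph{translation similarity} (Lemma~\ref{diam_stage1_phaselem}). Fix an arbitrary reference point $p_C$ in each cluster $C\in\C_m$, cover $\ball_R^d(0)$ (with $R=\diamcost(\C_m)$) by $\rho=\lceil(3/\lambda)^d\rceil$ balls of radius $\lambda R$, and record for each cluster its \emph{configuration}, i.e.\ which of these small balls $C-p_C$ intersects. If $C_a$ and $C_b$ have the same configuration, then for any $u\in C_a$ and $v\in C_b$ one finds $w\in C_b$ with $w-p_{C_b}$ in the same small ball as $u-p_{C_a}$, and the triangle inequality gives $\|u-v\|\leq\|p_{C_a}-p_{C_b}\|+2\lambda R+\diam(C_b)$, i.e.\ a $(1+2\lambda)$ multiplicative loss plus the distance between reference points, which Lemma~\ref{volume_lem} (applied to the reference points of the $>k$ clusters sharing one configuration) makes additive. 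This pigeonhole needs $m>2^{\rho+1}k$, so the argument only carries the analysis down to $2^{\sigma+1}k$ clusters with $\sigma=(42d)^d$, at cost $O(d)\cdot\opt_k$. A \emph{second stage} is then unavoidable: from $2^{\sigma+1}k$ down to $2k$ clusters the paper falls back on exactly the doubling recurrence you wanted to avoid, but now over only $O(\sigma)$ phases, giving the factor $2^{3\sigma}=2^{3(42d)^d}$ of Proposition~\ref{diam_2k_prop}. So the doubly exponential constant enters through this bounded run of doubling phases, not through a finer covering in the phase lemma; without the configuration argument and the two-stage split, your bound on $\diamcost(\C_{2k})$ would depend on $|X|$ and the proof does not close.
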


As in the proof of Theorem~\ref{drad_result} and \ref{rad_result}, we first show a bound for the cost of the intermediate $2k$-clustering.
However, we have to apply a different analysis again.
This time, the new analysis results in a bound that depends doubly exponential on the dimension.

\subsubsection{Analysis of the $2k$-clustering}\label{diam_2k_sec}
\begin{proposition}\label{diam_2k_prop}
Let \(X\subset\R^d\) be finite.
Then, for all $k\in\N$ with $2k\leq|X|$, the partition $\C_{2k}$ of $X$ into $2k$ clusters as computed by Algorithm~\ref{diam_algo} satisfies
\[\diamcost(\C_{2k})<2^{3\sigma}\left(28d+6\right)\cdot\opt_k,\]
where \(\sigma=(42d)^d\) and $\opt_k$ denotes the cost of an optimal solution to Problem~\ref{diam_prob}.
\end{proposition}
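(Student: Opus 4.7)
My plan follows the structure of the proof of Proposition~\ref{rad_2k_prop}: partition the merges of Algorithm~\ref{diam_algo} into phases, establish a phase lemma bounding the cost after each phase, and assemble the final bound via a geometric series.

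For the setup, fix an optimal $k$-clustering $\{P_1,\dots,P_k\}$ of $X$ with $P_i\subseteq\ball_{\opt_k}^d(y_i)$, and for each cluster $C\in\C_m$ pick a representative point $p_C\in C$. Then $C\subseteq\ball_{\diamcost(\C_m)}^d(p_C)$ and each $p_C$ lies in some optimal ball. Using the covering theorem cited in the proof of Lemma~\ref{rad_2k_phaselem} with parameter $\lambda=1/(14d)$, each optimal ball can be covered by exactly $\sigma=(42d)^d$ sub-balls of radius $\opt_k/(14d)$. Whenever two clusters $C_1,C_2\in\C_m$ have their representatives in the same sub-ball, both clusters are contained in a ball of radius $\diamcost(\C_m)+\opt_k/(14d)$ centered at the sub-ball's center, so
\[\diam(C_1\cup C_2)\le 2\,\diamcost(\C_m)+\tfrac{\opt_k}{7d}.\]

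The central new obstacle compared with Lemma~\ref{rad_2k_phaselem} is the leading coefficient of $2$ on $\diamcost(\C_m)$: the analogous factor in the radius case was $(1+6\sqrt[d]{2k/m})$, which tends to $1$ as $m$ grows, whereas here it is bounded away from $1$. A direct iteration of the radius-case argument would therefore produce a bound polynomial in $|X|/(2k)$, which is not acceptable.

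To circumvent this, I plan to use a hierarchy of covers obtained by iterating the same covering theorem: at level $j$ the cover consists of $k\sigma^j$ sub-balls of radius $\opt_k/(14d)^j$. Phase $j$ operates on the level-$j$ cover, merging pairs of ``untouched'' clusters whose representatives share a common level-$j$ sub-ball, continuing until fewer than $k\sigma^j$ untouched clusters remain. The phase lemma I aim to prove then takes the form
\[\diamcost(\C_{m'})\le 2\,\diamcost(\C_m)+\frac{2\,\opt_k}{(14d)^j},\qquad m'\approx\frac{m+k\sigma^j}{2},\]
using the pigeonhole argument sketched above together with Observation~\ref{diam_greedy_obs}. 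Scheduling the phases carefully (increasing the level as the cluster count shrinks), the multiplicative factors of $2$ compound into an overall blow-up of at most $2^{3\sigma}$, while the additive terms $2\,\opt_k/(14d)^j$ are inflated by the subsequent factors of $2$ but are controlled via a geometric series analogous to Inequality~(\ref{rad_2k_sumeq}) of the radius case. After summation this produces the $(28d+6)\,\opt_k$ contribution.

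The main technical obstacle is the phase lemma itself: designing a schedule that simultaneously (i) keeps enough untouched clusters available at each level for the pigeonhole argument to apply, (ii) keeps the total number of phases small enough so that the cumulative multiplicative factor telescopes to $2^{3\sigma}$ rather than diverging with $|X|$, and (iii) coheres cleanly through the transitions between levels. Once this phase lemma is in place, the proof of Proposition~\ref{diam_2k_prop} follows by a routine telescoping computation patterned on the steps (\ref{rad_2k_propeq})--(\ref{rad_2k_sumeq}) of the proof of Proposition~\ref{rad_2k_prop}.
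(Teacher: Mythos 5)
There is a genuine gap, and it sits exactly at the obstacle you correctly identified: the multiplicative factor of $2$ in front of $\diamcost(\C_m)$. Your hierarchical-cover scheme does not remove it. That factor comes from the fact that two clusters whose representatives lie in a common sub-ball can still extend in opposite directions, so $\diam(C_1\cup C_2)$ can be close to $2\,\diamcost(\C_m)$ even if $p_{C_1}=p_{C_2}$; shrinking the radius of the sub-balls at higher levels only shrinks the \emph{additive} term, which was never the problem. Since your phase lemma reduces the cluster count from $m$ to $m'\approx(m+k\sigma^j)/2\ge m/2$ while doubling the cost, going from $|X|$ clusters down to $2k$ requires $\Omega(\log_2(|X|/k))$ phases no matter how the levels are scheduled, and the compounded factor is $\Omega(|X|/k)$ rather than $2^{3\sigma}$. (Also note the pigeonhole constraint forces you to \emph{decrease} the level as the cluster count shrinks, since level $j$ needs more than $k\sigma^j$ surviving clusters.) No schedule satisfying your conditions (i)--(iii) exists, because any argument whose per-merge bound is $2\,\diamcost(\C_m)+(\text{small})$ inherently loses a factor of $2$ per halving of the cluster count.

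The missing idea in the paper is a notion of \emph{similarity of clusters up to translation}. For each $C\in\C_m$ one translates $C$ by $-p_C$ into $\ball_R^d(0)$ with $R=\diamcost(\C_m)$, covers $\ball_R^d(0)$ by $\rho=\lceil(3/\lambda)^d\rceil$ balls of radius $\lambda R$, and records which of these balls $C-p_C$ meets (its ``configuration''). If $C_a,C_b$ share a configuration, then every $u\in C_a$ has a partner $w\in C_b$ with $\|(u-p_{C_a})-(w-p_{C_b})\|\le 2\lambda R$, so $\diam(C_a\cup C_b)\le(1+2\lambda)R+\|p_{C_a}-p_{C_b}\|$ --- a multiplicative loss of only $1+2\lambda$ with $\lambda=\Theta(1/d)$, while $\|p_{C_a}-p_{C_b}\|$ is controlled by Lemma~\ref{volume_lem} applied to the $>m/2^{\rho+1}>k$ representatives sharing a configuration. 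This runs the algorithm down to $2^{\sigma+1}k$ clusters at cost $(28d+4)\opt_k$; only the remaining $O(\sigma)$ phases from $2^{\sigma+1}k$ to $2k$ use the crude factor-$2$ bound you propose, and that is where $2^{3\sigma}$ legitimately arises. Your setup (optimal balls, representatives, covering lemma, volume argument, phase/geometric-series bookkeeping) is all consistent with the paper, but without the configuration argument the proposition cannot be reached.
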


In our analysis of the $k$-center problem, we made use of the fact that merging two clusters lying inside a ball of some radius $r$ results in a new cluster of radius at most $r$.
This is no longer true for the diameter $k$-clustering problem.
We are not able to derive a bound for the diameter of the new cluster that is significantly less than $2r$.
The additional factor of $2$ makes our analysis from Section~\ref{rad_2k_sec} useless for the diameter case.

To prove Proposition~\ref{diam_2k_prop}, we divide the merge steps of Algorithm~\ref{diam_algo} into two stages.
The first stage consists of the merge steps down to a $2^{2^{O(d\log d)}}k$-clustering.
The analysis of the first stage is based on the following notion of similarity.
Two clusters are called similar if one cluster can be translated such that every point of the translated cluster is near a point of the second cluster.
Then, by merging similar clusters, the diameter essentially increases by the length of the translation vector.
During the first stage, we guarantee that there is a sufficiently large number of similar clusters left.
The cost of the intermediate $2^{2^{O(d\log d)}}k$-clustering can be upper bounded by $O(d)\cdot\opt_k$.

The second stage consists of the steps reducing the number of remaining clusters from $2^{2^{O(d\log d)}}k$ to only $2k$.
In this stage, we are no longer able to guarantee that a sufficiently large number of similar clusters exists.
Therefore, we analyze the merge steps of the second stage using a weaker argument, very similar to the one used in the second step of the analysis in the discrete $k$-center case (cf. Section~\ref{drad_rem_sec}).
As long as there are more than $2k$ clusters left, we are able to find sufficiently many pairs of clusters that intersect with the same cluster of an optimal $k$-clustering.
Therefore, we can bound the cost of merging such a pair by the sum of the diameters of the two clusters plus the diameter of the optimal cluster.
We find that the cost of the intermediate $2k$-clustering is upper bounded by $2^{2^{O(d\log d)}}\cdot\opt_k$.
Let us remark that we do not obtain our main result if we already use this argument for the first stage.

Both stages are again subdivided into phases, such that in each phase the number of remaining clusters is reduced by one fourth.

\subsubsection{Stage one}
The following lemma will be used to bound the increase of the cost during a single phase.

\begin{lemma}\label{diam_stage1_phaselem}
Let \(\lambda\in\R\) with \(0<\lambda<1\) and \(\rho:=\left\lceil\left(\frac{3}{\lambda}\right)^d\right\rceil\).
Furthermore, let $m\in\N$ with \(2^{\rho+1}k<m\leq|X|\).
Then,
\begin{equation}\label{diam_stage1_eq}
\diamcost(\C_{\left\lfloor\frac{3m}{4}\right\rfloor})<\left(1+2\lambda\right)\cdot\diamcost(\C_m)+4\sqrt[d]{\frac{2^{\rho+1}k}{m}}\cdot\opt_k.
\end{equation}
\end{lemma}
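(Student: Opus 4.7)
The strategy is a pigeonhole-plus-volume argument in the spirit of Lemma~\ref{drad_2k_phaselem}, augmented by a ``cluster shape'' classification to control the diameter increase. Write $D := \diamcost(\C_m)$ and $t := \lfloor 3m/4 \rfloor$, and for every $C \in \C_m$ fix an arbitrary representative $p_C \in C$. Since $\diam(C) \leq D$, the translate $C - p_C$ lies in $\ball_D^d(0)$. Using the same covering fact invoked in the proof of Lemma~\ref{rad_2k_phaselem}, $\ball_D^d(0)$ can be covered by $\rho = \left\lceil (3/\lambda)^d \right\rceil$ balls of radius $\lambda D$. I would assign to each cluster $C$ its \emph{profile}: the subset of these $\rho$ small balls that $C - p_C$ meets. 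There are at most $2^\rho$ possible profiles.

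Exactly as in the proof of Lemma~\ref{drad_2k_phaselem}, $\C_m \cap \C_{t+1}$ has more than $m/2$ elements, since the algorithm destroys at most two clusters from $\C_m$ per merge. By pigeonhole, some profile is shared by a subcollection $S \subseteq \C_m \cap \C_{t+1}$ with $|S| > m/2^{\rho+1} > k$, where the last inequality uses the hypothesis $m > 2^{\rho+1} k$. The representatives $P := \{p_C \where C \in S\} \subseteq X$ then also satisfy $|P| > k$ and are $(k,\opt_k)$-coverable, so Lemma~\ref{volume_lem} yields distinct $C_1, C_2 \in S$ whose representatives satisfy $\|p_{C_1} - p_{C_2}\| \leq 4\opt_k \sqrt[d]{k/|P|} < 4\sqrt[d]{2^{\rho+1}k/m} \cdot \opt_k$.

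The geometric core of the argument is then to exploit the shared profile: for any $u \in C_1 - p_{C_1}$, the small covering ball containing $u$ also meets $C_2 - p_{C_2}$, so there exists $v \in C_2 - p_{C_2}$ with $\|u - v\| \leq 2\lambda D$. Combined with $\diam(C_i - p_{C_i}) \leq D$ for $i \in \{1,2\}$, a triangle-inequality argument gives $\diam((C_1 - p_{C_1}) \cup (C_2 - p_{C_2})) \leq (1+2\lambda) D$; translating back yields
\[
\diam(C_1 \cup C_2) \leq (1+2\lambda) D + \|p_{C_1} - p_{C_2}\| < (1+2\lambda) D + 4\sqrt[d]{2^{\rho+1}k/m} \cdot \opt_k.
\]
Since $C_1, C_2 \in \C_{t+1}$, Observation~\ref{diam_greedy_obs} bounds $\diamcost(\C_t)$ by this quantity, which is exactly inequality (\ref{diam_stage1_eq}).

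The main obstacle is getting the factor $(1+2\lambda)$ rather than a naive $2$. Without the profile classification, merging two clusters whose representatives are close only gives $\diam(C_1 \cup C_2) \leq 2D + \|p_{C_1} - p_{C_2}\|$, whose multiplicative constant of $2$ would blow up exponentially when iterated over the $\Theta(\log(|X|/k))$ phases. The profile discretization turns this constant into $1+2\lambda$ with $\lambda$ arbitrarily small, at the cost of a $2^\rho = 2^{\lceil(3/\lambda)^d\rceil}$ loss in the pigeonhole step; this trade-off is ultimately responsible for the doubly-exponential-in-$d$ constant in Proposition~\ref{diam_2k_prop}.
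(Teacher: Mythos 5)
Your proof is correct and follows essentially the same route as the paper's: the "profile" of a translated cluster is exactly the paper's notion of "configuration," and the pigeonhole over the at most $2^\rho$ configurations, the application of Lemma~\ref{volume_lem} to the representatives, and the triangle-inequality bound $\diam(C_1\cup C_2)\leq(1+2\lambda)\diamcost(\C_m)+\|p_{C_1}-p_{C_2}\|$ all match the paper's argument. No gaps.
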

\begin{proof}
From every cluster $C\in\C_m$, we fix an arbitrary point and denote it by $p_C$.
Let $R:=\diamcost(\C_m)$.
Then, the distance from $p_C$ to any $q\in C$ is at most $R$ and we get \(C-p_C\subset\ball_R^d(0)\).

A ball of radius $R$ can be covered by $\rho$ balls of radius $\lambda R$ (see \cite{naszodi}).
Hence, there exist \(y_1,\ldots,y_\rho\in\R^d\) with \(\ball_R^d(0)\subseteq\bigcup_{i=1}^\rho\ball_{\lambda R}^d(y_i)\).
For $C\in\C_m$, we call the set
\(\conf(C):=\{y_i\ |\ 1\leq i\leq\rho\text{ and }\ball_{\lambda R}^d(y_i)\cap(C-p_C)\neq\varnothing\}\)
the configuration of $C$.
That is, we identify each cluster $C\in\C_m$ with the subset of the balls \(\ball_{\lambda R}^d(y_1),\ldots,\ball_{\lambda R}^d(y_\rho)\) that intersect with $C-p_C$.
Note that no cluster from $C\in\C_m$ has an empty configuration.
The number of possible configurations is upper bounded by \(2^\rho\).

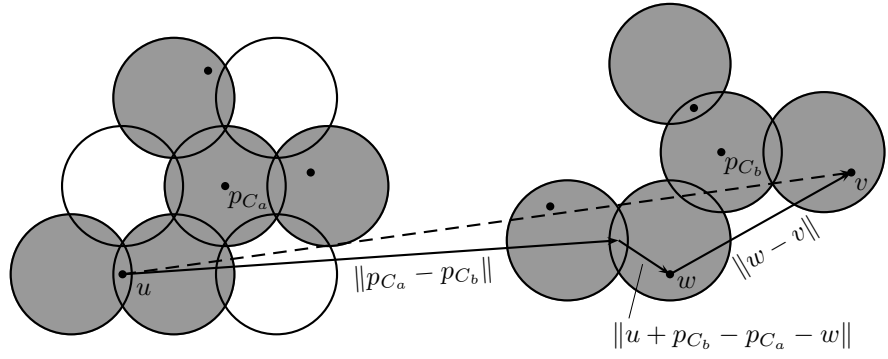
\begin{figure}
\begin{center}
	\psset{xunit=0.09cm,yunit=0.09cm,runit=0.09cm}
	\begin{pspicture}(0,0)(125,50)
	
		\pscircle*[linecolor=gray](10,10){9}
		\pscircle*[linecolor=gray](25,10){9}
		\pscircle*[linecolor=gray](32.5,23){9}
		\pscircle*[linecolor=gray](47.5,23){9}
		\pscircle*[linecolor=gray](25,36){9}

		\pscircle(10,10){9}
		\pscircle(25,10){9}
		\pscircle(40,10){9}
		\pscircle(17.5,23){9}
		\pscircle(32.5,23){9}
		\pscircle(47.5,23){9}
		\pscircle(25,36){9}
		\pscircle(40,36){9}

		\pscircle*(32.5,23){0.6}
 		\put(33,20.5){$p_{C_a}$}
		\pscircle*(17.5,10){0.6}
 		\put(19.5,7){$u$}
		\pscircle*(30,40){0.6}
		\pscircle*(45,25){0.6}
		
		\pscircle*[linecolor=gray](82.5,15){9}
		\pscircle*[linecolor=gray](97.5,15){9}
		\pscircle*[linecolor=gray](105,28){9}
		\pscircle*[linecolor=gray](120,28){9}
		\pscircle*[linecolor=gray](97.5,41){9}

		\pscircle(82.5,15){9}
		\pscircle(97.5,15){9}
		\pscircle(105,28){9}
		\pscircle(120,28){9}
		\pscircle(97.5,41){9}

		\pscircle*(105,28){0.6}
 		\put(105.5,25.5){$p_{C_b}$}
		\pscircle*(80,20){0.6}
		\pscircle*(97.5,10){0.6}
 		\put(98.5,8){$w$}
		\pscircle*(101,34.5){0.6}
		\pscircle*(124,25){0.6}
 		\put(124.5,22.5){$v$}
		
		\psline{->}(17.5,10)(90,15)
 		\rput{4}(61.5,10){$\|p_{C_a}-p_{C_b}\|$}
		\psline{->}(90,15)(97.5,10)
		\psline[linewidth=0.1]{-}(93.5,11.5)(91.5,3.5)
 		\put(89,0){$\|u+p_{C_b}-p_{C_a}-w\|$}
		\psline{->}(97.5,10)(124,25)
 		\rput{30}(113,14){$\|w-v\|$}
		\psline[linestyle=dashed]{->}(17.5,10)(124,25)

	\end{pspicture}
\end{center}
\caption{Congruent configurations.}
\label{diam_stage1_fig}
\end{figure}

Let $t:=\left\lfloor\frac{3m}{4}\right\rfloor$ and $\C_m\cap \C_{t+1}$ be the set of clusters from $\C_m$ that still exist $\left\lceil\frac{m}{4}\right\rceil-1$ merge steps after the computation of $\C_m$.
In each iteration of its loop, the algorithm can merge at most two clusters from $\C_m$.
Thus, \(|\C_m\cap \C_{t+1}|>\frac{m}{2}\).
It follows that there exist $j>\frac{m}{2^{\rho+1}}$ distinct clusters $C_1,\ldots,C_j\in\C_m\cap \C_{t+1}$ with the same configuration.
Using $m>2^{\rho+1}k$, we deduce $j>k$.

Let \(P:=\{p_{C_1},\ldots,p_{C_j}\}\).
Since $X$ is $(k,\opt_k)$-coverable, so is \(P\subset X\).
Therefore, by Lemma~\ref{volume_lem}, there exist distinct $a,b\in\{1,\ldots,j\}$ such that
\(\|p_{C_a}-p_{C_b}\|\leq4\sqrt[d]{\frac{2^{\rho+1}k}{m}}\cdot\opt_k\).

Next, we want to bound the diameter of the union of the corresponding clusters $C_a$ and $C_b$.
The distance between any two points $u,v\in C_a$ or $u,v\in C_b$ is at most the cost of $\C_m$.
Now let $u\in C_a$ and $v\in C_b$.
Using the triangle inequality, for any $w\in\R^d$, we obtain
\(\|u-v\|\leq\|p_{C_a}-p_{C_b}\|+\|u+p_{C_b}-p_{C_a}-w\|+\|w-v\|\) (see Figure~\ref{diam_stage1_fig}).

For \(\|p_{C_a}-p_{C_b}\|\), we just derived an upper bound.
To bound $\|u+p_{C_b}-p_{C_a}-w\|$, we let $y\in\conf(C_a)=\conf(C_b)$ such that $u-p_{C_a}\in\ball_{\lambda R}^d(y)$.
Furthermore, we fix $w\in C_b$ with \(w-p_{C_b}\in\ball_{\lambda R}^d(y)\).
Hence, $\|u+p_{C_b}-p_{C_a}-w\|=\|u-p_{C_a}-(w-p_{C_b})\|$ can be upper bounded by \(2\lambda R=2\lambda\cdot\diamcost(\C_m)\).
For $w\in C_b$ the distance $\|w-v\|$ is bounded by $\diam(C_b)\leq \diamcost(\C_m)$.
We conclude that merging clusters $C_a$ and $C_b$ results in a cluster whose diameter can be upper bounded by
\[\diam(C_a\cup C_b)<\left(1+2\lambda\right)\cdot\diamcost(\C_m)+4\sqrt[d]{\frac{2^{\rho+1}k}{m}}\cdot\opt_k.\]
Using Observation~\ref{diam_greedy_obs} and the fact that $C_a$ and $C_b$ are part of the clustering $\C_{t+1}$, we can upper bound the cost of $\C_t$ by
\(\diamcost(\C_t)\leq\diam(C_a\cup C_b)\).
\end{proof}

Note that the parameter $\lambda$ from Lemma~\ref{diam_stage1_phaselem} establishes a trade-off between the two terms on the right-hand side of Inequality~(\ref{diam_stage1_eq}).
To complete the analysis of the first stage, we have to carefully choose $\lambda$.
In the proof of the following lemma, we use \(\lambda=\nicefrac{\ln\frac{4}{3}}{4d}\) and apply Lemma~\ref{diam_stage1_phaselem} for $\left\lceil\log_\frac{4}{3}\frac{|X|}{2^{\sigma+1}k}\right\rceil$ consecutive phases, where \(\sigma=(42d)^d\).
Then, we are able to upper bound the total increase of the cost by a term that is linear in $d$ and $r$ and independent of $|X|$ and $k$.
The number of remaining clusters is independent of the number of input points $|X|$ and only depends on the dimension $d$ and the desired number of clusters $k$.
\begin{lemma}\label{diam_stage1_lem}
Let \(2^{\sigma+1}k<|X|\) for \(\sigma=(42d)^d\).
Then, on input $X$, Algorithm~\ref{diam_algo} computes a clustering \(\C_{2^{\sigma+1}k}\) with
\(\diamcost\left(\C_{2^{\sigma+1}k}\right)<\left(28d+4\right)\cdot\opt_k\).
\end{lemma}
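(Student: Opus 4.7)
The plan is to follow the phase-based template used for Propositions~\ref{drad_2k_prop} and~\ref{rad_2k_prop}, with the parameter $\lambda$ in Lemma~\ref{diam_stage1_phaselem} calibrated so that the new multiplicative factor $(1+2\lambda)$ does not overwhelm the per-phase geometric gain of $(4/3)^{1/d}$. Concretely, I would take $\lambda := \ln(4/3)/(4d)$. Then $3/\lambda = 12d/\ln(4/3) < 42d$, so $\rho := \lceil(3/\lambda)^d\rceil \leq (42d)^d = \sigma$; thus the hypothesis $m > 2^{\rho+1}k$ of Lemma~\ref{diam_stage1_phaselem} is automatic whenever $m > 2^{\sigma+1}k$.

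Second, set $m_i := \lceil(3/4)^i|X|\rceil$ for $i = 0, \ldots, u$, where $u := \lceil\log_{4/3}(|X|/(2^{\sigma+1}k))\rceil$, so that $m_u \leq 2^{\sigma+1}k$ while $m_i > 2^{\sigma+1}k$ for $i < u$. The integrality computation $\lfloor 3m_i/4\rfloor \leq m_{i+1}$ from the proof of Proposition~\ref{drad_2k_prop} together with the greedy strategy of Algorithm~\ref{diam_algo} yields $\diamcost(\C_{m_{i+1}}) \leq \diamcost(\C_{\lfloor 3m_i/4\rfloor})$. Applying Lemma~\ref{diam_stage1_phaselem} to each phase then gives the recurrence
\begin{equation*}
\diamcost(\C_{m_{i+1}}) < (1+2\lambda)\cdot\diamcost(\C_{m_i}) + 4\sqrt[d]{\tfrac{2^{\rho+1}k}{m_i}}\cdot\opt_k,
\end{equation*}
which I would unroll starting from $\diamcost(\C_{m_0}) = \diamcost(\C_{|X|}) = 0$. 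Using $m_i \geq (3/4)^i|X|$ inside the root produces
\begin{equation*}
\diamcost(\C_{m_u}) < 4\sqrt[d]{\tfrac{2^{\rho+1}k}{|X|}}\cdot\opt_k\cdot\sum_{i=0}^{u-1}(1+2\lambda)^{u-1-i}(4/3)^{i/d}.
\end{equation*}

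The crux is bounding this sum. Reindexing $j := u-1-i$ factors out $(4/3)^{(u-1)/d}$ and leaves a geometric series with common ratio $(1+2\lambda)/(4/3)^{1/d}$. The calibration of $\lambda$ is exactly what makes this ratio strictly less than $1$: from $e^x > 1+x$ one obtains $(4/3)^{1/d} > 1 + \ln(4/3)/d$, whence $(4/3)^{1/d} - (1+2\lambda) > \ln(4/3)/(2d)$. The series therefore sums to at most $(4/3)^{1/d}\cdot 2d/\ln(4/3)$. Collecting the $(4/3)^{u/d}$ factor with the root $\sqrt[d]{2^{\rho+1}k/|X|}$ and using $u \leq \log_{4/3}(|X|/(2^{\sigma+1}k)) + 1$ together with $\rho \leq \sigma$ bounds this product by $\sqrt[d]{4/3}$. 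Combining, one arrives at $\diamcost(\C_{m_u}) < (8d(4/3)^{2/d}/\ln(4/3))\cdot\opt_k$, and a short estimate of $(4/3)^{2/d}$ for $d\geq 1$ (or a two-case analysis separating $d=1$ from $d\geq 2$) converts this to the stated $(28d+4)\cdot\opt_k$.

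The main obstacle is calibrating $\lambda$. Making $\lambda$ larger would enlarge $\rho$ and hence $\sigma$, while making $\lambda$ smaller would push the ratio $(1+2\lambda)/(4/3)^{1/d}$ toward (or beyond) $1$ and cause the geometric sum to blow up. The choice $\lambda = \ln(4/3)/(4d)$ is the balance point at which $\rho$ stays within $(42d)^d$ yet the geometric series telescopes to a bound linear in $d$; everything else in the argument is bookkeeping analogous to Propositions~\ref{drad_2k_prop} and~\ref{rad_2k_prop}.
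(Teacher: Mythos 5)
Your overall route is exactly the paper's: the same calibration $\lambda=\ln\frac{4}{3}/(4d)$ (giving $\rho\leq\sigma$), the same phase lengths $m_i=\lceil(\frac{3}{4})^i|X|\rceil$, the same recurrence from Lemma~\ref{diam_stage1_phaselem}, and the same unrolling and reindexing into a geometric series with ratio $(1+2\lambda)/(\frac{4}{3})^{1/d}$. Up to that point everything checks out.

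The gap is in the very last estimate, and it is not closable by "a short estimate of $(\frac{4}{3})^{2/d}$". Bounding the series by $\frac{1}{1-r}$ with $1-r=\bigl((\frac{4}{3})^{1/d}-(1+2\lambda)\bigr)/(\frac{4}{3})^{1/d}>2\lambda/(\frac{4}{3})^{1/d}$ gives (after the prefactor is absorbed, which correctly contributes a factor $\leq 4$, so your exponent $2/d$ already overcounts by one factor of $(\frac{4}{3})^{1/d}$) the bound $\frac{8d}{\ln\frac{4}{3}}(\frac{4}{3})^{1/d}\cdot\opt_k$. This exceeds $(28d+4)\cdot\opt_k$ for every $d\leq 21$ (e.g.\ $d=1$ gives $\approx 37.1$ versus $32$, $d=2$ gives $\approx 64.2$ versus $60$), and your stated intermediate bound with $(\frac{4}{3})^{2/d}$ fails up to roughly $d=62$. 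The point is that $(\frac{4}{3})^{1/d}-1$ is at least $4\lambda$, twice the margin $2\lambda$ you retain in the denominator, so the extra factor $(\frac{4}{3})^{1/d}$ in the numerator costs you more than the stated constant allows. The paper avoids this by bounding the \emph{ratio} instead of the denominator: from $1+2\lambda<(\frac{4}{3})^{1/(2d)}$ one gets
\[\frac{1+2\lambda}{\sqrt[d]{\frac{4}{3}}}<\frac{1}{\sqrt[2d]{\frac{4}{3}}}<\frac{1}{1+2\lambda},\]
so the infinite geometric series is at most $\frac{1}{2\lambda}+1$, yielding $4\left(\frac{1}{2\lambda}+1\right)=\frac{8d}{\ln\frac{4}{3}}+4<28d+4$. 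Replace your final series estimate by this one and your argument is complete.
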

\begin{proof}
Let \(u:=\left\lceil\log_\frac{3}{4}\frac{2^{\sigma+1}k}{|X|}\right\rceil\) and define \(m_i:=\left\lceil\left(\frac{3}{4}\right)^i|X|\right\rceil\) for all $i=0,\ldots,u$.
Furthermore, let \(\lambda=\nicefrac{\ln\frac{4}{3}}{4d}\).
This implies \(\rho\leq\sigma\) for the parameter $\rho$ of Lemma~\ref{diam_stage1_phaselem}.
Then, \(m_u\leq2^{\sigma+1}k\) and \(m_i>2^{\sigma+1}k\geq2^{\rho+1}k\) for all $i=0,\ldots,u-1$.
Since \(\left\lfloor\frac{3m_i}{4}\right\rfloor=\left\lfloor\frac{3}{4}\left\lceil\left(\frac{3}{4}\right)^i|X|\right\rceil\right\rfloor\leq\left\lfloor\left(\frac{3}{4}\right)^{i+1}|X|+\frac{3}{4}\right\rfloor\leq\left\lceil\left(\frac{3}{4}\right)^{i+1}|X|\right\rceil=m_{i+1}\) and Algorithm~\ref{diam_algo} uses a greedy strategy, we deduce \(\diamcost(\C_{m_{i+1}})\leq\diamcost(\C_{\left\lfloor\frac{3m_i}{4}\right\rfloor})\) for all $i=0,\ldots,u-1$.
Combining this with Lemma~\ref{diam_stage1_phaselem} (applied to $m=m_i$), we obtain
\[\diamcost(\C_{m_{i+1}})<\left(1+2\lambda\right)\cdot\diamcost(\C_{m_i})+4\sqrt[d]{\frac{2^{\rho+1}k}{m_i}}\cdot\opt_k.\]
By repeatedly applying this inequality for $i=0,\ldots,u-1$ and using \(\diamcost(\C_{2^{\sigma+1}k})\leq\diamcost(\C_{m_u})\) and \(\diamcost(\C_{m_0})=0\), we get
\begin{align*}
\diamcost\left(\C_{2^{\sigma+1}k}\right)&<\sum_{i=0}^{u-1}\left(\left(1+2\lambda\right)^i\cdot4\sqrt[d]{\frac{2^{\sigma+1}k}{\left(\frac{3}{4}\right)^{u-1-i}|X|}}\opt_k\right)\\
&=4\sqrt[d]{\frac{2^{\sigma+1}k}{\left(\frac{3}{4}\right)^{u-1}|X|}}\opt_k\cdot\sum_{i=0}^{u-1}\left(\left(1+2\lambda\right)^i\cdot\sqrt[d]{\left(\frac{3}{4}\right)^{i}}\right).
\end{align*}
Using $u-1<\log_\frac{3}{4}\frac{2^{\sigma+1}k}{|X|}$, we deduce
\begin{equation}\label{diam_stage1_lemeq}
\diamcost\left(\C_{2^{\sigma+1}k}\right)<4\opt_k\cdot\sum_{i=0}^{u-1}\left(\frac{1+2\lambda}{\sqrt[d]{\frac{4}{3}}}\right)^i.
\end{equation}
By taking only the first two terms of the series expansion of the exponential function, we get
\(1+2\lambda=1+\frac{\ln\frac{4}{3}}{2d}<e^\frac{\ln\frac{4}{3}}{2d}=\sqrt[2d]{\frac{4}{3}}\).
Substituting this bound into Inequality~(\ref{diam_stage1_lemeq}) and extending the sum gives
\[\diamcost\left(\C_{2^{\sigma+1}k}\right)<4\opt_k\cdot\sum_{i=0}^\infty\left(\frac{1}{\sqrt[2d]{\frac{4}{3}}}\right)^i<4\opt_k\cdot\sum_{i=0}^\infty\left(\frac{1}{1+2\lambda}\right)^i.\]
Solving the geometric series leads to
\[\diamcost\left(\C_{2^{\sigma+1}k}\right)<4\left(\frac{1}{2\lambda}+1\right)\cdot\opt_k<\left(28d+4\right)\cdot\opt_k.\]
\end{proof}

\subsubsection{Stage two}
The second stage covers the remaining merge steps until Algorithm~\ref{diam_algo} computes the clustering $\C_{2k}$.
However, compared to stage one, the analysis of a single phase yields a weaker bound.
The following lemma provides an analysis of a single phase of the second stage.
It is very similar to Lemma~\ref{drad_2k_phaselem} and Lemma~\ref{drad_rem_phaselem} in the analysis of the discrete $k$-center problem.
\begin{lemma}\label{diam_stage2_phaselem}
Let $m\in\N$ with \(2k<m\leq|X|\).
Then,
\[\diamcost(\C_{\left\lfloor\frac{3m}{4}\right\rfloor})<2\cdot\left(\diamcost(\C_m)+\opt_k\right).\]
\end{lemma}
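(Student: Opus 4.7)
The plan is to adapt the argument from Lemma~\ref{drad_rem_phaselem} to the diameter setting, using pigeonhole on an optimal $k$-cover of $X$ rather than the volume argument of Lemma~\ref{volume_lem}. First set $t := \lfloor 3m/4 \rfloor$. Between the computations of $\C_m$ and $\C_{t+1}$ the algorithm performs $m - t - 1 = \lceil m/4 \rceil - 1$ merge steps, each of which can destroy at most two clusters originating from $\C_m$. Therefore $|\C_m \cap \C_{t+1}| > m/2$, and since the hypothesis $m > 2k$ gives $|\C_m \cap \C_{t+1}| > k$, the survivors are numerous enough to force a collision with an optimal cluster.

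Next, fix an arbitrary representative $p_C \in C$ for every cluster $C \in \C_m \cap \C_{t+1}$ and let $P$ be the set of these representatives. Because $P \subseteq X$ and $X$ is $(k,\opt_k)$-coverable, so is $P$; since $|P| > k$ the pigeonhole principle yields two distinct indices $C_1, C_2 \in \C_m \cap \C_{t+1}$ whose representatives share a common ball of radius $\opt_k$, hence $\|p_{C_1} - p_{C_2}\| \leq 2\opt_k$. For any $u \in C_1$ and $v \in C_2$, the triangle inequality then gives
\[\|u - v\| \leq \|u - p_{C_1}\| + \|p_{C_1} - p_{C_2}\| + \|p_{C_2} - v\| \leq \diam(C_1) + 2\opt_k + \diam(C_2) \leq 2\diamcost(\C_m) + 2\opt_k,\]
while distances inside a single $C_i$ are bounded by $\diamcost(\C_m)$. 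Consequently $\diam(C_1 \cup C_2) \leq 2(\diamcost(\C_m) + \opt_k)$, and because $C_1, C_2 \in \C_{t+1}$, Observation~\ref{diam_greedy_obs} yields $\diamcost(\C_t) \leq \diam(C_1 \cup C_2) \leq 2(\diamcost(\C_m) + \opt_k)$, as claimed.

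There is no serious obstacle; the skeleton exactly mirrors Lemma~\ref{drad_rem_phaselem}. The only substantive change is that the diameter of a union must be bounded by concatenating \emph{two} cluster diameters via the representatives, giving a factor of $2$ on $\diamcost(\C_m)$, and that the pigeonhole on a radius-$\opt_k$ ball contributes $2\opt_k$ rather than $\opt_k$. Together these produce the multiplicative form $2(\diamcost(\C_m) + \opt_k)$ — exactly the weaker behavior that explains why the authors must handle the first stage with a separate, sharper analysis (Lemma~\ref{diam_stage1_phaselem}) and only invoke this cheap bound once the cluster count has been driven down to a quantity depending only on $d$ and $k$.
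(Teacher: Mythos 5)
Your proposal is correct and follows essentially the same route as the paper: count the survivors $|\C_m\cap\C_{t+1}|>m/2>k$, use the $(k,\opt_k)$-coverability of $X$ and pigeonhole to find two surviving clusters meeting a common ball of radius $\opt_k$, bound the diameter of their union by $2\diamcost(\C_m)+2\opt_k$ via the triangle inequality, and conclude with Observation~\ref{diam_greedy_obs}. The only (immaterial) difference is that you fix one representative per cluster and collide the representatives, whereas the paper directly finds a ball intersecting two clusters.
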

\begin{proof}
Let $t:=\left\lfloor\frac{3m}{4}\right\rfloor$.
Then, $\C_m\cap \C_{t+1}$ is the set of clusters from $\C_m$ which still exist $\left\lceil\frac{m}{4}\right\rceil-1<\frac{m}{4}$ merge steps after the computation of $\C_m$.
In each iteration of its loop the algorithm can merge at most two clusters from $\C_m$.
Thus, \(|\C_m\cap \C_{t+1}|>\frac{m}{2}>k\).
Since $X$ is $(k,\opt_k)$-coverable there exists a point $y\in\R^d$ such that $\ball_{\opt_k}^d(y)$ intersects with two clusters $A,B\in\C_m\cap \C_{t+1}$.
We conclude that merging $A$ and $B$ would result in a cluster whose diameter can be upper bounded by
\(\diam(A\cup B)<2\diamcost(\C_m)+2\opt_k\) (cf. Figure~\ref{diam_stage2_fig}).
The result follows using $A,B\in\C_{t+1}$ and Observation~\ref{diam_greedy_obs}.
\end{proof}

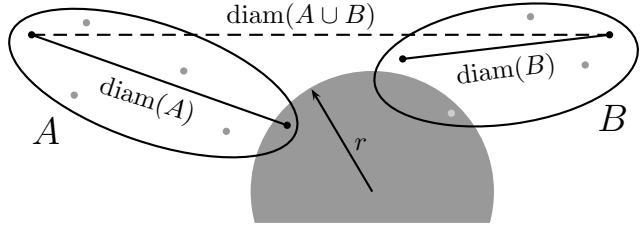
\begin{figure}
\begin{center}
	\psset{xunit=0.08cm,yunit=0.08cm,runit=0.08cm}
	\begin{pspicture}(0,0)(120,35)

 		\psarc*[linecolor=gray](61,5){20}{-15}{195}
		\psline{->}(61,5)(51,22)
 		\put(58,12){$r$}

		\rput{-19}(25,23){\psellipse(0,0)(25,10)}
		\rput{8}(83,26){\psellipse(0,0)(22,10)}
 		\put(5,13){\Large$A$}
 		\put(98,15){\Large$B$}

		\pscircle*[linecolor=gray](14,33){0.6}
		\pscircle*[linecolor=gray](30,25){0.6}
		\pscircle*[linecolor=gray](37,15){0.6}
		\pscircle*[linecolor=gray](12,21){0.6}
		\pscircle*(5,31){0.6}
		\pscircle*(47,16){0.6}
		\psline(5,31)(47,16)
 		\rput{-18}(24,20){$\diam(A)$}

		\pscircle*[linecolor=lightgray](74,18){0.6}
		\pscircle*[linecolor=gray](87,34){0.6}
		\pscircle*[linecolor=gray](96,26){0.6}
		\pscircle*(66,27){0.6}
		\pscircle*(100,31){0.6}
		\psline(66,27)(100,31)
 		\rput{7}(83,25){$\diam(B)$}

		\psline[linestyle=dashed](5,31)(100,31)
 		\put(37,33){$\diam(A\cup B)$}
		
	\end{pspicture}
\end{center}
\caption{Merging two clusters intersecting with a ball of radius $r$.}
\label{diam_stage2_fig}
\end{figure}

\begin{lemma}\label{diam_stage2_lem}
Let $n\in\N$ with \(n\leq2^{\sigma+1}k\) and \(2k<n\leq|X|\) for \(\sigma=(42d)^d\).
Then, on input $X$, Algorithm~\ref{diam_algo} computes a clustering \(\C_{2k}\) with
\[\diamcost(\C_{2k})<2^{3\sigma}\left(\diamcost(\C_n)+2\opt_k\right).\]
\end{lemma}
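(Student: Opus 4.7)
The plan is to mimic the phase-based analyses used earlier in Section~\ref{diam_2k_sec}, but this time with the weaker phase bound provided by Lemma~\ref{diam_stage2_phaselem}. Concretely, I would partition the merge steps from $\C_n$ down to $\C_{2k}$ into phases that each shrink the number of surviving clusters by a factor of~$\tfrac{3}{4}$. Define $u := \lceil \log_{4/3}(n/(2k)) \rceil$ and $m_i := \lceil (3/4)^i n \rceil$ for $i=0,\ldots,u$, so that $m_0 = n$, $m_u \leq 2k$, and $m_i > 2k$ for every $i < u$, which is precisely the hypothesis required by Lemma~\ref{diam_stage2_phaselem}.

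The same arithmetic used in the proof of Proposition~\ref{drad_2k_prop} (or Lemma~\ref{diam_stage1_lem}) shows $\lfloor 3m_i/4 \rfloor \leq m_{i+1}$, so by the greedy behaviour recorded in Observation~\ref{diam_greedy_obs} we have $\diamcost(\C_{m_{i+1}}) \leq \diamcost(\C_{\lfloor 3m_i/4 \rfloor})$. Combining this with Lemma~\ref{diam_stage2_phaselem} gives the one-phase recurrence
\[
\diamcost(\C_{m_{i+1}}) < 2\cdot\diamcost(\C_{m_i}) + 2\opt_k
\qquad\text{for } i=0,\ldots,u-1.
\]
Unrolling this geometric recursion yields
\[
\diamcost(\C_{m_u}) < 2^u\cdot\diamcost(\C_n) + (2^{u+1}-2)\cdot\opt_k
\;\leq\; 2^u\bigl(\diamcost(\C_n) + 2\opt_k\bigr),
\]
and since $2k \leq m_u$, another appeal to greediness gives $\diamcost(\C_{2k}) \leq \diamcost(\C_{m_u})$.

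It remains to verify that $2^u \leq 2^{3\sigma}$, which is the only place where the precise value $\sigma=(42d)^d$ enters. From $n \leq 2^{\sigma+1}k$ we obtain
\[
u \;\leq\; \log_{4/3}\!\bigl(2^{\sigma}\bigr) + 1 \;=\; \frac{\sigma}{\log_2(4/3)} + 1,
\]
and since $1/\log_2(4/3) < 2.41$, we have $u < 2.41\sigma + 1 \leq 3\sigma$ for the relevant range of $d$ (already $\sigma \geq 42$ for $d\geq 1$). Substituting into the recurrence bound completes the proof. There is no real obstacle here: the argument is essentially a direct iteration of Lemma~\ref{diam_stage2_phaselem}, and the only point requiring minor care is the numerical check that $\log_{4/3} 2$ stays safely below~$3$, so that the logarithmic number of phases compounds into the claimed $2^{3\sigma}$ factor rather than something larger.
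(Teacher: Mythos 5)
Your proposal is correct and follows essentially the same route as the paper's proof: the identical phase decomposition $m_i=\lceil(3/4)^i n\rceil$, the same recurrence $\diamcost(\C_{m_{i+1}})<2\,\diamcost(\C_{m_i})+2\opt_k$ from Lemma~\ref{diam_stage2_phaselem}, the same unrolling to $2^u(\diamcost(\C_n)+2\opt_k)$, and the same numerical check that $u<3\sigma$. No substantive differences.
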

\begin{proof}
Let \(u:=\left\lceil\log_\frac{3}{4}\frac{2k}{n}\right\rceil\) and define \(m_i:=\left\lceil\left(\frac{3}{4}\right)^in\right\rceil\) for all $i=0,\ldots,u$.
Then, \(m_u\leq2k\) and \(m_i>2k\) for all $i=0,\ldots,u-1$.
Analogously to the proof of Lemma~\ref{diam_stage1_lem}, we get \(\left\lfloor\frac{3m_i}{4}\right\rfloor\leq m_{i+1}\) and using Lemma~\ref{diam_stage2_phaselem}, we deduce \(\diamcost(\C_{m_{i+1}})<2\cdot\left(\diamcost(\C_{m_i})+\opt_k\right))\) for all $i=0,\ldots,u-1$.
By repeatedly applying this inequality and using \(\diamcost(\C_{2k})\leq\diamcost(\C_{m_u})\), we get
\(\diamcost(\C_{2k})<2^u\cdot\left(\diamcost(\C_n)+2\opt_k\right)\).
Hence using \(u\leq\left\lceil\log_\frac{4}{3}2^\sigma\right\rceil<3\sigma\), the result follows.
\end{proof}

Proposition~\ref{diam_2k_prop} follows immediately by combining Lemma~\ref{diam_stage1_lem} and Lemma~\ref{diam_stage2_lem}.

\subsubsection{Analysis of the remaining merge steps}\label{diam_rem_sec}
We analyze the remaining merge steps analogously to the $k$-center problem.
Therefore, in this section we only discuss the differences, most of which are slightly modified bounds for the cost of merging two clusters (cf. Figure~\ref{diam_stage2_fig}).

The connectivity property from Section~\ref{connected_sec} remains the same.
However, for any two $(Z,r)$-connected clusters $A, B$, we use
\begin{equation}\label{diam_connect_eq}
\diam(A\cup B)\leq\diam(A)+\diam(B)+2r
\end{equation}
as a replacement for Inequality~(\ref{connect_eq}).
Furthermore, Lemma~\ref{connect_lem} also holds for the diameter $k$-clustering problem, i.e. with
\(\diamcost(\C_k)\leq\diamcost(\cP_\ell)\).

Using Inequality~(\ref{diam_connect_eq}) in the proof of Lemma~\ref{rad_rem_phaselem}, we get
\[\diam(A\cup B)\leq\diamcost(\cP_m)+\diamcost(\cP_n)+2\opt_k\]
as a replacement for Inequality~(\ref{rad_rem_phasebound1}) while Inequality~(\ref{rad_rem_phasebound2}) can be replaced by
\[\diam(A_1\cup A_2)\leq\diamcost(\cP_m)+2\cdot(\diamcost(\cP_n)+2\opt_k).\]
That is, for the diameter $k$-clustering problem the two upper bounds are different.
However, the second one is larger than the first one.
Using it in both cases, the inequality from Lemma~\ref{rad_rem_phaselem} changes slightly to
\[\diamcost(\cP_{\left\lfloor\frac{m+\ell}{2}\right\rfloor})\leq\diamcost(\cP_m)+2\cdot\left(\diamcost(\cP_n)+2\opt_k\right).\]
Together with
\(\diamcost(\cP_{\tilde{n}})\leq2\cdot\diamcost(\cP_n)+2\opt_k\)
as a replacement for Inequality~(\ref{rad_rem_lemeq}), the bound from Lemma~\ref{rad_rem_lem} becomes
\[\diamcost(\cP_{\ell+1})<2(\log_2(\ell)+2)\cdot\left(\diamcost(\cP_n)+2\opt_k\right).\]
Thus, using Proposition~\ref{diam_2k_prop}, the upper bound for the cost of the $\ell+1$-clustering of $Y$ from Lemma~\ref{rad_allstages} becomes
\[\diamcost(\cP_{\ell+1})<2(\log_2(\ell)+2)\cdot\left(2^{3\sigma}\left(28d+6\right)+2\right)\cdot\opt_k\] for \(\sigma=(42d)^d\).
Analogously to Section~\ref{rad_proof}, this proves Theorem~\ref{diam_result}.

\subsection{Analysis of the one-dimensional case}\label{upper_d1}
For $d=1$, we are able to show that Algorithm~\ref{diam_algo} computes an approximation to Problem~\ref{diam_prob} with an approximation factor of at most $3$.
We even know that for any input set $X\subset\R$ the approximation factor of the computed solution is strictly below $3$.
However, we do not show an approximation factor of $3-\epsilon$ for some $\epsilon>0$.
The proof of this upper bound is very technical, makes extensive use of the total order of the real numbers, and is certainly not generalizable to higher dimensions.
Therefore, we omit it.

\section{Lower bounds}
In this section, we present constructions of several input sets yielding lower bounds for the approximation factor of Algorithm~\ref{diam_algo}.
To this end, we look into possible runs of the algorithm.
Whenever Algorithm~\ref{diam_algo} is able to choose between several possible merge steps generating a cluster of equal minimum diameter, we simply assume that we can govern its choice.

In Section~\ref{lower_d1}, we show that for any input set $X\subset\R$ (i.e. $d=1$) Algorithm~\ref{diam_algo} has an approximation factor of at least $2.5$.
In Section~\ref{upper_d1}, we stated that in this case Algorithm~\ref{diam_algo} computes a solution to Problem~\ref{diam_prob} with approximation factor strictly below $3$.
Hence, for $d=1$, we obtain almost matching upper and lower bounds for the cost of the solution computed by Algorithm~\ref{diam_algo}.

Furthermore, in Section~\ref{lower_d2}, we show that the dimension $d$ has an impact on the approximation factor of Algorithm~\ref{diam_algo}.
This follows from a $2$-dimensional input set yielding a lower bound of $3$ for the metric based on the $\ell_\infty$-norm.
Note that this exceeds the upper bound from the one-dimensional case.

Moreover, in Section~\ref{lower_dvar}, we show that there exist input instances such that Algorithm~\ref{diam_algo} computes an approximation to Problem~\ref{diam_prob} with an approximation factor of $\Omega(\sqrt[p]{\log k})$ for metrics based on an $\ell_p$-norm $(1\leq p<\infty)$ and $\Omega(\log k)$ for the metric based on the $\ell_\infty$-norm.
In case of the $\ell_1$- and the $\ell_\infty$-norm, this matches the already known lower bound \cite{dasgupta} that has been shown using a rather artificial metric.
However, the bound in \cite{dasgupta} is derived from a $2$-dimensional input set, while in our instances the dimension depends on $k$.

Finally, we will see that the lower bound of $\Omega(\sqrt[p]{\log k})$ for any $\ell_p$-norm and $\Omega(\log k)$ for the $\ell_\infty$-norm can be adapted to the discrete $k$-center problem (see Section~\ref{drad_lower}).
In case of the $\ell_2$-norm, we thus obtain almost matching upper and lower bounds for the cost of the solution computed by Algorithm~\ref{drad_algo}.
Furthermore, we will be able to restrict the dependency on $d$ and $k$ of the approximation factor of Algorithm~\ref{drad_algo}.

\subsection{Any metric and $d=1$}\label{lower_d1}
We first show a lower bound for the approximation factor of Algorithm~\ref{diam_algo} using a sequence of input sets from $\R^d$ with $d=1$.
Since up to normalization there is only one metric for $d=1$, without loss of generality we assume the Euclidean metric.
\begin{proposition}
For all $\varepsilon>0$ and $k\geq4$ there exists an input set $X\subset\R$ such that Algorithm~\ref{diam_algo} computes a solution to Problem~\ref{diam_prob} with cost at least $\frac{5}{2}-\varepsilon$ times the cost of an optimal solution.
\end{proposition}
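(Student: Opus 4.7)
The statement asks, for every $k\geq 4$ and every $\varepsilon>0$, for an input $X\subset\R$ on which Algorithm~\ref{diam_algo} produces a $k$-clustering whose diameter cost is at least $(5/2-\varepsilon)\cdot\opt_k$. Since the introduction of this section allows us to govern tie-breaking, my plan is to design $X$ together with a specific valid execution of Algorithm~\ref{diam_algo} that realises this ratio.

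The plan is to isolate all the action in a constant-size ``gadget'' $G\subset\R$ that certifies the bound for $k=4$, and to extend to arbitrary $k\geq 4$ by padding. Concretely, I would build a finite set $G$ of $O(1)$ points on the line, together with: (i) an explicit $4$-clustering $\cP^*$ of $G$ of diameter cost at most $1+O(\varepsilon)$, which upper-bounds $\opt_4(G)$; and (ii) an explicit valid run of Algorithm~\ref{diam_algo} on $G$ whose output $4$-clustering contains a cluster of diameter at least $5/2-O(\varepsilon)$. To lift the construction from $k=4$ to general $k\geq 4$, I would place $k-4$ auxiliary points $y_1,\dots,y_{k-4}$ at positions whose mutual distances and distances to $G$ are so large that every candidate merge involving one of them has strictly larger diameter than every candidate merge inside $G$. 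Those auxiliary points then stay singletons throughout the whole execution, so the output on $X=G\cup\{y_1,\dots,y_{k-4}\}$ is the bad $4$-clustering of $G$ together with the $k-4$ singletons. Since singletons contribute diameter $0$, $\cP^*\cup\{\{y_1\},\dots,\{y_{k-4}\}\}$ still witnesses $\opt_k(X)\leq 1+O(\varepsilon)$, and the ratio is preserved.

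For the gadget itself, I would use two types of points: a few \emph{tight pairs} at mutual distance $\varepsilon$, each pair much closer than any other pairwise distance in $G$, which force Algorithm~\ref{diam_algo} (unambiguously) to merge the tight pairs in the first steps; and a handful of \emph{bridge points} whose positions are tuned so that, once all tight pairs have been merged, every remaining candidate merge diameter ties at a single common value. The tie-breaker then steers the algorithm through a chain of cascading bad merges that grows one cluster across an interval of length $5/2-O(\varepsilon)$, while the exhibited clustering $\cP^*$ partitions the same points at different cut-points to keep every cluster within diameter $1+O(\varepsilon)$. The ``$5/2$'' arises, heuristically, because the worst chain glues together one bridge segment of length $1$ on the left with another of length $1$ on the right, joined by a middle segment of length $1/2$ that the algorithm is tricked into absorbing first.

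The hard part is the design of the gadget's coordinates. The two constraints pull in opposite directions: the bad run requires that at each step the targeted merge be a genuine minimum among all candidate merges (not merely among adjacent-interval ones), while $\cP^*$ must simultaneously have diameter only $1+O(\varepsilon)$. Balancing these forces the bridge distances to satisfy a small system of equalities and strict inequalities, and verifying optimality of $\cP^*$ requires a finite case check ruling out every competing $4$-partition of the gadget. Once the correct gadget is written down, checking the run of Algorithm~\ref{diam_algo} step by step and extending to general $k$ via the padding argument are mechanical.
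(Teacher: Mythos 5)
Your overall architecture (a hard instance for $k=4$, plus far-away padding points that stay singletons to handle $k>4$, plus adversarially governed tie-breaking) matches the paper's, and the padding step is unproblematic. The gap is in the one place where you defer the work: the existence of a \emph{constant-size} gadget $G$ whose optimal $4$-clustering costs $1+O(\varepsilon)$ while a valid greedy run produces a cluster of diameter $\tfrac{5}{2}-O(\varepsilon)$. This is not a detail to be filled in mechanically; it is the entire content of the proposition, and there is concrete reason to believe your specific design cannot deliver it. The obstruction is that every intermediate cluster of diameter $\delta$ is created by a merge of cost $\delta$ that must be the \emph{global} minimum at that step, so the algorithm's running cost has already reached $\delta$ by the time such a cluster exists. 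After your tight pairs (diameter $\varepsilon\approx 0$) are merged, all remaining clusters are essentially points, and the final cluster of diameter $\tfrac52$ must then be assembled through a short chain of merges among point-like clusters whose costs are just the gaps. If you collapse the paper's construction in exactly this way (replace each block's interior by a tight pair at distance $\approx 0$, keeping outliers at distance $\alpha$ and inter-block gaps $\beta\le\alpha$), the forced cost of the $5\to4$ merge works out to $2\alpha+2\beta\le 4\alpha$ against $\opt_4\le 2\alpha$, i.e.\ ratio at most $2$, not $\tfrac52$. Moreover, since your gadget has $O(1)$ points with all coordinates converging as $\varepsilon\to0$, a limiting argument would produce a single finite instance on which some valid run attains ratio exactly $\ge\tfrac52$ — something the paper's family conspicuously never achieves and which you give no evidence for.

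The paper's construction avoids this obstruction precisely by \emph{not} being constant-size: each block contains $2^n$ equally spaced points, so the algorithm's merge cost ramps up geometrically ($1,3,7,\dots,2^n-1$) through a $\log$-depth dendrogram, legitimately assembling interior clusters of diameter $\approx\opt_4/2$ ``at cost equal to their diameter.'' Only because those wide interior clusters already exist can the subsequent merges produce sub-clusters of diameters $\approx\tfrac34\opt_4$ and $\approx\tfrac32\opt_4$ whose forced union realizes $\tfrac{5\cdot2^n-3}{2^{n+1}-1}\to\tfrac52$; the instance size is $\Theta(2^n)=\Theta(1/\varepsilon)$ and necessarily grows as $\varepsilon\to0$. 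To repair your proof you would either have to let the gadget size depend on $\varepsilon$ (at which point you are reconstructing the paper's dendrogram idea) or actually exhibit a finite instance with ratio $\ge\tfrac52$, which is a substantially stronger claim than the proposition itself.
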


\begin{proof}
We show how to construct an input set for $k=4$.
The construction can easily be extended for $k>4$.
For any fixed $n\in\N$, we consider the following instance.
For $x\in\R$, we define a set $V(x)$ consisting of $2^n$ equidistant points:
\[V(x):=\{x+i\ |\ i\in\N\text{ and }0\leq i<2^n\}.\]
That is, neighboring points are at distance $1$ and \(\diam(V(x))=2^n-1\).
Furthermore, we define:
\begin{align*}
l(x)&:=x-2^{n-1},\\
r(x)&:=x+2^n-1+2^{n-1}=x+3\cdot2^{n-1}-1,\\
W(x)&:=V(x)\cup\{l(x),r(x)\}.
\end{align*}
It follows that \(\diam(W(x))=2^{n+1}-1\) as shown in Figure~\ref{lower_d1_sketchWx}.

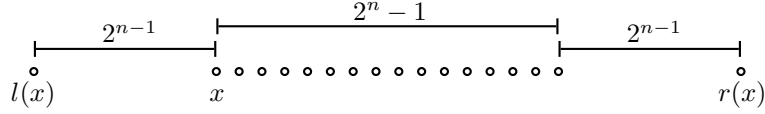
\begin{figure}[t]
\begin{center}
	\psset{xunit=0.3cm,yunit=0.3cm,runit=0.3cm}
	\begin{pspicture}(0,0)(31,4)

 		\pscircle(0,1){0.2}
 		\put(-1,-0.3){$l(x)$}
 		\multido{\ix=8+1}{16}{\pscircle(\ix,1){0.2}}
 		\put(7.7,-0.3){$x$}
 		\pscircle(31,1){0.2}
 		\put(30,-0.3){$r(x)$}

		\psline{|-|}(8,3)(23,3)
 		\put(14,3.3){$2^n-1$}
		\psline{|-|}(0,2)(8,2)
 		\put(3,2.3){$2^{n-1}$}
		\psline{|-|}(23,2)(31,2)
 		\put(26,2.3){$2^{n-1}$}

	\end{pspicture}
\end{center}
\caption{A sketch of the set $W(x)$.}
\label{lower_d1_sketchWx}
\end{figure}

We define the following input set $X$:
\[X:=\bigcup_{i=1}^4W(x_i)\]
where \(x_i:=i\cdot(7\cdot2^{n-1}-2)\) for $i=1,\ldots,4$.
Then, there is a gap of $3\cdot2^{n-1}-1$ between $W_n(x_i)$ and $W_n(x_{i+1})$, i.e.
\begin{equation}\label{gap}
\diam\left(\{r(x_i),l(x_{i+1})\}\right)=3\cdot2^{n-1}-1\quad\text{for }i=1,\ldots,3.
\end{equation}
The optimal $4$-clustering of $X$ is
\[\C_4^{opt}:=\{W(x_{1}),W(x_{2}),W(x_{3}),W(x_{4})\}\]
and \(\diamcost(\C_4^{opt})=2^{n+1}-1\).
However, the solution computed by Algorithm~\ref{diam_algo} may be worse.
At the beginning, the minimum distance between two points from $X$ is $1$.
The possible pairs of points with distance $1$ come from the sets $V(x_{i})$ for $i=1,\ldots,4$.
Since the distance between $V(x_{i})$ and $l(x_{i})$ or $r(x_{i})$ is $2^{n-1}$, we can assume that the algorithm merges all points of $V(x_{i})$ for $i=1,\ldots,4$ as shown in Figure~\ref{lower_d1_dendroW}.
It follows that Algorithm~\ref{diam_algo} computes the following $12$-clustering:
\begin{align*}
\C_{12}=\bigl\{&\{l(x_{1})\},V(x_{1}),\{r(x_{1})\},\\
&\{l(x_{2})\},V(x_{2}),\{r(x_{2})\},\\
&\{l(x_{3})\},V(x_{3}),\{r(x_{3})\},\\
&\{l(x_{4})\},V(x_{4}),\{r(x_{4})\}\bigr\}.
\end{align*}

\begin{figure}[t]
\begin{center}
	\psset{xunit=0.3cm,yunit=0.3cm,runit=0.3cm}
	\begin{pspicture}(0,0)(31,7)

 		\pscircle(0,4){0.2}
 		\put(-1,2.7){$l(x_{i})$}
 		\multido{\ix=8+1}{16}{\pscircle(\ix,4){0.2}}
 		\pscircle(31,4){0.2}
 		\put(30,2.7){$r(x_{i})$}

 		\multido{\nx=8+1}{16}{\psline(\nx,3.5)(\nx,3)}
 		\multido{\na=8+2,\nb=9+2}{8}{\psline(\na,3)(\nb,3)}
 		\multido{\nx=8.50+2.00}{8}{\psline(\nx,2.5)(\nx,2)}
 		\multido{\na=8.50+4.00,\nb=10.50+4.00}{4}{\psline(\na,2)(\nb,2)}
 		\multido{\nx=9.50+4.00}{4}{\psline(\nx,1.5)(\nx,1)}
 		\multido{\na=9.50+8.00,\nb=13.50+8.00}{2}{\psline(\na,1)(\nb,1)}
 		\multido{\nx=11.50+8.00}{2}{\psline(\nx,0.5)(\nx,0)}
		\psline(11.5,0)(19.5,0)

		\psline{|-|}(8,6)(23,6)
 		\put(14,6.3){$2^n-1$}
		\psline{|-|}(23,5)(31,5)
 		\put(26,5.3){$2^{n-1}$}
		\psline{|-|}(0,5)(15,5)
 		\put(3,5.3){$2^n-1$}

	\end{pspicture}
\end{center}
\caption{A part of the dendrogram for $W(x_{i})$.}
\label{lower_d1_dendroW}
\end{figure}
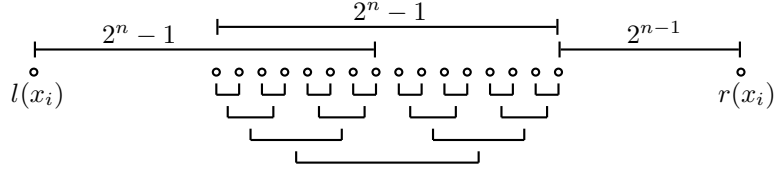

For $i=1,\ldots,4$, the diameters of \(\{l(x_{i})\}\cup V(x_{i})\) and \(V(x_{i})\cup\{r(x_{i})\}\) are equal to $3\cdot2^{n-1}-1$ and these are the best possible merge steps.
Therefore, by (\ref{gap}), we can assume that Algorithm~\ref{diam_algo} merges $r(x_{i})$ and $l(x_{i+1})$ for $i=1,\ldots,3$ first.
This results in the following $7$-clustering:
\begin{align*}
\C_7=\bigl\{&\{l(x_{1})\}\cup V(x_{1}),\\
&\{r(x_{1}),l(x_{2})\},V(x_{2}),\\
&\{r(x_{2}),l(x_{3})\},V(x_{3}),\\
&\{r(x_{3}),l(x_{4})\},V(x_{4})\cup\{r(x_{4})\}\bigr\}
\end{align*}
where $V(x_{2})$ and $V(x_{3})$ have a diameter of $2^n-1$ while the remaining clusters have a diameter of $3\cdot2^{n-1}-1$ (see Figure~\ref{lower_d1_dendroAll}).
Between two neighboring clusters of $\C_7$, there is a gap of $2^{n-1}$.

\begin{figure}[t]
\begin{center}
	\psset{xunit=0.12cm,yunit=0.12cm,runit=0.12cm}
	\begin{pspicture}(0,0)(93,16)
		\multido{\na=0+26,\nb=4+26,\nc=11+26,\nd=15+26}{4}{
			\pscircle(\na,10){0.2}
			\multido{\nx=\nb+1}{8}{\pscircle(\nx,10){0.2}}
			\pscircle(\nd,10){0.2}
			\psline(\nb,9)(\nb,8)
			\psline(\nc,9)(\nc,8)
			\psline(\nb,8)(\nc,8)
		}

		\psline(15,9)(15,8)
		\psline(26,9)(26,8)
		\psline(15,8)(26,8)

		\psline(41,9)(41,8)
		\psline(52,9)(52,8)
		\psline(41,8)(52,8)

		\psline(67,9)(67,8)
		\psline(78,9)(78,8)
		\psline(67,8)(78,8)

		\psline(00,9)(00,6)
		\psline(7.5,7)(7.5,6)
		\psline(00,6)(7.5,6)

		\psline(93,9)(93,6)
		\psline(85.5,7)(85.5,6)
		\psline(93,6)(85.5,6)

		\psline(20.5,7)(20.5,6)
		\psline(33.5,7)(33.5,6)
		\psline(20.5,6)(33.5,6)

		\psline(72.5,7)(72.5,6)
		\psline(59.5,7)(59.5,6)
		\psline(72.5,6)(59.5,6)

		\psline(3.75,5)(3.75,4)
		\psline(27,5)(27,4)
		\psline(3.75,4)(27,4)

		\psline(15.375,3)(15.375,2)
 		\put(14,-1){$C_1$}
		\psline(46.5,7)(46.5,2)
 		\put(45,-1){$C_2$}
		\psline(66,5)(66,2)
 		\put(64,-1){$C_3$}
		\psline(89.25,5)(89.25,2)
 		\put(88,-1){$C_4$}

		\psline{|-|}(0,12)(37,12)
 		\put(16,13){$\scriptstyle5\cdot2^n-3$}
		\psline{|-|}(41,12)(52,12)
 		\put(42,13){$\scriptstyle3\cdot2^{n-1}-1$}
		\psline{|-|}(56,12)(78,12)
 		\put(64,13){$\scriptstyle3\cdot2^n-2$}
		\psline{|-|}(82,12)(93,12)
 		\put(83,13){$\scriptstyle3\cdot2^{n-1}-1$}
	\end{pspicture}
\end{center}
\caption{A part of the dendrogram for $X$.}
\label{lower_d1_dendroAll}
\end{figure}
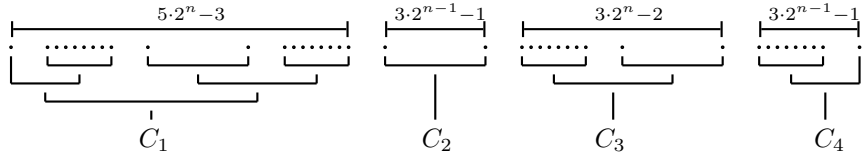

In the next step of Algorithm~\ref{diam_algo}, the best possible choice is to merge $\{r(x_{1}),l(x_{2})\}$ with $V(x_{2})$,  $\{r(x_{2}),l(x_{3})\}$ with $V(x_{2})$ or $V(x_{3})$, or $\{r(x_{3}),l(x_{4})\}$ with $V(x_{3})$.
We let it merge $\{r(x_{1}),l(x_{2})\}$ with $V(x_{2})$ and $\{r(x_{3}),l(x_{4})\}$ with $V(x_{3})$.
This results in a $5$-clustering where the clusters have alternating lengths of $3\cdot2^{n-1}-1$ and $3\cdot2^n-2$ with gaps of $2^{n-1}$ between them.
Then, in the step resulting in $\C_4$, Algorithm~\ref{diam_algo} has to create a cluster of diameter $5\cdot2^n-3$ as shown in Figure~\ref{lower_d1_dendroAll}.
Therefore, the computed solution has an approximation factor of
\[\frac{\diamcost(\C_4)}{\diamcost(\C_4^{opt})}=\frac{5\cdot2^n-3}{2^{n+1}-1}.\]
For $n$ going to infinity this approximation factor converges from below to $\frac{5}{2}$.
\end{proof}

\subsection{$\ell_\infty$-metric and $d=2$}\label{lower_d2}
In this section, we give a construction that needs only eight points from $\R^2$ and yields a lower bound of $3$ for the metric based on the $\ell_\infty$-norm.
Recall that in Section~\ref{upper_d1}, we showed that for $d=1$ the approximation factor of a computed solution is always strictly less than $3$.
Therefore, the lower bound of $3$ for $d=2$ implies that the dimension $d$ has an impact on the approximation factor of Algorithm~\ref{diam_algo}.

\begin{proposition}
For the metric based on the $\ell_\infty$-norm, there exists an input set $X\subset\R^2$ such that Algorithm~\ref{diam_algo} computes a solution to Problem~\ref{diam_prob} with three times the cost of an optimal solution.
\end{proposition}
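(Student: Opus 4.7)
The plan is to exhibit a concrete 8-point set $X \subset \R^2$ for a small fixed $k$ (most likely $k=2$), compute the cost of an optimal diameter-$k$-clustering $\C_k^{opt}$ explicitly, and then trace an adversarially-chosen run of Algorithm~\ref{diam_algo} that ends in a $k$-clustering whose diameter is exactly $3\cdot\diamcost(\C_k^{opt})$. The freedom to resolve ties adversarially was granted in the opening paragraph of this section, so the construction will be built around forcing as many ties as possible at the scale of the eventual optimal diameter $D$.

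First, I would pick coordinates that make $X$ split naturally into two intended groups of four points each, both fitting in an axis-aligned square of side length $D$, so that $\diamcost(\C_k^{opt}) = D$. Because the $\ell_\infty$-ball is itself an axis-aligned square, a generic square of side $D$ admits a surprising amount of "extremal" structure: many different pairs can simultaneously realise distance $D$, and pairs belonging to different intended optimal clusters can be arranged to have $\ell_\infty$-distance equal to the pairwise distance of closest intra-group points. This is the crucial feature that is not available in $d=1$ and is the reason the one-dimensional upper bound (strictly below $3$) does not extend to the $\ell_\infty$-metric in the plane.

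Next, I would trace Algorithm~\ref{diam_algo} on $X$, using the adversarial tie-breaking at every merge where more than one candidate achieves the current minimum diameter. The order of the argument mirrors that of Section~\ref{lower_d1}: identify the initial merges (those minimizing $\diam$ among singleton pairs, which will all be ties), choose them so that cross-group pairs are merged, and then propagate the consequences, at each subsequent step pointing out that the selected merge is still among those that realise the minimum new-cluster diameter. Eventually, the algorithm is forced to merge two intermediate clusters whose union realises the claimed diameter $3D$; by Observation~\ref{diam_greedy_obs}, this is a lower bound on $\diamcost(\C_k)$.

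Finally, I would directly compute $\diam$ of the final cluster produced by the adversarial run and verify it equals $3D$, yielding the ratio $\diamcost(\C_k)/\diamcost(\C_k^{opt})=3$. The main obstacle is the coordinate design itself: the 8 points must be chosen so that at \emph{every} merge step of the bad run, the selected merge is genuinely optimal for the algorithm (so that the run is feasible even under a worst-case tie-breaking rule), while simultaneously no alternative bad merge breaks the ties in a way that yields a smaller final diameter. Keeping the construction tight at $3D$ (as opposed to, say, $2D$ or $\tfrac{5}{2}D$) requires that the final forced merge combines two clusters which, although each has diameter roughly $2D$ and share a point at $\ell_\infty$-distance $D$ from each other, have extremes that only line up correctly in one coordinate direction -- a balance that the square-shaped $\ell_\infty$-ball uniquely allows. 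Once the right coordinates are written down, the rest of the argument is an unambiguous case-check.
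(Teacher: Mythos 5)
Your proposal correctly identifies the overall shape of the argument --- exhibit a concrete $8$-point set, compute the optimal cost, and trace an adversarial run (using the tie-breaking freedom granted at the start of the section) until the algorithm is forced to create a cluster of diameter three times the optimum. However, the proposal never actually produces the point set. You explicitly defer the construction (``the main obstacle is the coordinate design itself \dots\ once the right coordinates are written down, the rest of the argument is an unambiguous case-check''), but for an existence statement of this kind the construction \emph{is} the proof: without concrete coordinates there is nothing to verify, and it is precisely the feasibility of every step of the bad run under worst-case tie-breaking that needs to be checked. As written, the proposal is a plan for a proof rather than a proof.

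The structural guess is also off. You propose $k=2$ with two intended optimal clusters of four points each, both filling an axis-aligned square of side $D$. The paper instead takes $k=4$ with four optimal clusters of \emph{two} points each: an inner diamond $A=(0,1)$, $B=(1,0)$, $C=(0,-1)$, $D=(-1,0)$ and outer points $E=(-1,2)$, $F=(2,1)$, $G=(1,-2)$, $H=(-2,-1)$, with optimal clustering $\{\{A,E\},\{B,F\},\{C,G\},\{D,H\}\}$ of cost $1$. The bad run merges $A\cup B$ and $C\cup D$ (both of $\ell_\infty$-diameter $1$, tied with the optimal pairs), then $\{A,B\}\cup\{C,D\}$ (diameter $2$, tied with the alternatives), after which every remaining merge --- two outer singletons, or an outer singleton with $\{A,B,C,D\}$ --- has diameter at least $3$. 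So the mechanism is not ``two clusters of diameter roughly $2D$ sharing a point at distance $D$'' but rather exhausting all cheap merges so that the fourth merge necessarily costs $3$. If you want credit for this proposition, you need to supply explicit coordinates and carry out the case-check at each merge step.
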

\begin{proof}
We prove the proposition by constructing an example input set for $k=4$ (see Figure~\ref{lower_d2_fig}).
Consider the following eight points in $\R^2$:
\begin{alignat*}{2}
A&=(0,1),&\qquad E&=(-1,2),\\
B&=(1,0),&\qquad F&=(2,1),\\
C&=(0,-1),&\qquad G&=(1,-2),\\
D&=(-1,0),&\qquad H&=(-2,-1).
\end{alignat*}
The optimal $4$-clustering of these points is
\[\C_{4}^{opt}=\{\{A,E\},\{B,F\},\{C,G\},\{D,H\}\}\]
which has a maximum $\ell_\infty$-diameter of $1$.
However, it is also possible that Algorithm~\ref{diam_algo} starts by merging A with B and C with D.
Then, in the third step, the algorithm will merge $E$ or $F$ with $\{A,B\}$, $G$ or $H$ with $\{C,D\}$, or $\{A,B\}$ with $\{C,D\}$.
We assume the latter.
Thus, in the fourth merge step a cluster of $\ell_\infty$-diameter $3$ will be created.
\end{proof}

\begin{figure}[t]
\begin{minipage}[hbt]{.48\textwidth}
\begin{center}
	\psset{xunit=0.1cm,yunit=0.1cm,runit=0.1cm}
	\begin{pspicture}(-25,-25)(25,25)

 		\pscircle*(0,10){0.6}
 		\pscircle*(10,0){0.6}
 		\pscircle*(0,-10){0.6}
 		\pscircle*(-10,0){0.6}
 		\put(1,7){$A$}
 		\put(11,-3){$B$}
 		\put(1,-13){$C$}
 		\put(-9,-3){$D$}

 		\pscircle*(-10,20){0.6}
 		\pscircle*(20,10){0.6}
 		\pscircle*(10,-20){0.6}
 		\pscircle*(-20,-10){0.6}
 		\put(-9,17){$E$}
 		\put(21,7){$F$}
 		\put(11,-23){$G$}
 		\put(-19,-13){$H$}

		\psline{->}(-25,0)(25,0)
		\psline[linestyle=dotted]{-}(-25,-20)(25,-20)
		\psline[linestyle=dotted]{-}(-25,-10)(25,-10)
		\psline[linestyle=dotted]{-}(-25,10)(25,10)
		\psline[linestyle=dotted]{-}(-25,20)(25,20)
		\psline{->}(0,-25)(0,25)
		\psline[linestyle=dotted]{-}(-20,-25)(-20,25)
		\psline[linestyle=dotted]{-}(-10,-25)(-10,25)
		\psline[linestyle=dotted]{-}(10,-25)(10,25)
		\psline[linestyle=dotted]{-}(20,-25)(20,25)
		
	\end{pspicture}
\end{center}
\caption{Lower bound for the metric based on the $\ell_\infty$-norm.}
\label{lower_d2_fig}
\end{minipage}
\hfill
\begin{minipage}[hbt]{.48\textwidth}
\begin{center}
	\psset{xunit=0.08cm,yunit=0.08cm,runit=0.08cm}
	\begin{pspicture}(-35,-35)(35,35)

 		\pscircle*(-10,10){0.6}
 		\pscircle*(10,10){0.6}
 		\pscircle*(-10,-10){0.6}
 		\pscircle*(10,-10){0.6}
 		\put(-9,6){$A$}
 		\put(11,6){$B$}
 		\put(-9,-14){$C$}
 		\put(11,-14){$D$}

 		\pscircle*(-24,24){0.6}
 		\pscircle*(24,24){0.6}
 		\pscircle*(-24,-24){0.6}
 		\pscircle*(24,-24){0.6}
 		\put(-24,21){$E$}
 		\put(20,21){$F$}
 		\put(-28,-27){$G$}
 		\put(24,-27){$H$}

		\psline{->}(-35,0)(35,0)
		\psline[linestyle=dotted]{-}(-35,-30)(35,-30)
		\psline[linestyle=dotted]{-}(-35,-20)(35,-20)
		\psline[linestyle=dotted]{-}(-35,-10)(35,-10)
		\psline[linestyle=dotted]{-}(-35,10)(35,10)
		\psline[linestyle=dotted]{-}(-35,20)(35,20)
		\psline[linestyle=dotted]{-}(-35,30)(35,30)
		\psline{->}(0,-35)(0,35)
		\psline[linestyle=dotted]{-}(-30,-35)(-30,35)
		\psline[linestyle=dotted]{-}(-20,-35)(-20,35)
		\psline[linestyle=dotted]{-}(-10,-35)(-10,35)
		\psline[linestyle=dotted]{-}(10,-35)(10,35)
		\psline[linestyle=dotted]{-}(20,-35)(20,35)
		\psline[linestyle=dotted]{-}(30,-35)(30,35)
		
		\psarc[linestyle=dashed]{<->}(-10,10){20}{90}{180}
		\psarc[linestyle=dashed]{<->}(10,10){20}{0}{90}
		\psarc[linestyle=dashed]{<->}(-10,-10){20}{180}{270}
		\psarc[linestyle=dashed]{<->}(10,-10){20}{270}{0}

	\end{pspicture}
\end{center}
\caption{Lower bound for the metric based on the $\ell_2$-norm. The points $C,D,G,H$ have a $z$-coordinate of $0$, while the points $A,B,E,F$ have a $z$-coordinate of $2\sqrt{x}$.}
\label{lower_d3_fig}
\end{minipage}
\end{figure}

\subsection{Euclidean metric and $d=3$}\label{lower_d3}
For the Euclidean case, we are able to construct a $3$-dimensional instance that yields a lower bound of $2.56$.
This is below the upper bound of $3$ from the one-dimensional case.
Therefore, this instance does not show an impact of the dimension $d$ in the Euclidean case as in the previous section.
But this lower bound is still better than the lower bound of $2.5$ from the one-dimensional case.
This suggests that in higher dimensions it might be easier to construct good lower bounds.

\begin{proposition}
For the Euclidean metric there exists an input set $X\subset\R^3$ such that Algorithm~\ref{diam_algo} computes a solution to Problem~\ref{diam_prob} with cost $2.56$ times the cost of an optimal solution.
\end{proposition}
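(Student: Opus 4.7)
The plan is to construct an explicit eight-point instance $X\subset\R^3$ for $k=4$, following the ``concentric squares'' pattern of the $\ell_\infty$ construction in Section~\ref{lower_d2} but introducing a third coordinate that carries a tunable parameter $x>0$. Concretely, place four \emph{inner} points at the corners of a square of half-side $a$ and four \emph{outer} points at the corners of a larger square of half-side $b>a$, with the four points of positive $y$-coordinate lifted to height $2\sqrt{x}$:
\begin{align*}
A &= (-a,\,a,\,2\sqrt{x}), & B &= (a,\,a,\,2\sqrt{x}), & E &= (-b,\,b,\,2\sqrt{x}), & F &= (b,\,b,\,2\sqrt{x}),\\
C &= (-a,\,-a,\,0), & D &= (a,\,-a,\,0), & G &= (-b,\,-b,\,0), & H &= (b,\,-b,\,0).
\end{align*}
The intended optimal $4$-clustering $\{\{A,E\},\{B,F\},\{C,G\},\{D,H\}\}$ has cost $(b-a)\sqrt{2}$ because each pair shares the same $x$- and $z$-coordinates and hence lies in a plane parallel to the $y$-axis.

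The next step is to choose $a$, $b$, $x$ so that Algorithm~\ref{diam_algo} can be driven into a specific bad run. The crucial inequality $\|A-B\|\le\|A-E\|$ becomes $2a\le(b-a)\sqrt{2}$, and taking $b=(1+\sqrt{2})a$ makes several pairwise distances tie at $2a$; with the governed tie-breaking, the first two merges produce $\{A,B\}$ and $\{C,D\}$. In the subsequent steps, the parameter $x$ controls the relative ordering of candidate merge diameters such as $\sqrt{8a^2+4x}=\diam(\{A,B,C,D\})$, $\sqrt{4b^2+4x}=\diam(\{E,G\})$, and $2b=\diam(\{E,F\})$, as well as of mixed candidates like $\sqrt{2(a+b)^2+4x}=\|E-D\|$. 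I would pick $x$ so that at each subsequent step the algorithm is forced to produce the worst available cluster, and then compute the final $\diamcost(\C_4)$ as an explicit function of $a$, $b$, and $x$.

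Finally, I would optimize the ratio $\diamcost(\C_4)/((b-a)\sqrt{2})$ over the feasibility region cut out by the merge-selection inequalities. The main obstacle is the delicate balancing: every change that increases the bad-cluster diameter also either raises the optimum or alters which merge the algorithm actually chooses, so one must simultaneously tune all three parameters so that every inequality between competing candidate diameters holds in the required direction. The height $2\sqrt{x}$ provides precisely the additional degree of freedom needed to push the ratio above the $1+\sqrt{2}$ bound that one would obtain in the plane, up to the claimed value $2.56$.
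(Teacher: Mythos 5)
There is a genuine gap: your three-parameter family provably cannot reach the claimed ratio of $2.56$; it is capped at $1+\sqrt{2}\approx2.414$. In your construction the points $E=(-b,b,2\sqrt{x})$ and $F=(b,b,2\sqrt{x})$ agree in the second and third coordinates, so $\|E-F\|=2b$ independently of $x$, and likewise the optimal cost $(b-a)\sqrt{2}$ is independent of $x$. Since the merge creating $\{E,F\}$ (or $\{G,H\}$) is always available at the last step, the greedy rule guarantees that the final cluster has diameter at most $2b$, no matter which merge the algorithm actually performs. On the other hand, for the very first merge to be $A$--$B$ rather than $A$--$E$ you need $2a\leq(b-a)\sqrt{2}$. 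Hence the achievable ratio is at most
\[
\frac{2b}{(b-a)\sqrt{2}}=\frac{2a}{(b-a)\sqrt{2}}+\sqrt{2}\leq1+\sqrt{2},
\]
with equality exactly at your choice $b=(1+\sqrt{2})a$. The height $2\sqrt{x}$ enters only the quantities you must keep \emph{small} for the bad run to be admissible (the diameter $\sqrt{8a^{2}+4x}$ of $\{A,B,C,D\}$ and cross distances such as $\|E-D\|=\sqrt{2(a+b)^{2}+4x}$); it never enters the cost of the final forced merge nor the optimum, so no choice of $x$ can push the ratio past $1+\sqrt{2}$. Your closing assertion that the lift ``provides precisely the additional degree of freedom needed'' is therefore the unsupported -- and in this parametrization false -- step.

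The paper's construction escapes this cap by abandoning the concentric-squares picture: each outer point is placed at distance \emph{exactly} $2=\|A-B\|$ from its inner partner, but in a direction $(\mp x,\sqrt{4-x^{2}},0)$ that rotates with the parameter, e.g. $E=(-(1+x),\,1+\sqrt{4-x^{2}},\,2\sqrt{x})$. This pins the optimum at $2$ while letting $\|E-F\|=2(1+x)$ grow with $x$; the lift $2\sqrt{x}$ of the top four points is tuned so that $\diam(\{A,B,C,D\})=2\sqrt{2+x}=\|E-B\|$, keeping the bad middle merge admissible. The last merge then costs $\min\bigl(2(1+x),\|E-G\|\bigr)$, which is maximized near $x=1.56$, giving $5.12/2=2.56$. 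The decisive idea you are missing is that the outer partner of each inner point must be displaced by a fixed length equal to the optimal diameter in a non-diagonal, $x$-dependent direction, rather than placed on a scaled copy of the inner square.
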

\begin{proof}
We prove the proposition by constructing an example input set for $k=4$ (see Figure~\ref{lower_d3_fig}).
For any fixed $x\in\R$ with $0<x<2$ consider the following eight points in $\R^2$:
\begin{alignat*}{8}
A&=(&-1&,&1&,&\ 2\sqrt{x}&),\qquad&E&=(&-(1+x)&,&1+\sqrt{4-x^2}\phantom{)}&,&\ 2\sqrt{x}&),\\
B&=(&1&,&1&,&2\sqrt{x}&),\qquad&F&=(&1+x\phantom{)}&,&1+\sqrt{4-x^2}\phantom{)}&,&2\sqrt{x}&),\\
C&=(&-1&,&\ -1&,&0&),\qquad&G&=(&-(1+x)&,&\ -(1+\sqrt{4-x^2})&,&0&),\\
D&=(&1&,&-1&,&0&),\qquad&H&=(&1+x\phantom{)}&,&-(1+\sqrt{4-x^2})&,&0&).
\end{alignat*}
The optimal $4$-clustering of these points is
\[\C_{4}^{opt}=\{\{A,E\},\{B,F\},\{C,G\},\{D,H\}\},\]
which has a maximum $\ell_2$-diameter of $2$.
However, since $\|A-B\|=\|C-D\|=2$ it is possible that Algorithm~\ref{diam_algo} starts by merging $A$ with $B$ and $C$ with $D$.
Then, the cheapest merge adds one of the points $E,F$ to the cluster $\{A,B\}$ or it adds one of the points $G,H$ to the cluster $\{C,D\}$ or it merges $\{A,B\}$ with $\{C,D\}$.
We assume the latter.
The resulting cluster $\{A,B,C,D\}$ has a diameter of $2\sqrt{2+x}$.
Then, in the fourth merge step, the algorithm will either merge one of the pairs $E,F$ and $G,H$ or one of the pairs $E,G$ and $F,H$.
The choice depends on the parameter $x$.
Note that Algorithm~\ref{diam_algo} will not merge the cluster $\{A,B,C,D\}$ with one of the remaining four points, since this is always more expensive.
The diameter of the created cluster is maximized for $x\approx1.56$.
If we fix $x=1.56$, the algorithm merges $E$ with $F$ or $G$ with $H$.
This results in a $4$-clustering of cost $5.12$, while the optimal solution has cost $2$.
\end{proof}

\subsection{$\ell_p$-metric ($1 \leq p \leq \infty$) in variable dimension}\label{lower_dvar}
In the following, we consider the diameter $k$-clustering problem with respect to the metric based on the $\ell_1$-norm.
We show that there exists an input instance in dimension $O(k)$ such that Algorithm~\ref{diam_algo} computes a solution with an approximation factor of $\Omega(\log k)$.

\begin{proposition}\label{lower_dvar_diam_prop}
For the metric based on the $\ell_1$-norm, there exists an input set $X \subset \R^d$ with $d=k+\log_2k$ such that Algorithm~\ref{diam_algo} computes a solution to Problem~\ref{diam_prob} with $\frac1{2}\log_2k$ times the cost of an optimal solution.
\end{proposition}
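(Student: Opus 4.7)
The plan is to construct an explicit input instance $X \subset \R^{k+L}$ (with $L=\log_2 k$) together with an execution of Algorithm~\ref{diam_algo} whose output $k$-clustering $\C_k$ satisfies $\diamcost(\C_k) \geq \tfrac{1}{2} L \cdot \opt_k$. The construction will be hierarchical, corresponding to a complete binary tree of depth $L$ whose $k$ leaves represent the $k$ optimal clusters. The first $k$ coordinates play the role of ``cluster identifiers'': each optimal cluster is placed inside the hyperplane obtained by setting one of those coordinates to a large value $M$ and the others to zero, so that points from distinct optimal clusters are always separated by $\ell_1$-distance at least $2M$ along those first $k$ coordinates. Inside each optimal cluster I place a small constant number of points whose last $L$ coordinates are chosen in a coordinated way (driven by the binary representation $b(i)\in\{0,1\}^L$ of the cluster index $i$) so that the within-cluster $\ell_1$-diameter is exactly $1$. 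This gives $\opt_k=1$.

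The last $L$ coordinates encode the binary tree. For each level $\ell\in\{1,\dots,L\}$ of the tree, the $(k+\ell)$-th coordinate is used to place ``bridge'' candidate points between sibling subtrees at that level, at $\ell_1$-distance equal to the cheapest within-cluster merge currently available. Under an adversarial tie-breaking rule, the algorithm is then forced, at the appropriate stage of each phase, to choose the bridging merge, joining two clusters that belong to different subtrees at level $\ell$. Because the bridge uses a previously unused coordinate among the last $L$, the $\ell_1$-diameter of the resulting cluster grows by exactly $1$ at that phase. Thanks to the additivity of $\ell_1$ across orthogonal coordinates, the bad merges compound across the $L$ levels, and one cluster in the algorithm's output accumulates diameter at least $L$; after accounting for the within-cluster merges that must happen in parallel before the algorithm reaches exactly $k$ clusters, the surviving bad cluster in $\C_k$ still has $\ell_1$-diameter at least $L/2$.

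The hard part will be verifying that the adversarial sequence of merges is greedy-consistent, i.e., that at each step where we pick a bridging merge, no other candidate merge has strictly smaller $\ell_1$-diameter. This requires a coordinate-by-coordinate analysis of all candidate merges at every stage, and a careful tuning of the scales at the different bridging levels so that the required ties are preserved throughout the execution. A secondary technical point is to count precisely how many merges take place inside each phase, so that the accumulated bridging merges appear in $\C_k$ rather than being averaged away by intermediate within-cluster merges. Once both checks are carried out, combining $\diamcost(\C_k)\geq L/2$ with $\opt_k=1$ gives the claimed approximation ratio $\tfrac{1}{2}\log_2 k$.
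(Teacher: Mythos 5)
There is a genuine structural flaw in your construction: you have inverted the roles that the two coordinate blocks need to play. You separate distinct optimal clusters by $\ell_1$-distance at least $2M$ in the first $k$ coordinates, with $M$ large, while each optimal cluster has diameter $1$. But then \emph{every} union of two clusters containing points from two distinct optimal clusters has diameter at least $2M$, whereas as long as some optimal cluster is not yet fully assembled there is a within-cluster merge of diameter at most $1$ available. Since Algorithm~\ref{diam_algo} always performs a merge of minimum diameter, it will carry out all within-cluster merges before any cross-cluster ``bridging'' merge, and once those are done it holds exactly $k$ clusters coinciding with the optimal ones; hence $\diamcost(\C_k)=\opt_k$ and no lower bound is obtained. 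The greedy-consistency check you defer to the end is therefore not a technicality but the point at which the construction collapses. Adding extra bridge points between far-apart clusters does not rescue it: such a point lies at distance $\Omega(M)$ from at least one of the two groups it is meant to connect, so every $k$-clustering containing it has cost $\Omega(M)$ and $\opt_k$ is no longer $1$.

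The correct construction (the paper's) swaps the roles of the blocks. Take $X=\left\{[e_i;b] : 1\le i\le k,\ b\in\{0,1\}^{\log_2 k}\right\}$, with $k$ points per optimal cluster rather than $O(1)$. An optimal cluster $C_b$ consists of all points with the \emph{same} tail $b$ and \emph{varying} head $e_i$, so the optimum is the clustering that pays the separation of $2$ in the first $k$ coordinates, giving $\opt_k\le 2$. The distance between $[e_i;b_1]$ and $[e_j;b_2]$ is $h(b_1,b_2)$ if $i=j$ and $2+h(b_1,b_2)$ otherwise, where $h$ is the Hamming distance; consequently the greedy algorithm can be steered to always merge clusters sharing the same $e_i$, climbing the binary tree over subcubes of the $b$-coordinates and ending with $k$ clusters of diameter $\log_2 k$, which yields the ratio $\frac{1}{2}\log_2 k$. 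The moral is that the separation the algorithm is eventually charged must equal the quantity the optimum already pays (here $2$), not an arbitrarily large $M$ that the greedy rule will simply refuse to cross.
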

\begin{proof}
For simplicity's sake, assume $k$ to be a power of 2.
In the sequel, we consider the $(k+\log_2k)$-dimensional set $X$ of $|X|=k^2$ points defined by
\[X = \left\{ \left[ \begin{matrix}e_i\\b\end{matrix} \right] \,\vline~ \forall 1 \leq i \leq k ~\mbox{and}~ b \in \{0,1\}^{\log_2k} \right\} .\]
Here, $e_i \in \R^k$ denotes the $i$-th canonical unit vector.
Consider the following $k$-clustering 
\[\C_k^* = \left\{ C_b \where b \in \{0,1\}^{\log_2k} \right\},\]
where for each $b \in \{0,1\}^{\log_2k}$ cluster $C_b$ is given by
\[C_b = \left\{ \left[ \begin{matrix}e_i\\b\end{matrix} \right] \where \forall 1 \leq i \leq k \right\}.\]
The largest diameter of $\C_k^*$ is \(\diamcost(\C_k^*)=2\).
Hence for $\opt_k$, the diameter of an optimal solution, it holds
\begin{equation}\label{lower_dvar_diam_eq}
\opt_k\leq2.
\end{equation}
However, we find that
\[\diam\left( \left\{ \left[ \begin{matrix}e_i\\b_1\end{matrix} \right], \left[ \begin{matrix}e_j\\b_2\end{matrix} \right]  \right\} \right) = \begin{cases}h(b_1,b_2)&\text{if $i=j$}\\2 + h(b_1,b_2)&\text{if $i\not=j$}\end{cases}\]
where $h(b_1,b_2)$ denotes the Hamming distance between the strings $b_1,b_2 \in \{0,1\}^{\log_2k}$.
Hence, we may assume that Algorithm~\ref{diam_algo} starts by merging points $[e_i,0,b']^\top$ and $[e_i,1,b']^\top$ for all $1 \leq i \leq k$ and all $b' \in \{0,1\}^{\log_2(k)-1}$, thereby forming $\frac1{2}k^2$ clusters of diameter 1.

Next, we show inductively that Algorithm~\ref{diam_algo} keeps merging pairs of clusters that agree on the first $k$ coordinates until the algorithm halts.
To this end, assume that there is some number $1 \leq t \leq \log_2k$ such that the clustering computed so far consists solely of the clusters
\[C_{i,b'}^{(t)} = \left\{ \left[ \begin{matrix}e_i\\b\\b'\end{matrix} \right] \where b \in\{0,1\}^t \right\}\]
for all $1 \leq i \leq k$ and all $b'\in\{0,1\}^{\log_2(k)-t}$.
Also note that this is the case with $t=1$ after the first $\frac1{2}k^2$ merges.
In such a case, we have
\[\diam\left( C_{i,b_1}^{(t)} \cup C_{j,b_2}^{(t)}\right) = \begin{cases}t + h(b_1,b_2)&\text{if $i=j$}\\2 + t + h(b_1,b_2)&\text{if $i\not=j$}\end{cases}.\]
Hence, as above, we may assume that in the next $\frac1{2^{t+1}}k^2$ steps Algorithm~\ref{diam_algo} merges the clusters $C_{i,0b'}^{(t)}$ and $C_{i,1b'}^{(t)}$ for all $1 \leq i \leq k$ and all $b' \in \{0,1\}^{\log_2(k)-(t+1)}$.
The resulting clusters are of diameter $t+1$.
Also, we have $C_{i,b'}^{(t+1)} = C_{i,0b'}^{(t)} \cup C_{i,1b'}^{(t)}$.

Algorithm~\ref{diam_algo} keeps merging clusters in this way until after $t=\log_2k$ rounds we end up with the $k$-clustering $\C_k = \{C_i \where 1 \leq i \leq k\}$ where
\[C_i = \left\{ \left[ \begin{matrix}e_i\\b\end{matrix} \right] \where b \in\{0,1\}^{\log_2k} \right\} ~.\]
These clusters $C_i$ are of diameter $\log_2k$.
Comparing to (\ref{lower_dvar_diam_eq}), we deduce that Algorithm~\ref{diam_algo} computes a solution to Problem~\ref{diam_prob} with at least $\frac{1}{2}\log_2k$ times the cost of an optimal solution.
\end{proof}

Considering the diameter $k$-clustering problem with respect to an arbitrary $\ell_p$-metric (with $1 \leq p < \infty$), note that the behavior of Algorithm~\ref{diam_algo} does not change if we consider the $p$-th power of the $\ell_p$-distance instead of the $\ell_p$-distance.
Also note that for all $x,y \in \{0,1\}^d$ we have $\|x-y\|_p^p = \|x-y\|_1$.
Since instance $X$ from Proposition~\ref{lower_dvar_diam_prop} is a subset of $\{0,1\}^d$, we immediately obtain the following corollary.

\begin{corollary}\label{lower_dvar_diam_lpcor}
For the metric based on any $\ell_p$-norm with $1 \leq p < \infty$, there exists an input set $X \subset \R^d$ with $d=k+\log_2k$ such that Algorithm~\ref{diam_algo} computes a solution to Problem~\ref{diam_prob} with $\sqrt[p]{\frac1{2}\log_2k}$ times the cost of an optimal solution.
\end{corollary}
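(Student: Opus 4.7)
The plan is to reuse the instance $X \subset \{0,1\}^d$ from Proposition~\ref{lower_dvar_diam_prop} without any modification, and to argue that under any $\ell_p$-metric with $1 \leq p < \infty$ Algorithm~\ref{diam_algo} traces out exactly the same sequence of merges as it does under $\ell_1$. The two observations stated immediately before the corollary do the heavy lifting: (i) since $t \mapsto t^p$ is strictly increasing on $[0,\infty)$, running Algorithm~\ref{diam_algo} on the $\ell_p$-metric produces the same hierarchy as running it on the \emph{$p$-th power} of the $\ell_p$-metric, because at every step the merge that minimises $\diam(A\cup B)$ is the same as the one that minimises $\diam(A\cup B)^p$; (ii) for points in $\{0,1\}^d$, one has $\|x-y\|_p^p = \|x-y\|_1$, so on this particular instance the $p$-th power of the $\ell_p$-metric coincides with the $\ell_1$-metric.

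Combining (i) and (ii), the hierarchy computed on $X$ under the $\ell_p$-metric is identical (as a sequence of set-theoretic merges) to the one analysed in the proof of Proposition~\ref{lower_dvar_diam_prop}. In particular, the algorithm ends up with the same $k$-clustering $\C_k = \{C_i : 1 \leq i \leq k\}$ and the optimal $k$-clustering is still $\C_k^*$. It only remains to translate the two cost numbers from $\ell_1$ back to $\ell_p$: the $\ell_1$-diameter $\log_2 k$ of each $C_i$ becomes an $\ell_p$-diameter of $\sqrt[p]{\log_2 k}$, and the $\ell_1$-diameter $2$ of each $C_b$ becomes an $\ell_p$-diameter of $\sqrt[p]{2}$. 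Taking the ratio yields the claimed approximation factor $\sqrt[p]{\tfrac12 \log_2 k}$.

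There is essentially no obstacle here; the entire content of the corollary is the reduction to Proposition~\ref{lower_dvar_diam_prop} via the identity $\|x-y\|_p^p = \|x-y\|_1$ on the Boolean cube. The only mild care required is the monotonicity argument for (i), which must be phrased so that it applies uniformly to every merge step (not just the final $k$-clustering), and the observation that the optimal $\ell_p$-cost really is $\sqrt[p]{2}$ (and not smaller) — but this too follows from the fact that on $\{0,1\}^d$ the $\ell_p$ diameter of any set equals the $p$-th root of its $\ell_1$ diameter, so $\opt_k^{\ell_p} = \sqrt[p]{\opt_k^{\ell_1}} \leq \sqrt[p]{2}$ suffices for the lower-bound direction.
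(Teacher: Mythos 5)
Your proposal is correct and follows exactly the paper's argument: the paper likewise derives the corollary from Proposition~\ref{lower_dvar_diam_prop} by noting that the algorithm's behaviour is unchanged under the $p$-th power of the $\ell_p$-distance and that $\|x-y\|_p^p = \|x-y\|_1$ on $\{0,1\}^d$. Your added care about monotonicity applying to every merge step and about the optimal $\ell_p$-cost is a welcome but minor elaboration of the same reasoning.
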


Additionally, considering the diameter $k$-clustering problem with respect to the $\ell_\infty$-metric, it is known that every $n$-point subset of an arbitrary metric space can be embedded isometrically into $(\R^n,\ell_\infty)$  \cite{Freche10}.
Hence, the instance from Proposition~\ref{lower_dvar_diam_prop} of size $n=k^2$ yields an instance in $\R^{k^2}$ satisfying the same approximation bound with respect to the $\ell_\infty$-distance.
We obtain the following corollary.

\begin{corollary}\label{lower_dvar_diam_lmaxcor}
For the metric based on the $\ell_\infty$-norm, there exists an input set $X \subset \R^d$ with $d=k^2$ such that Algorithm~\ref{diam_algo} computes a solution to Problem~\ref{diam_prob} with $\frac1{2}\log_2k$ times the cost of an optimal solution.
\end{corollary}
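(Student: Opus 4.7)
The plan is to invoke the classical Fréchet embedding that is already cited in the paragraph preceding the corollary, and piggy-back on the analysis already carried out in Proposition~\ref{lower_dvar_diam_prop}. Concretely, let $X\subset\R^{k+\log_2 k}$ be the $\ell_1$-instance of size $n=k^2$ constructed in the proof of Proposition~\ref{lower_dvar_diam_prop}. View $X$ as a finite metric space with the distance induced by $\|\cdot\|_1$. By the Fréchet embedding, there exists an isometry $\varphi\colon (X,\|\cdot\|_1)\to(\R^n,\|\cdot\|_\infty)$, so in particular $\|\varphi(x)-\varphi(y)\|_\infty=\|x-y\|_1$ for all $x,y\in X$.

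Next I would observe that Algorithm~\ref{diam_algo}, as well as the diameter cost $\diamcost$ and the optimum value $\opt_k$, depend only on the pairwise distances among the input points. Formally, one checks that the sequence of merges performed by the algorithm is determined by the multiset of values $\{\|x-y\|\where x,y\in X\}$ together with the assumed tie-breaking, because line~\ref{diam_algo_min} compares diameters of unions, and $\diam(C)=\max_{x,y\in C}\|x-y\|$ is a function of pairwise distances only. Consequently, if we run Algorithm~\ref{diam_algo} on the embedded set $\varphi(X)\subset\R^{k^2}$ under the $\ell_\infty$-metric, it produces a hierarchy $\varphi(\C_{|X|}),\ldots,\varphi(\C_1)$ that mirrors, cluster for cluster, the hierarchy produced on $X$ in Proposition~\ref{lower_dvar_diam_prop}. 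The same holds for any optimal $k$-clustering, so $\opt_k$ is preserved as well.

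Combining these two observations, the $k$-clustering computed on $\varphi(X)$ satisfies $\diamcost(\varphi(\C_k))=\diamcost(\C_k)=\log_2 k$ while the optimum $k$-clustering of $\varphi(X)$ has cost at most $2$, exactly as in the proof of Proposition~\ref{lower_dvar_diam_prop}. Dividing these two quantities yields the claimed lower bound of $\tfrac{1}{2}\log_2 k$, and the dimension of the ambient space is $n=k^2$ as required.

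The only mildly delicate point is the tie-breaking convention. Since the argument in Proposition~\ref{lower_dvar_diam_prop} explicitly allows the algorithm to resolve ties adversarially (to our benefit), and the embedding preserves the entire structure of ties (distances are preserved), the same adversarial schedule is realizable on $\varphi(X)$. No new computation is required beyond invoking the isometric embedding and noting the distance-only dependence of the algorithm; this is the whole proof, and I do not anticipate an obstacle.
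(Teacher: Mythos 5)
Your proposal is correct and follows exactly the paper's own route: isometrically embed the $\ell_1$-instance of Proposition~\ref{lower_dvar_diam_prop} into $(\R^{k^2},\ell_\infty)$ via the Fr\'echet embedding and observe that the algorithm's behavior and the optimum depend only on pairwise distances. Your additional remarks on tie-breaking just make explicit what the paper leaves implicit.
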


\subsubsection{The discrete $k$-center problem}\label{drad_lower}
The input instance $X$ from Proposition~\ref{lower_dvar_diam_prop} also proves lower bounds on the approximation factor of the agglomerative solution to the discrete $k$-center problem.
To this end, just note that for the instance $X$ in every step of the algorithm the minimal discrete radius of a cluster equals the diameter of the cluster.
We immediately obtain the following corollaries.

\begin{corollary}\label{lower_dvar_drad_lpcor}
For the metric based on any $\ell_p$-norm with $1 \leq p < \infty$, there exists an input set $X \subset \R^d$ with $d=k+\log_2k$ such that Algorithm~\ref{drad_algo} computes a solution to Problem~\ref{drad_prob} with $\sqrt[p]{\frac1{2}\log_2k}$ times the cost of an optimal solution.
\end{corollary}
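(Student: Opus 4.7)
The plan is to reuse the very input set $X \subset \{0,1\}^d$ with $d = k + \log_2 k$ constructed in Proposition~\ref{lower_dvar_diam_prop} and argue that Algorithm~\ref{drad_algo} produces essentially the same run as Algorithm~\ref{diam_algo} on this instance, so the diameter-based lower bound carries over to the discrete radius. The hint already embedded in the text is exactly the lemma I want to verify: for every cluster that arises during the execution, the discrete radius coincides with the diameter. Once this coincidence is established, the greedy choices made by Algorithm~\ref{drad_algo} agree with those of Algorithm~\ref{diam_algo}, and the same ratio will pop out.

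More concretely, I would first treat the $\ell_1$ case. I would consider the canonical intermediate clusters $C_{i,b'}^{(t)} = \{[e_i,b,b']^\top \mid b \in \{0,1\}^t\}$ from the proof of Proposition~\ref{lower_dvar_diam_prop} and compute, for any potential merge, both $\diam(C_{i,b_1}^{(t)} \cup C_{j,b_2}^{(t)})$ and $\drad(C_{i,b_1}^{(t)} \cup C_{j,b_2}^{(t)})$. Since the candidate cluster lies in the hypercube and every coordinate acts symmetrically, choosing any point of the union as a discrete center achieves the full diameter as maximum distance: the $\ell_1$ distance from $[e_i,b_0,b_1]^\top$ to $[e_j,b,b_2]^\top$ is $\|e_i-e_j\|_1 + h(b,b_0) + h(b_1,b_2)$, which is maximized to $2\cdot[i\neq j] + t + h(b_1,b_2)$. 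Thus $\drad$ equals $\diam$ on all such clusters, so Algorithm~\ref{drad_algo} has the same tie-broken ordering of candidate merges as Algorithm~\ref{diam_algo}. Inductively, the exact same sequence of merges analyzed in Proposition~\ref{lower_dvar_diam_prop} is a valid run, producing a $k$-clustering of discrete radius $\log_2 k$. Meanwhile the target clustering $\C_k^*$ used there has every cluster of discrete radius $2$ (any representative point in $C_b$ is at $\ell_1$-distance $2$ from any other point of $C_b$), so $\opt_k \leq 2$ for the discrete $k$-center problem. This gives the ratio $\tfrac{1}{2}\log_2 k$ for $p=1$.

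For general $1 \leq p < \infty$, I would extend the argument exactly as in Corollary~\ref{lower_dvar_diam_lpcor}. Since $X \subset \{0,1\}^d$, we have $\|x-y\|_p = \|x-y\|_1^{1/p}$ for all $x,y \in X$, and this identity is preserved for any pair of points inside any cluster arising in the run. In particular, both $\drad_p(C) = \drad_1(C)^{1/p}$ for every such $C$, and the strict ordering of candidate discrete radii is preserved under the monotone map $t \mapsto t^{1/p}$. Hence the same run of Algorithm~\ref{drad_algo} is valid under the $\ell_p$ metric, the computed clusters have discrete radius $(\log_2 k)^{1/p}$, the optimum clustering $\C_k^*$ has discrete radius $2^{1/p}$, and the ratio is $\sqrt[p]{\tfrac{1}{2}\log_2 k}$.

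I do not expect a serious obstacle; the main thing to be careful about is the verification that $\drad = \diam$ holds not only for the "canonical" intermediate clusters of the inductive description but for every candidate union the algorithm might evaluate at each step, so that the tie-breaking assumption that lets us pick the same run as in Proposition~\ref{lower_dvar_diam_prop} is actually justified for the discrete-radius-minimizing algorithm. This reduces to the symmetric hypercube computation sketched above and is routine.
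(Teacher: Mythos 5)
Your proposal is correct and is essentially the paper's own argument: the paper justifies the corollary by the single remark that on this instance the minimal discrete radius of every cluster arising during the run equals its diameter, so the run of Algorithm~\ref{drad_algo} coincides with that of Algorithm~\ref{diam_algo} and the diameter lower bound transfers, with the $\ell_p$ case handled via $\|x-y\|_p=\|x-y\|_1^{1/p}$ on $\{0,1\}^d$ exactly as you describe. You merely spell out the hypercube computation that the paper leaves implicit.
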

\begin{corollary}\label{lower_dvar_drad_lmaxcor}
For the metric based on the $\ell_\infty$-norm, there exists an input set $X \subset \R^d$ with $d=k^2$ such that Algorithm~\ref{drad_algo} computes a solution to Problem~\ref{drad_prob} with $\frac1{2}\log_2k$ times the cost of an optimal solution.
\end{corollary}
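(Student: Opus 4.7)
The plan is to reuse the input instance $X$ from Corollary~\ref{lower_dvar_diam_lmaxcor} (the instance of Proposition~\ref{lower_dvar_diam_prop} isometrically embedded into $(\R^{k^2},\ell_\infty)$) and show that on $X$ Algorithm~\ref{drad_algo} admits a run that performs exactly the same sequence of merges as the run of Algorithm~\ref{diam_algo} analyzed in Proposition~\ref{lower_dvar_diam_prop}. The guiding observation, already hinted at in the text, is that every cluster appearing in that run is (affinely equivalent to) a Boolean hypercube, and in any such subcube the discrete radius equals the diameter, because from every vertex its coordinate-wise antipode is in the cluster and realizes the maximum distance.

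I would argue by induction on $t$ that at stage $t$ of such a run the surviving clusters are exactly the $t$-dimensional subcubes $C_{i,b'}^{(t)}$ from Proposition~\ref{lower_dvar_diam_prop}, each satisfying $\drad(C_{i,b'}^{(t)})=\diam(C_{i,b'}^{(t)})=t$. The base case $t=0$ is trivial. For the induction step, the ``canonical'' merge $C_{i,0b'}^{(t)}\cup C_{i,1b'}^{(t)}$ yields a $(t+1)$-dimensional subcube of discrete radius $t+1$, so the task reduces to showing that no competing merge at this stage achieves a strictly smaller discrete radius.

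The necessary case analysis is short. For a cross-$i$ merge $C_{i,b'_1}^{(t)}\cup C_{j,b'_2}^{(t)}$ with $i\neq j$, from every vertex of the union its coordinate-wise opposite lies in the opposite half at distance $2+t+h(b'_1,b'_2)\geq 2+t$, so the discrete radius is at least $2+t$. For a same-$i$ merge $C_{i,b'_1}^{(t)}\cup C_{i,b'_2}^{(t)}$ with $h(b'_1,b'_2)\geq 2$, the same antipodal argument yields discrete radius $t+h(b'_1,b'_2)\geq t+2$. Both strictly exceed $t+1$, so under the tie-breaking of Proposition~\ref{lower_dvar_diam_prop} Algorithm~\ref{drad_algo} outputs the same $k$-clustering $\C_k$, whose $(\log_2 k)$-dimensional subcube clusters have discrete radius $\log_2 k$.

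Finally, to upper-bound $\opt_k$ I would use the clustering $\C_k^*$ from Proposition~\ref{lower_dvar_diam_prop}: each of its clusters consists of $k$ points sharing a common binary tag, any two of which lie at distance exactly $2$, so $\opt_k\leq\dradcost(\C_k^*)=2$. Combined with $\dradcost(\C_k)=\log_2 k$ this yields the claimed $\tfrac{1}{2}\log_2 k$ lower bound. The only step requiring real verification is the case analysis in the third paragraph, and it is an immediate consequence of the antipodal-vertex property of Boolean subcubes.
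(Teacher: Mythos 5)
Your proposal is correct and follows essentially the same route as the paper: reuse the instance from Proposition~\ref{lower_dvar_diam_prop} (isometrically embedded into $(\R^{k^2},\ell_\infty)$) and observe that for every cluster arising in the run the discrete radius coincides with the diameter, so Algorithm~\ref{drad_algo} can be made to perform the same merges as Algorithm~\ref{diam_algo}. The paper states this coincidence in one line, whereas you justify it via the antipodal-vertex property of Boolean subcubes and a short case analysis; this is a faithful elaboration, not a different argument.
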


Moreover, in case of the $\ell_2$-norm, we obtain the following corollary.
\begin{corollary}\label{lower_dvar_drad_l2cor}
For the metric based on the $\ell_2$-norm, there exists an input set $X \subset \R^d$ with $d=O(\log^3k)$ such that Algorithm~\ref{drad_algo} computes a solution to Problem~\ref{drad_prob} with $\Omega(\sqrt{\log k})$ times the cost of an optimal solution.
\end{corollary}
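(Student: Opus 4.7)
The plan is to apply the Johnson--Lindenstrauss lemma to the instance $X\subset\{0,1\}^{k+\log_2 k}$ constructed in Proposition~\ref{lower_dvar_diam_prop} (the same instance that drives Corollary~\ref{lower_dvar_drad_lpcor} for $p=2$). Since $|X|=k^2$, there exists a map $\phi:X\to\R^{d'}$ with
\[
 d' \;=\; O\!\left(\frac{\log|X|}{\varepsilon^{2}}\right) \;=\; O\!\left(\frac{\log k}{\varepsilon^{2}}\right),
\]
such that $(1-\varepsilon)\|x-y\|_2 \le \|\phi(x)-\phi(y)\|_2 \le (1+\varepsilon)\|x-y\|_2$ for every $x,y\in X$. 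Choosing $\varepsilon=\Theta(1/\log k)$ yields $d'=O(\log^{3}k)$, and the remaining task is to show that Algorithm~\ref{drad_algo} on input $\phi(X)$ still produces a $k$-clustering of discrete radius $\Omega(\sqrt{\log k})\cdot\opt_k$.

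\textbf{Preservation of merges.} Since $X\subset\{0,1\}^{k+\log_2 k}$, every squared pairwise $\ell_2$-distance in $X$ is an integer in $\{0,1,\ldots,2+\log_2 k\}$, so every pairwise distance lies in the finite set $\{\sqrt{j}\mid 0\le j\le 2+\log_2 k\}$. The minimum gap between two distinct values in this set is $\sqrt{j+1}-\sqrt{j}\ge\frac{1}{2\sqrt{2+\log_2 k}}=\Omega(1/\sqrt{\log k})$, while the maximum value in the set is $O(\sqrt{\log k})$. Hence for $\varepsilon=c/\log k$ with $c$ a sufficiently small constant, the additive perturbation $\varepsilon\,\|x-y\|_2$ introduced by $\phi$ is strictly below half the minimum gap, so $\phi$ preserves the strict ordering of all pairwise distances. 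Because the discrete radius of any subset $C\subseteq X$ is realized as the distance from its discrete center to some farthest point of $C$, the ordering of $\drad(A\cup B)$ over all candidate merges $(A,B)$ is preserved as well.

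\textbf{Conclusion.} Consequently, every merge selected by Algorithm~\ref{drad_algo} on input $\phi(X)$ corresponds to a merge of strictly minimum cost on $X$ too, and so the entire run on $\phi(X)$ is one of the permissible runs on $X$ used in Corollary~\ref{lower_dvar_drad_lpcor}. The resulting $k$-clustering of $\phi(X)$ therefore has discrete radius at least $(1-\varepsilon)\sqrt{\log_2 k}$, while the optimal $k$-clustering of $\phi(X)$ has discrete radius at most $(1+\varepsilon)\sqrt{2}$. The ratio is $\Omega(\sqrt{\log k})$, completing the construction in dimension $d'=O(\log^{3}k)$.

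\textbf{Expected obstacle.} The delicate point is handling ties. The lower-bound argument in Proposition~\ref{lower_dvar_diam_prop} allows adversarial tie-breaking among merges of equal cost, whereas the JL map will typically turn these ties into strict (and unpredictable) orderings on $\phi(X)$. What saves the argument is that in $X$ every minimum-cost merge at every stage preserves the ``same-$e_i$'' structure that drives the lower bound; hence any way the embedding happens to linearize a tie class still induces a permissible merge sequence on $X$, so the lower bound carries over. Verifying this carefully for each stage of the inductive argument of Proposition~\ref{lower_dvar_diam_prop} is the main thing to check.
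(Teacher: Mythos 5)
Your proof takes exactly the same route as the paper, which disposes of this corollary in a single sentence: apply a Johnson--Lindenstrauss embedding with distortion $\varepsilon=\Theta(1/\log k)$ so that $d=O(\log k/\varepsilon^2)=O(\log^3k)$ and the embedding does not alter the behavior of the agglomerative algorithm. Your write-up is in fact more careful than the paper's: the gap-versus-distortion calculation showing that strict orderings of discrete radii are preserved, and the observation that the adversarial tie-breaking in Proposition~\ref{lower_dvar_diam_prop} must be reconciled with the arbitrary linearization of ties produced by the embedding, are genuine points that the paper leaves entirely implicit.
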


Corollary~\ref{lower_dvar_drad_l2cor} follows by embedding the instance from Corollary~\ref{lower_dvar_drad_lpcor} into the $O(\log^3k)$-dimensional Euclidean space without altering the behavior of the agglomerative algorithm or the lower bound of $\Omega(\sqrt{\log k})$ (Johnson-Lindenstrauss embedding \cite{JohLin84}).
For this embedded instance, the bound given in Section~\ref{drad_sec} states an upper bound of $20d+2\log(k)+2=O(\log^3k)$ times the cost of an optimal solution.
Hence, in case of the discrete $k$-center clustering using the $\ell_2$-metric, the upper bound from our analysis almost matches the lower bound.

Furthermore, this implies that the approximation factor of Algorithm~\ref{drad_algo} cannot be simultaneously independent of $d$ and $\log k$.
More precisely, the approximation factor cannot be sublinear in $\sqrt[6]{d}$ and in $\sqrt{\log k}$.

\section{Open problems}
The main open problems our work raises are:
\begin{itemize}
\item Can the doubly exponential dependence on $d$ in Theorem~\ref{diam_result} be improved?
\item Are the different dependencies on $d$ in the approximation factors for the discrete $k$-center problem, the $k$-center problem, and the diameter $k$-clustering problem due to the limitations of our analysis or are they inherent to these problems?
\item Can our results be extended to more general distance measures?
\item Can the lower bounds for $\ell_p$-metrics with $1<p<\infty$ be improved to $\Omega(\log k)$, matching the lower bound from \cite{dasgupta} for all $\ell_p$-norms?
\end{itemize}

\bibliographystyle{plain}
\bibliography{references}

\end{document}